\documentclass[a4paper,UKenglish,cleveref, autoref,thm-restate]{lipics-v2021}
\pdfoutput=1

\title{Mean-Payoff-Parity and Lifting Strategies from MDPs to 2-Player Stochastic Games}
\titlerunning{Mean-Payoff Parity and Lifting Strategies}

\author{Mohan Dantam}{Department of Computer Science, University of Oxford, UK}{}{}{}
\author{Richard Mayr}{School of Informatics, University of Edinburgh, UK}{}{}{}

\authorrunning{M.~Dantam and R.~Mayr}
\Copyright{Mohan Dantam and Richard Mayr}

\relatedversiondetails{Full version of a paper presented at CONCUR 2026.}{}

\ccsdesc{Computing methodologies~Stochastic games}

\keywords{MDPs, Stochastic Games, Parity, Mean-payoff, Strategy complexity}

\usepackage[colorinlistoftodos,prependcaption,textsize=tiny,obeyDraft]{todonotes}

\usepackage{amsmath,amssymb,amsthm}
\usepackage{mathtools}
\usepackage{enumitem}
\setlist[itemize]{parsep=2pt, topsep=2pt, partopsep=2pt}
\setlist[description]{parsep=2pt, topsep=2pt, partopsep=2pt}
\usepackage{xcolor,pgf,tikz,pgfplots}
\usepackage{comment}
\usepackage{hyperref}
\hypersetup{colorlinks,linkcolor=blue,citecolor=blue,urlcolor=blue}
\usepackage{cleveref}
\usepackage{array}
\usepackage{makecell}
\usepackage[ruled,vlined]{algorithm2e}
\usepackage{lmodern}
\usepackage{diagbox}
\usepackage{standalone}

\newcommand{\ie}{\emph{i.e.},\ }

\DeclareMathOperator{\attr}{\mathtt{Attr}}

\newcommand{\Eparity}{\mathtt{EPAR}}
\newcommand{\Oparity}{\mathtt{OPAR}}

\newcommand{\MP}{\operatorname{\mathtt{MP}}}
\newcommand{\mppar}{\MP > 0\, \cap \, \Eparity}
\newcommand{\energymove}[1]{\stackrel{#1}{\movesto}}

\newcommand{\tendsto}{\rightarrow}

\newcommand{\Q}{\mathbb{Q}}

\DeclarePairedDelimiter{\set}{\{}{\}}
\DeclarePairedDelimiter{\ceil}{\lceil}{\rceil}

\newcommand{\numevencolours}{\#\lrc{\mathrm{distinct\,even\,colors}}}

\newcommand{\+}[1]{\mathbb{#1}}
\newcommand{\?}[1]{\mathcal{#1}}

\newcommand{\N}{\+{N}}
\newcommand{\Z}{\+{Z}}
\newcommand{\R}{\+{R}}
\newcommand{\x}{\times}
\newcommand{\Ocompl}{\?{O}}
\newcommand{\Thetacompl}{{\Theta}}

\usepackage{wasysym} %
\usepackage{xparse}
\usepackage{xifthen}
\usepackage{mleftright}
\newcommand{\rsymbol}{\ocircle}
\newcommand{\zsymbol}{\Box}
\newcommand{\osymbol}{\Diamond}
\newcommand{\xsymbol}{\odot}
\newcommand{\zstates}{\states_\zsymbol}
\newcommand{\rstates}{\states_\rsymbol}
\newcommand{\ostates}{\states_\osymbol}
\newcommand{\xstates}{\states_\xsymbol}
\newcommand{\reachset}{T}

\newcommand{\abs}[1]{\lvert#1\rvert}
\newcommand{\card}[1]{\abs{#1}}

\newcommand{\size}[1]{\norm{#1}}

\newcommand{\norm}[2][]{\lVert#2\rVert_{#1}}

\newcommand{\eqby}[2][=]{\stackrel{\text{{\tiny{#2}}}}{#1}}
\newcommand{\eqdef}{\eqby{def}}

\usepackage{bm}

\newcommand{\eps}{\varepsilon}

\newcommand{\problemx}[3]{
\par\noindent\underline{\sc#1}\par\nobreak\vskip.2\baselineskip
\begingroup\clubpenalty10000\widowpenalty10000
\setbox0\hbox{\bf INPUT:\ }\setbox1\hbox{\bf QUESTION:\ }
\dimen0=\wd0\ifnum\wd1>\dimen0\dimen0=\wd1\fi
\vskip-\parskip\noindent
\hbox to\dimen0{\box0\hfil}\hangindent\dimen0\hangafter1\ignorespaces#2\par
\vskip-\parskip\noindent
\hbox to\dimen0{\box1\hfil}\hangindent\dimen0\hangafter1\ignorespaces#3\par
\endgroup}

\NewDocumentCommand{\Prob}{O{} O{} m}{\ifthenelse{\isempty{#3}}{\mathcal{P}^{#1}_{#2}}{\mathcal{P}^{#1}_{#2}\lrc{#3}}}
\NewDocumentCommand{\expectation}{O{} O{} m}{{\mathcal{E}^{#1}_{#2}\lrc{#3}}}
\NewDocumentCommand{\valueof}{O{} O{} m}{{\mathtt{val}^{#1}_{#2}\lrc{#3}}}
\NewDocumentCommand{\limval}{O{} O{} m}{{\mathtt{Lval}^{#1}_{#2}\lrc{#3}}}
\NewDocumentCommand{\ST}{O{} O{}}{\ifthenelse{\isempty{#2}}{\ifthenelse{\isempty{#1}}{\mathtt{ST}}{\mathtt{ST}\lrc{#1}}}{\mathtt{ST}\lrc{#1,#2}}}
\NewDocumentCommand{\winset}{O{} O{}}{\ifthenelse{\isempty{#2}}{\ifthenelse{\isempty{#1}}{\mathtt{Win}}{\mathtt{Win}_{#1}}}{\ifthenelse{\isempty{#1}}{\mathtt{Win}\lrc{#2}}{\mathtt{Win}_{#1}\lrc{#2}}}}
\NewDocumentCommand{\AS}{O{} O{} m}{{\mathtt{AS}^{#1}_{#2}\lrc{#3}}}

\newcommand{\dist}{\mathcal{D}}
\newcommand{\supp}{{\sf supp}}

\newcommand{\always}{{\sf G}}
\newcommand{\eventually}{{\sf F}}

\newcommand{\obj}{\mathtt{O}}

\newcommand{\hide}[1]{}

\newcommand{\lrb}[1]{\mleft[#1\mright]}
\newcommand{\lrc}[1]{\mleft(#1\mright)}
\newcommand{\lrd}[1]{\{#1\}}

\newcommand{\ignore}[1]{}

\newcommand{\nat}{\mathbb N}

\newcommand{\setcomp}[2]{\lrd{{#1} \mid {#2}}}
\newcommand{\given}{{\,\mid\,}}
\newcommand{\symdiff}{\Delta}
\newcommand{\tuple}[1]{\lrc{#1}}
\newcommand{\odd}{\text{Odd}}
\newcommand{\even}{\text{Even}}

\newcommand{\game}{{\mathcal G}}
\newcommand{\mdp}{{\mathcal M}}
\newcommand{\mc}{{\mathcal A}}

\newcommand{\gametuple}{\tuple{\states,(\zstates,\ostates,\rstates),\transition,\probp}}

\newcommand{\states}{S}

\newcommand{\s}{s}
\newcommand{\qstate}{q}

\newcommand{\transition}{{E}}

\newcommand{\movesto}{{\rightarrow}}

\newcommand{\probp}{P}

\newcommand{\complementof}[1]{\overline{#1}}

\newcommand{\play}{\rho}

\newcommand{\playsof}[1]{{\it Plays}\lrc{#1}}
\newcommand{\partialplay}{\rho}

\newcommand{\zstrat}{\sigma}
\newcommand{\ostrat}{\pi}
\newcommand{\xstrat}{\tau}
\newcommand{\optzstrat}{\zstrat^*}
\newcommand{\optostrat}{\ostrat^*}

\newcommand{\zallstrats}[1]{\zstratset^{{#1}}}
\newcommand{\oallstrats}[1]{\ostratset^{{#1}}}
\newcommand{\zfinstrats}[1]{\zstratset^{{#1}}_{\finite}}
\newcommand{\ofinstrats}[1]{\ostratset^{{#1}}_{\finite}}
\newcommand{\finite}{f}
\newcommand{\zstratset}{\Sigma}
\newcommand{\ostratset}{\Pi}
\newcommand{\px}{\xsymbol}
\newcommand{\pz}{\zsymbol}
\newcommand{\po}{\osymbol}

\newcommand{\memory}{{\sf M}}
\newcommand{\updatefun}{upd}
\newcommand{\memconf}{{\sf m}}
\newcommand{\memconfset}{{\sf M}}
\newcommand{\memsuc}{{\sf nxt}}

\newcommand{\memup}{{\sf \updatefun}}
\newcommand{\memstrattuple}{\tuple{\memory, \initmem, \memup,\memsuc}}
\newcommand{\initmem}{\memconf_0}

\newcommand{\om}{\omega}

\newcommand{\probm}{{\mathcal P}}

\newcommand{\coloring}{{\mathit{C}ol}}

\mathchardef\mhyphen="2D %

\newcommand{\F}{{\mathcal F}}

\DeclareMathOperator{\bits}{bits}

\newcommand{\successors}[1]{\mathsf{Succ}({#1})}

\usetikzlibrary{arrows,decorations,intersections,pgfplots.fillbetween}
\usetikzlibrary{arrows.meta,automata,shapes,decorations,positioning,trees,calc}
\usetikzlibrary{shapes.multipart,shapes.misc}
\usepgflibrary{decorations.pathreplacing}
\usetikzlibrary{arrows,topaths}
\usetikzlibrary{automata,shapes.arrows, backgrounds, fadings,intersections, positioning}
\usetikzlibrary{shadows}

\usetikzlibrary{decorations.pathmorphing}
\usetikzlibrary{decorations.shapes}
\usetikzlibrary{shapes}
\usetikzlibrary{backgrounds}
\usetikzlibrary{fit}
\usetikzlibrary{arrows}
\usetikzlibrary{calc}
\usetikzlibrary{automata}
\usetikzlibrary{shapes.geometric}

\pgfplotsset{compat=1.16}

\tikzset{every picture/.style={thick,>=angle 60, node distance=2cm and 2cm}}
\tikzset{Grand/.style={draw,circle,minimum size=11*1.5,inner sep=0}}
\tikzset{Gmax/.style={draw,rectangle,minimum size=9*1.5,inner sep=0}}
\tikzset{Gmin/.style={draw,diamond,minimum size=9*1.5,inner sep=0}}
\tikzset{gamebad/.style={fill=red}}

\tikzset{every loop/.style={looseness=20}}
\tikzset{>=stealth, shorten >=1pt, shorten <=1pt}

\tikzset{state/.style={draw,circle,inner sep=2,minimum size=17pt}}

\tikzset{square/.style={regular polygon,regular polygon sides=4}}
\tikzset{acta/.style={
fill=blue!40!white,square,inner sep=2,
}}
\tikzset{actb/.style={
fill=red!40!white,diamond,inner sep=1,
}}

\tikzset{elliptic state/.style={draw,ellipse, minimum width=30, minimum height=15}}

\nolinenumbers

\begin{document}

\maketitle

\begin{abstract}
  We consider the strategy complexity (i.e., memory and randomization)
  of optimal strategies in turn-based 2-player zero-sum stochastic games.
  Results in \cite{GimbertKelmendi:IJGT20203,MSTW2021} show how to lift
  optimal memoryless strategies for
  shift-invariant inverse-submixing objectives
  from MDPs to 2-player stochastic games with
  an exponential increase in the number
  of memory modes.
  We show the corresponding lower bound, i.e.,
  the extra exponential memory is required in general, even for randomized strategies.

  Moreover, we solve the strategy complexity of the well-studied
  mean-payoff-parity objective ($\mppar$) in 2-player stochastic games.
  This objective is also shift-invariant inverse-submixing,
  but easier than the worst case for this class.
  In MDPs, Maximizer has optimal \emph{memoryless randomized} strategies,
  while optimal \emph{deterministic} strategies require
  \emph{exponential} memory.
  However, in stochastic games, optimal \emph{randomized} strategies
  require, at least and at most, \emph{linear memory} (equal
  to the number of even colors).
  
  Finally, we show that the different construction in
  \cite{GZ2009,Bouyer:LMCS2023}
  for lifting memoryless (resp.\ finite-memory)
  \emph{deterministic} strategies from MDPs
  (resp.\ 1-player games) to 2-player games cannot be generalized even to
  memoryless \emph{randomized} strategies.
  We construct a shift-invariant objective where Max and Min each have
optimal memoryless randomized strategies in all MDPs,
but optimal (randomized) Max strategies still require \emph{infinite} memory
in deterministic 2-player games.
\end{abstract}


\section{Introduction}\label{sec:intro}

\subparagraph{Background on strategy complexity.}
We consider 2-player turn-based perfect information stochastic games
(here denoted as SGs) played on finite graphs.
They are also called
\emph{$2\frac{1}{2}$-player games} \cite{CJH2004,CJH2003} or
\emph{competitive Markov decision processes} \cite{Filar_Vrieze:book}. 
Introduced by Shapley \cite{shapley1953stochastic} in 1953, they
have since played a central role in the solution of many problems, e.g.,
synthesis of reactive systems
\cite{ramadge1987supervisory,pnueli1989synthesis}
and formal specification and verification
\cite{de2001interface,dill1989trace,alur2002alternating}.
Every state either belongs to one of the players (Max or Min)
or is a random state. In each round of the game the player who owns the current
state chooses the successor state along the game graph and
in random states the successor is chosen according to a predefined distribution.
Given a start state and strategies of Max and Min, this yields a
distribution over induced infinite plays.
Objectives are defined via a measurable reward function that assigns rewards
to plays (possibly just an indicator function of a measurable set),
and the players try to maximize (resp.\ minimize) the expected reward.

We consider the \emph{strategy complexity}, i.e., the required memory and randomization
of optimal strategies in SGs for a given objective.
In particular we study the question how the strategy complexity in SGs
compares to the strategy complexity in Markov decision processes (MDPs).

\subparagraph{Lifting strategies from MDPs to SGs for shift-invariant inverse-submixing objectives.}
\emph{Lifting} strategies from MDPs to SGs means adapting them to work even
in the more general context of games with a strategic opponent
\cite{GimbertKelmendi:IJGT20203,MSTW2021}.
Results about lifting typically address the strategy complexity, e.g.,
if optimal Max strategies for a certain objective require only memory of
size $x$ in MDPs then optimal Max strategies in SGs require only memory
of some size $y > x$, for a certain $y$, possibly depending on the objective and on the
size/structure of the SG.

For shift-invariant inverse-submixing objectives,
optimal and $\eps$-optimal finite-memory (randomized or deterministic)
Max strategies
can be lifted from MDPs to SGs with a \emph{doubly exponential increase}
in the number of memory modes (using $n$ extra bits of public memory in
binary-branching games where $n$ states are Min-controlled plus an additional
$2^n \cdot p$ bits where $p$ bits are required for optimal strategies in MDPs)
\cite[Theorem 1.2]{GimbertKelmendi:IJGT20203}.
(Recall that $x$ bits of memory means $2^x$ memory modes.)
A restricted subcase of this was observed in \cite[Theorem 6]{MSTW2021}.

{\bf Contribution 1:} When memoryless randomized strategies suffice for Max in
MDPs, the lifting above incurs an exponential increase in the number of memory
modes ($p=0$, $n$ extra bits). In \Cref{sec:lifting1}, we show the corresponding exponential lower bound
for the step from MDPs to games (even deterministic games) in this case.
This partly solves the open question in \cite[end of Sec.~6.1]{GimbertKelmendi:IJGT20203}.
We construct a class of examples for the multi-dimensional
$\MP > \vec{0}$ objective (which is shift-invariant inverse-submixing).
Though Max has optimal memoryless randomized strategies in all MDPs,
optimal Max strategies in deterministic 2-player games
require an exponential number of memory modes,
even if the strategies may use randomization and
even if Max's memory is private (hidden from the Min player).

\subparagraph{Strategy complexity of mean-payoff-parity.}
The mean-payoff-parity objective ($\mppar$) is to attain a strictly
positive mean payoff while also satisfying a parity objective.
This combines quantitative performance with a qualitative correctness
condition (encoded by parity), and it has frequently been studied, e.g.,
\cite{daviaud2018pseudo,CD2011,CHJ2005,chatterjee2011games,CDGQ:2014}.
$\mppar$ is shift-invariant inverse-submixing,
and hence suitable for the lifting of Max strategies from MDPs to SGs
as in \cite[Theorem 1.2]{GimbertKelmendi:IJGT20203}.
With deterministic strategies, Max already requires an exponential
number of memory modes even in MDPs \cite[Fig.~1]{Gimbert2011ComputingOS},
and hence the extra memory used in the lifting construction
($n + 2^n \cdot p$ extra bits of public memory, where $p$ grows polynomially
in the size of the game)
yields a doubly exponential upper bound on the number of memory
modes required in SGs.

{\bf Contribution 2:} We show that the situation is different for randomized strategies.
In \Cref{sec:mp_parity} we show that Max has optimal \emph{memoryless randomized} strategies
in MDPs. However, the lifting construction of \cite[Theorem 1.2]{GimbertKelmendi:IJGT20203}
would still yield exponentially many memory modes for Max strategies in SGs.
We show that optimal Max strategies for $\mppar$ in SGs (and also in
deterministic games)
require, at least and at most, just \emph{linearly} many memory modes,
equal to the number of even colors.
I.e., $\mppar$ is easier than the exponential worst case for the
lifting construction demonstrated in our lower bound in \Cref{sec:lifting1}.
This is an interesting example where randomization in strategies
drastically reduces the amount of memory required, but
does not eliminate the need for memory entirely.

Note however that none of this applies to the \emph{different} mean-payoff-parity
objective with a \emph{non-strict} inequality ${\MP \ge 0\, \cap \, \Eparity}$.
Optimal strategies for this objective require infinite memory even in
deterministic 1-player games (and thus also in MDPs/SGs) \cite{CHJ2005}, 
due to a pathological case where infrequent negative rewards at increasing intervals
(e.g., $-1$ at times $2^n$ for all $n \in \N$ and rewards zero otherwise)
can still yield a mean-payoff of zero.

\subparagraph{Lifting deterministic strategies from MDPs to SGs.}
A different lifting construction with orthogonal preconditions
was described in \cite[Theorem 9]{GZ2009}.
Given an objective, if the players have optimal memoryless
\emph{deterministic} strategies in all 
maximizing (resp.\ minimizing) MDPs, then they also have
optimal memoryless deterministic strategies in all SGs.
Under mild assumptions, this can be generalized to
finite-memory deterministic strategies
\cite[Theorem 4.1]{Bouyer:LMCS2023} (in stochastic or non-stochastic arenas).

{\bf Contribution 3:} We show that
such results do \emph{not} generalize to \emph{randomized} strategies,
even under very strong assumptions.
In \Cref{sec:other-lifting} we present stronger counterexamples than
the ones given in \cite{Bouyer:LMCS2023,Vandenhove:PhD}.
E.g., we construct a shift-invariant objective where Max and Min each have
optimal memoryless randomized strategies in all MDPs,
but optimal (randomized) Max 
strategies still require \emph{infinite} memory
in deterministic 2-player games.


\section{Preliminaries}\label{sec:prelim}

Let $\Z$ (resp.\ $\Z_{+},\N, \Q$) denote the set of integers (resp.\ positive, non-negative integers, rationals).
The `size' of an integer $n$ or a rational $q = \frac{m}{n}$
(where $\gcd\lrc{m,n}=1$) is defined in a natural way assuming binary representation.
Let $\size{n} \eqdef \lceil\log_2\lrc{\abs{n}+1}\rceil + 1$ and $\size{q} \eqdef \size{m} + \size{n} + 1$.
A \textit{probability distribution} over a countable set $S$ is a function
\mbox{$f: S \to [0,1]$} with \mbox{$\sum_{s \in S} f(s) = 1$}.
Let $\supp(f) \eqdef$ \mbox{$\setcomp{s}{f(s)>0}$} denote the support of
$f$ and $\dist(S)$ is the set of all probability distributions over $S$.
If $S$ is finite and $f(S) \subseteq \Q$, we define the \emph{bit size} of $f$
to be $\bits(f) \eqdef \sum_{\s \in S}\lrc{\size{f(\s)}}$.
Likewise for
probabilistic functions with rational probabilities $p: S \to 
\dist(T)$ with finite $S$ and $T$: $\bits(p) \eqdef 
\sum_{\s \in S}\bits(p(\s))$.

\ignore{
  Given an alphabet $\Sigma$,
  let $\Sigma^{\om}$ and $\Sigma^{*}$ ($\Sigma^+$) denote the set of infinite
  and finite (non-empty) sequences over $\Sigma$, respectively.
  Elements of $\Sigma^{\om}$ or $\Sigma^*$ are called words.
}

\subparagraph{Games, MDPs and Markov chains.}
We consider finite-state 2-player turn-based
perfect-information stochastic games (here abbreviated as SGs)
$\game = \gametuple$
where the finite set of states $\states$ is partitioned into the
states $\zstates$ of
the player $\pz$ (\emph{Maximizer}, for short Max),
states $\ostates$ of player $\po$ (\emph{Minimizer}, for short Min),
and chance vertices (aka random states) $\rstates$.
If $\rstates=\emptyset$ then it is called a deterministic 2-player game.
Let $\transition \subseteq \states \x \states$ be the transition relation.
We write $\s \movesto \s'$ if $\tuple{\s,\s'} \in \transition$ and
assume that
$\successors{\s} \eqdef \{\s' \mid \s\transition{}\s'\} \neq \emptyset$
for every state $\s$.
The \emph{probability function}~$\probp$
assigns each random state $\s \in \rstates$ a distribution over
its successor states, i.e., $\probp(\s) \in \dist(\successors{\s})$.
We extend the domain of $\probp$ to
$\states^*\rstates$ by $\probp(\partialplay\s) \eqdef \probp(\s)$
for all $\partialplay\s \in \states^+\rstates$.
An \emph{MDP} is a game where one of the two players does not control any
states. An MDP is \emph{maximizing} (resp.\ \emph{minimizing})
iff $\ostates = \emptyset$ (resp.\ $\zstates = \emptyset$).
A \emph{Markov chain} is a game with only random states,
i.e., $\zstates = \ostates = \emptyset$.

\subparagraph{Strategies.}
A \textit{play} is an infinite sequence $\s_0\s_1 \ldots \in \states^{\omega}$
such that $\s_i \movesto \s_{i+1}$ for all $i \ge 0$.
A \textit{path} is a finite prefix of a play.
Let $\playsof{\game} \eqdef \set*{\play = \lrc{\s_i}_{i \in \N} \, |
\s_i \movesto \s_{i+1}}$
denote the set of all possible plays.
A \textit{strategy} of the player $\pz$ ($\po$) is a function
$\zstrat\! : \states^* \zstates \to \dist(\states)$
($\ostrat\! : \states^* \ostates \to \dist(\states)$)
that assigns to every path
$w\s \in \states^* \zstates$ ($\in \states^* \ostates$)
a probability distribution over the successors of $\s$.
If these distributions are always Dirac then the strategy is called
\emph{deterministic} (aka pure), otherwise it is called \emph{randomized}.
The set of all strategies of player $\pz$ and $\po$ in
$\game$ is denoted by $\zallstrats{\game}$ and $\oallstrats{\game}$,
respectively.
A play/path $\s_0\s_1 \ldots$ is compatible with a pair of strategies
$(\zstrat,\ostrat)$ if $\s_{i+1} \in \supp(\zstrat(\s_0 \ldots \s_i))$
whenever $\s_i \in \zstates$ and
$\s_{i+1} \in \supp(\ostrat(\s_0 \ldots \s_i))$ whenever $\s_i \in \ostates$.

Memory-based strategies can be defined via \emph{Mealy Machines}~\cite{Mealy55}
as follows.
A strategy for a player $\px \in \{\pz,\po\}$ is a tuple
$\xstrat = \memstrattuple$ where $\memconfset$ is the set of memory modes,
$\initmem$ is the initial memory mode, the next/successor function
$\memsuc\! : \memconfset \x \xstates
\to \dist(\states)$ chooses a (distribution over) successor states based on the
current memory mode and state, and the update function $\memup\! : \memconfset 
\x \transition \to \dist(\memconfset)$ updates the memory mode upon observing
a transition.
Let $\xstrat[\memconf]$ denote the strategy $\xstrat$ starting
in memory mode $\memconf$ instead of $\initmem$.
Finite-memory Max (resp.~Min) strategies use a \emph{finite} set $\memconfset$
of memory modes and are denoted by $\zfinstrats{\game} (\text{resp. }\ofinstrats{\game})$.
Randomized memory updates are strictly more expressive than deterministic memory
updates for finite-memory strategies \cite{DBLP:conf/concur/MainR22}. 
Finite-memory strategies with deterministic/randomized next function and
deterministic/randomized update function are called FDD, FRD, FDR, FRR,
respectively, i.e., the first D/R refers to randomization
in the next function and second D/R refers to randomization in the memory update
function \cite[Figure 1.1]{DBLP:conf/concur/MainR22}
(adapted to our setting since we do not consider randomization over the
initial memory mode).
Note that, under standard definitions of behavioral strategies in game
theory, any memory in memory-based strategies is \emph{private by default},
since internal memory modes of the strategy of one player are not part of
the history of the play, and thus not visible to the other player(s).
Only in the special case of \emph{deterministic} strategies
there is no difference between public memory
(observable by the other player) and private memory,
since the other player can then accurately infer the current memory mode from the
known deterministic strategy and the observed history.
However, it can make a big difference for randomized strategies, where fixing
a private-memory strategy of one player yields only a partially observable MDP;
see, e.g., the Big Match, where the difference is crucial \cite{HIN:MOR2022}.

Strategies with memory $|\memconfset|=1$ are called \emph{memoryless}.
Memoryless deterministic (resp.\ randomized) strategies are called MD
(resp.\ MR).

The memory-size of a finite-memory strategy $\xstrat = \memstrattuple$
is defined as $\size{\xstrat} \eqdef \card{\memconfset}$. 
The total bit size of $\xstrat$ is defined as the size
of the rational probabilities used in $\xstrat$, i.e.,
$\bits(\xstrat) \eqdef \bits(\memsuc) + \bits(\memup)$.

\subparagraph{Measure.} A game $\game$ with initial state $\s_0$ and strategies
$(\zstrat,\ostrat)$ yields a probability space
$(\s_0\states^{\om},\F_{\s_0}, \Prob[\game][\zstrat,\ostrat,\s_0]{})$
where $\F_{\s_0}$ is the $\sigma$-algebra generated by the cylinder sets
$\s_0\s_1\ldots\s_n\states^{\om}$ for $n \ge 0$.
The probability measure $\Prob[\game][\zstrat,\ostrat,\s_0]{}$ is
first defined on the cylinder sets.
For $\partialplay = \s_0\ldots \s_n$, let
$\Prob[\game][\zstrat,\ostrat,\s_0]{\partialplay} \eqdef 0$ if
$\partialplay$ is not compatible with $\zstrat,\ostrat$ and
otherwise
$\Prob[\game][\zstrat,\ostrat,\s_0]{\partialplay\states^{\om}} \eqdef
\prod_{i=0}^{n-1} \xstrat(\s_0\ldots\s_i)(\s_{i+1}) $ where $\xstrat$
is $\zstrat$ or $\ostrat$ or $\probp$ depending on
whether $\s_i \in \zstates$ or $\ostates$ or $\rstates$, respectively.
By Carath\'eodory's extension
theorem~\cite{billingsley2008probability}, this defines a unique probability
measure on the $\sigma$-algebra.

\subparagraph{Objectives and payoff functions.}
General objectives are defined by real-valued measurable functions
$v\!: \s_0\states^{\om} \to \mathbb{R}$, and
we write $\expectation[][]{\cdot}$ for the expectation w.r.t.~$\probm$ and $v$.
For event-based objectives, $v$ is just the indicator function of
a measurable set of plays $\obj \subseteq \states^{\om}$, i.e., 
we consider the probability that plays belong to $\obj$.

An objective $v$ is called \emph{shift-invariant}
iff for all finite paths $\partialplay$ and plays $\play' \in \states^{\omega}$,
we have $v(\partialplay \play') = v(\play')$.
It has also been called \emph{prefix-independent}~\cite[Section~4]{DBLP:conf/concur/GimbertZ05}
or \emph{uniform}~\cite[Definition~3]{DBLP:journals/tcs/ColcombetN06} in prior
literature, but we follow the notation of~\cite{GimbertKelmendi:IJGT20203}
and use shift-invariant to avoid any ambiguities.
Objective $v$ is called \emph{submixing} iff for all sequences of finite
non-empty words $u_0$, $w_0$, $u_1$, $w_1 \ldots$ we have
$v(u_0w_0u_1w_1 \ldots) \le
\max(v(u_0u_1\ldots),v(w_0w_1 \ldots))$,
and \emph{inverse-submixing} iff
$v(u_0w_0u_1w_1 \ldots) \ge
\min(v(u_0u_1\ldots),v(w_0w_1 \ldots))$ \cite{GimbertKelmendi:IJGT20203}.

We assume standard notions about LTL, reachability and parity objectives
\cite{CGP:book}; cf.~\Cref{app:prelim}. $\Eparity$ denotes max-even parity.

\textit{Reward based objectives.}
Let $r\!: E \to \set{-R,\dots,0,\dots,R}^d$ be a bounded function that
assigns (possibly multi-dimensional) finite rewards to transitions.
If $\s \movesto \s'$ and $r((\s,\s')) = c$, we write $\s \energymove{c}
\s'$. Let $\play = \s_0 \energymove{c_0} \s_1 \energymove{c_1}
\ldots$ be a play.
We say that $\play$ satisfies
\textit{mean-payoff} $\MP > \vec{c}$ for some constant $\vec{c} \in \R^d$
iff $\lrc{\liminf_{n \tendsto \infty}\frac{1}{n}\sum_{i=0}^{n-1} c_i } > c$
(component-wise).
The \textit{mean-payoff-parity} objective is defined as $\mppar$.
Parity and mean payoff objectives are shift-invariant,
but reachability is not. Parity is submixing and inverse-submixing.

For finite-state SGs
there exist optimal MD Max strategies for
the reachability $\eventually\,\reachset$ objective \cite{CONDON1992203},
$\Eparity$ objective \cite{Zielonka:1998},
and $\MP >0$ objective for dimension one \cite[Prop.~7]{Brazdil2010}.

The size of a game $\game$ with an objective $v$ can be thought of as the
number of bits required to describe the states, transitions and probabilities
used by $\game$ along with the description of $v$.

\subparagraph{Determinacy.}
Given an objective $v$ and a game $\game$, state $\s$ \emph{has value}
(w.r.t.\ $v$) iff
$$\sup_{\zstrat \in \zallstrats{\game}}\inf_{\ostrat \in
\oallstrats{\game}} \expectation[\game][\zstrat,\ostrat,\s]{v} =
\inf_{\ostrat \in \oallstrats{\game}}\sup_{\zstrat \in
\zallstrats{\game}} \expectation[\game][\zstrat,\ostrat,\s]{v}.$$
If $\s$ has value then $\valueof[\game][v]{\s}$ denotes the value of
$\s$ defined by the above equality. A game with an objective is called
\textit{weakly determined} if every state has value.
Stochastic games with Borel objectives are weakly
determined~\cite{Maitra-Sudderth:2003,M1998}.
Our objectives above are Borel, hence any boolean combination of them is also
weakly determined \cite{CHJ2005}.
For $\eps > 0$ and state $\s$, a strategy
\begin{enumerate}
  \item $\zstrat \in \zallstrats{\game}$ is $\eps$-optimal
    (maximizing) iff $\expectation[\game][\zstrat,\ostrat,\s]{v} \ge
    \valueof[\game][v]{\s} - \eps$ for all $\ostrat \in \oallstrats{\game}$.
  \item $\ostrat \in \oallstrats{\game}$ is $\eps$-optimal
    (minimizing) iff $\expectation[\game][\zstrat,\ostrat,\s]{v} \le
    \valueof[\game][v]{\s} + \eps$ for all $\zstrat \in \zallstrats{\game}$.
\end{enumerate}
A $0$-optimal strategy is called \emph{optimal}.
An MD strategy is called \emph{uniformly} $\eps$-optimal
(resp.\ uniformly optimal)
if it is so from every start state.
Given an event-based objective $\obj$,
an optimal strategy for player $\pz$ from state $\s$
is \emph{almost surely} winning if $\valueof[\game][\obj]{\s}=1$.
By $\AS[\game][\pz]{\obj}$ we denote the set of states that have an almost
surely winning strategy for $\pz$ for objective $\obj$.
We drop subscripts and superscripts wherever possible if they are
clear from the context.

\section{Lifting Strategies from MDPs to 2-Player Stochastic Games for 
Shift-Invariant Inverse-Submixing Objectives}\label{sec:lifting1}

The \emph{finite memory transfer theorem} of Gimbert \& Kelmendi
\cite[Theorem 1.2]{GimbertKelmendi:IJGT20203} shows that finite-memory
Min (resp.\ Max) strategies
can be lifted from MDPs to 2-player stochastic games
if the payoff function is both shift-invariant and submixing
(resp.\ inverse-submixing).
This is a consequence of the following slightly stronger theorem.
(Adapted to our notation, because we consider objectives
from Max's point of view.)

\begin{theorem}[{\cite[Theorem 6.1]{GimbertKelmendi:IJGT20203}}]\label{thm:fmt}
  Let $f$ be a shift-invariant and inverse-submixing payoff function.
  If, for all $\eps > 0$, Max has an $\eps$-optimal strategy
  with finite memory in every finite-state MDP,
  then in every finite-state turn-based
  two-player stochastic game he has an
  $\eps$-subgame-perfect strategy that has finite memory.
  
  The statement also holds for $\eps = 0$, that is: if Max has a
  finite-memory optimal strategy in every game controlled by himself,
  then in every two-player game
  he has an optimal subgame-perfect strategy with finite memory.
\end{theorem}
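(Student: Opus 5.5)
We follow the strategy of Gimbert \& Kelmendi and argue by induction on the number $n$ of Min-controlled states that have at least two successors. We may assume binary branching; this is the setting in which the memory bound of $n$ extra bits plus $2^n\cdot p$ bits arises. In the base case $n=0$ the game is a maximizing MDP, so the hypothesis applies directly. Subgame-perfection comes for free from shift-invariance: an $\eps$-optimal finite-memory strategy in an MDP can be made $\eps$-optimal from every state and after every history. To see this, run a finite-memory strategy that is $\eps$-optimal from each state, and restart it at the current state whenever the value drops. Alternatively, note that with shift-invariant payoffs, values depend only on the current state.

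For the inductive step, fix a Min state $s$ with two successors $t_0,t_1$. Let $\game_0$ and $\game_1$ be the games in which the edge $s\movesto t_1$, respectively $s\movesto t_0$, is removed. Each has fewer branching Min states, so by induction Max has $\eps$-subgame-perfect finite-memory strategies $\zstrat_0$ and $\zstrat_1$ in them. Since Min can choose one branch forever, $\valueof[\game]{u}\le\min_i\valueof[\game_i]{u}$ for every state $u$. The key claim is the reverse inequality: \emph{some} choice of $i$ already realises the value. We prove it by contradiction using inverse-submixing.

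Suppose Min can push the value below both by switching between the branches. Then each play splits into segments played in "$\game_0$ mode" and segments played in "$\game_1$ mode". Max plays the following switching strategy $\zstrat$. He keeps one extra bit recording the last choice Min made at $s$. He runs $\zstrat_i$ while Min uses branch $i$. When Min switches, he suspends the current copy, preserving its memory, and resumes the other copy. This uses $2\cdot$ (previous memory) modes, which gives the recursive doubling in the bound. Along a play, the subsequence $u_0u_1\ldots$ of $\game_0$-segments is then a play consistent with $\zstrat_0$ in $\game_0$, modulo the resumption points. Likewise, $w_0w_1\ldots$ is a play consistent with $\zstrat_1$ in $\game_1$. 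Inverse-submixing gives $f(u_0w_0u_1w_1\ldots)\ge\min(f(u_0u_1\ldots),f(w_0w_1\ldots))$. If Min switches only finitely often, shift-invariance reduces the situation to a single $\zstrat_i$.

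The main obstacle is making this pathwise inequality into a statement about expectations. Each projected play is generated by a Min behaviour that depends on the other component. We therefore have to show that the induced distribution on $u_0u_1\ldots$ is realisable by some Min strategy in $\game_0$ against $\zstrat_0$, and similarly for $\game_1$. Even then, $\mathcal{E}(\min(X,Y))$ is not bounded by $\min(\mathcal{E}X,\mathcal{E}Y)$, so we do not bound expectations directly. Instead we work first with the qualitative case, where values are $0$/$1$ and $\min$ of indicators suffices. We then reduce the quantitative case to it, as Gimbert–Kelmendi do, using the fact that for shift-invariant $f$ the value process converges almost surely. Concretely, we use level sets $\{f\ge x\}$, which remain shift-invariant and inverse-submixing, and then integrate over $x$.

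For $\eps=0$ the same construction works with optimal strategies. We only need that optimal subgame-perfect finite-memory strategies exist in $\game_0$ and $\game_1$ by induction. The argument then never loses an $\eps$ at the resumption points, because shift-invariance makes the value at a resumed state independent of the history.
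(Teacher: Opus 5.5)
\paragraph{Verdict.}
This theorem is not proved in the paper; it is only cited and summarized (one copy of MDP-memory per MD Min strategy, plus Min's last choice at each Min state). Your architecture matches that summary. However, the step you yourself flag as the main obstacle is not resolved, and the fix you propose does not work.

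\paragraph{The level-set reduction fails.} The sets $\{f\ge x\}$ are indeed shift-invariant and inverse-submixing, but there are three problems. First, the games they define are not qualitative: their values typically lie strictly between $0$ and $1$. Second, the hypothesis gives finite-memory $\eps$-optimal strategies in MDPs for $f$, not for $\{f\ge x\}$. Third, even if every level-set game were solved, you would get a different strategy $\sigma^x$ for each threshold, whereas $\mathcal{E}(f)=\int_0^1\mathcal{P}(f\ge x)\,dx$ must be bounded below for one single strategy.

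\paragraph{The missing idea is $\eps$-subgame-perfection.} This is what makes the pathwise inequality usable, and it is the ingredient you treat as cosmetic. If $\sigma_i$ is $\eps$-subgame-perfect in $\game_i$, then after every history the conditional expected payoff of the $\game_i$-projection $\rho_i$ is at least $\mathrm{val}_{\game_i}$ of its current state minus $\eps$. This holds whatever Min does and whatever randomness occurs in the other component, so it also disposes of your realisability worry. By shift-invariance and L\'evy's 0-1 law, these conditional expectations converge almost surely to $f(\rho_i)$. This gives the almost-sure bound $f(\rho_i)\ge\limsup_n\mathrm{val}_{\game_i}(\cdot)-\eps$ along $\rho_i$. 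Such bounds are pathwise and expressed through state values, so inverse-submixing can be applied pathwise with no $\mathcal{E}\min$ versus $\min\mathcal{E}$ issue. They also handle plays that switch only finitely often; the last switch is not a stopping time, so you cannot simply invoke $\sigma_i$'s guarantee there. One must then still relate $\mathrm{val}_{\game_i}$ to $\mathrm{val}_{\game}$. Your `key claim', that a single branch $i^*$ realises the value at all states, is positional optimality of Min: a separate theorem of \cite{GimbertKelmendi:IJGT20203} (Theorem 1.1 applied to Min), not something your switching strategy establishes.

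\paragraph{Subgame-perfection is neither free nor preserved.} Yet it is your induction invariant: you use it at every resumption point, and the statement demands it. An $\eps$-optimal strategy can be arbitrarily bad after low-probability histories. `Restart whenever the value drops' needs a proof that restarts happen only finitely often almost surely, since a play glued from infinitely many restarted pieces inherits none of their guarantees. The fact that values depend only on the current state says nothing about a given strategy's continuations.

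More seriously, the plain merge that follows Min's last choice is not $\eps$-subgame-perfect. Take a deterministic game with:
\begin{itemize}
\item a Min state $s$ with successors $a$ and a sink $z$;
\item a Max state $a$ with successors $s$ and a sink $w$;
\item payoff $\liminf_n\frac1n\sum_{j<n}r(x_j)$ over the visited states $x_0x_1\ldots$, with $r(s)=r(a)=1$, $r(w)=0.9$, $r(z)=0$.
\end{itemize}
This payoff is shift-invariant and inverse-submixing, and MD strategies are optimal in MDPs. Here $\mathrm{val}_{\game}(a)=0.9$ and $\mathrm{val}_{\game}(s)=0$. In $\game_0$ (edge $s\to z$ removed) the cycle through $s$ and $a$ yields $1$, so every $\eps$-subgame-perfect $\sigma_0$ moves from $a$ to $s$ with probability at least $1-10\eps$. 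After the history $s\,a$ your strategy resumes $\sigma_0$, and Min answers $s\to z$. The continuation then guarantees at most $9\eps<\mathrm{val}_{\game}(a)-\eps$ for small $\eps$. Started at $a$ with initial bit $0$, the merge is not even $\eps$-optimal. So the initial component must be Min's optimal branch, and an additional mechanism, with its own correctness proof, is needed to restore subgame-perfection before the induction can continue.

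\paragraph{Minor point on memory.} Storing both suspended copies costs $2\,|M_0|\,|M_1|$ modes, not $2\cdot$(previous memory). So it is the number of bits, not of modes, that roughly doubles per level, and that is where the $2^n\cdot p$ term comes from.
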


Subgame-perfect strategies are a stronger notion than optimal strategies and
it suffices to view them as optimal strategies for our results.
The proof of \cite[Theorem 6.1]{GimbertKelmendi:IJGT20203}
also shows that deterministic (resp.\ randomized) Max strategies in MDPs are
lifted to deterministic (resp.\ randomized) Max strategies in the game.
While \cite{GimbertKelmendi:IJGT20203} only consider strategies with
deterministic memory updates, their construction can also lift
(from MDPs to games) strategies with randomized memory updates.
However, in \cite[Theorem 6.1]{GimbertKelmendi:IJGT20203},
the \emph{extra} memory bits used by the constructed Max
strategy in the game are always updated deterministically.
Thus this Max strategy always uses
\emph{public} memory, 
since the memory in the MDP strategies was trivially public
(by the absence of an opponent).

The size of the memory used by Max's strategy in the 2-player game
can be described as follows. For each MD strategy choice for Min,
Max fixes this strategy and computes the optimal strategy in this derived MDP
and correspondingly reserves the required memory space to reliably play this
strategy (think of this as corresponding to remembering how to play when Min
eventually fixes to this strategy). In addition to this, Max also remembers
the choice Min played last at each of its states. Assuming $k$ is the number
of Min-controlled states, $d$ the maximal out-degree of these states and $p$
to be the maximum number of memory bits needed to play in any of the derived
MDPs, then the total number of bits used by the described strategy would be of
the order $k\cdot \ceil{\log_2 d} + d^k \cdot p$.
Thus, in general, Max uses a \emph{doubly exponential} number of memory modes
in the size of the game, even if his good strategies in all MDPs use just polynomially many memory modes;
cf.~\cite[Sec.~6.1]{GimbertKelmendi:IJGT20203}.

An interesting subcase is to look at the construction when $p=0$, i.e.,
when optimal strategies for Max in MDPs are memoryless. In this case, the above
construction implies a memory bound of the order $k \cdot \ceil{\log_2 d}$
which is only exponential w.r.t.\ the game size. The end of~\cite[Sec.~6.1]{GimbertKelmendi:IJGT20203}
claims that this is unnecessary, since a different lifting result~\cite{GZ2009}
implies that memoryless strategies suffice, even in games. We show that this is true
only for deterministic strategies and does not generalize to strategies that are
memoryless randomized.
In particular, we show the matching exponential lower bound in \Cref{thm:trigger-lower}.
Even if Max has optimal memoryless randomized strategies in all MDPs,
in general he still needs exponential memory in the 2-player game, even if the memory is private.
This attempts to partly tighten the lower bound at
\cite[end of Sec.~6.1]{GimbertKelmendi:IJGT20203} by showing that some parts
of the construction are necessary when considering randomized strategies.
On the other hand, for some particular objectives, the construction in
\cite[Theorem 6.1]{GimbertKelmendi:IJGT20203} uses more memory than
necessary. In \Cref{sec:mp_parity} we show that for the $\mppar$ objective
Max requires, at least and at most, just a \emph{linear} number of
memory modes in SGs (and the lower bound holds even for deterministic games).

\begin{figure}[t]
   \vspace*{-17mm} 
  \scalebox{0.5}{
  \begin{tikzpicture}[->,>=stealth',shorten >=1pt,auto,node
    distance=2.0cm, font=\Large,
    max_node/.style={Gmax, minimum size=12mm},
    min_node/.style={Gmin, minimum size=16mm},
    edge_label/.style={below, sloped},
    edge_label2/.style={below, sloped}]
    
    \node[min_node] (s1) {$s_1$};
    \node[min_node, right=of s1] (s2) {$s_2$};
    \node[right=of s2] (dots_s) {$\dots$};
    \node[min_node, right=of dots_s] (sk) {$s_{k}$};

    \node[max_node, right=of sk] (t1) {$t_1$};
    \node[max_node, right=of t1] (t2) {$t_2$};
    \node[right=of t2] (dots_t) {$\dots$};
    \node[max_node, right=of dots_t] (tk) {$t_{k}$};
    
    \node[min_node, above right=of s1] (s1L) {$s_{1,L}$};
    \node[min_node, below right=of s1] (s1R) {$s_{1,R}$};
    
    \node[min_node, above right=of s2] (s2L) {$s_{2,L}$};
    \node[min_node, below right=of s2] (s2R) {$s_{2,R}$};
    
    \node[min_node, above right=of sk] (skL) {$s_{k,L}$};
    \node[min_node, below right=of sk] (skR) {$s_{k,R}$};

    \node[max_node, above right=of t1] (t1L) {$t_{1,L}$};
    \node[max_node, below right=of t1] (t1R) {$t_{1,R}$};

    \node[max_node, above right=of t2] (t2L) {$t_{2,L}$};
    \node[max_node, below right=of t2] (t2R) {$t_{2,R}$};
    
    \node[max_node, above right=of tk] (tkL) {$t_{k,L}$};
    \node[max_node, below right=of tk] (tkR) {$t_{k,R}$};

    \draw[->] ++(-1.25,0) -- (s1); 

    \path[->, >=latex]
    (s1) edge node[midway, edge_label]{$\vec{e}_{1}$} (s1L)
    (s1) edge node[midway, edge_label]{$- \vec{e}_{1}$} (s1R)
    (s1L) edge node[midway, edge_label]{$\vec{0}$} (s2)
    (s1R) edge node[midway, edge_label]{$\vec{0}$} (s2)
    (s2) edge node[midway, edge_label]{$\vec{e}_{2}$} (s2L)
    (s2) edge node[midway, edge_label]{$- \vec{e}_{2}$} (s2R)
    (sk) edge node[midway, edge_label]{$\vec{e}_{k}$} (skL)
    (sk) edge node[midway, edge_label]{$-\vec{e}_{k}$} (skR)
    
    (s2L) edge[dotted] (dots_s)
    (s2R) edge[dotted] (dots_s)
    (dots_s) edge[dotted] (sk)
    
    (skL) edge node[midway, edge_label]{$\vec{0}$} (t1)
    (skR) edge node[midway, edge_label]{$\vec{0}$} (t1)
    
    (t1) edge node[midway, edge_label]{$- \vec{e}_{1}$} (t1L)
    (t1) edge node[midway, edge_label]{$\vec{e}_{1}$} (t1R)
    (t1L) edge node[midway, edge_label]{$\vec{0}$} (t2)
    (t1R) edge node[midway, edge_label]{$\vec{0}$} (t2)
    (t2) edge node[midway, edge_label]{$- \vec{e}_{2}$} (t2L)
    (t2) edge node[midway, edge_label]{$\vec{e}_{2}$} (t2R)
    (tk) edge node[midway, edge_label]{$-\vec{e}_{k}$} (tkL)
    (tk) edge node[midway, edge_label]{$\vec{e}_{k}$} (tkR)
    
    (t2L) edge[dotted] (dots_t)
    (t2R) edge[dotted] (dots_t)
    (dots_t) edge[dotted] (tk);

    \draw[->, >=latex] (tkL) to[out=165, in=70]  node[midway, edge_label]{$\vec{\delta}$} (s1);
    \draw[->, >=latex] (tkR) to[out=195, in=290] node[midway, edge_label]{$\vec{\delta}$} (s1);
\end{tikzpicture}
}
  \vspace*{-17mm}
  \caption{In the game $\game_k$, optimal Max strategies for $\MP > \vec{0}$
  for $2k$ dimensions require
  at least $2^k$ memory modes. We have $2k$-dimensional reward vectors 
  $\vec{e_i}$ such that $\vec{e_i}[2i-1]=+1$, $\vec{e_i}[2i]=-1$ and $0$
  elsewhere.
  The special vector $\vec{\delta}=(\delta,\dots,\delta)$ has value
  $\delta\eqdef 2^{-2k}$ in every dimension.
\vspace*{-2mm}}
\label{fig:trigger-lower}
\end{figure}

The multi-dimensional $\MP > \vec{0}$ objective
is shift-invariant (by definition) and inverse-submixing
(by \cite[Lemma~6]{BBE2010:arxiv} for dimension $1$,
and the fact that event-based inverse-submixing objectives are closed under
intersection).
Moreover, by~\cite[Prop.~5.1]{Kuceraetal2014}, 
almost surely winning strategies for
$\MP > \vec{0}$ in MDPs can be chosen as memoryless randomized.
However, the games in \Cref{fig:trigger-lower}
(similar to
\cite[Fig.~4]{DBLP:journals/acta/ChatterjeeRR14} but not identical)
show that even randomized Max strategies
need exponentially many memory modes.

\begin{restatable}{theorem}{triggerlower}\label{thm:trigger-lower}
There is a family of deterministic games $\game_k$ with $6k$ states
as in \Cref{fig:trigger-lower},
where every state is almost surely winning for the
$2k$-dimensional $\MP > \vec{0}$ objective 
for Max, but any
randomized Max strategy with $< 2^k$ private memory modes cannot win almost surely.
\end{restatable}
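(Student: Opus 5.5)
The proof has two halves: an upper bound, where an explicit Max strategy with $2^k$ modes wins surely, and a lower bound, where pigeonhole over memory modes gives a fixed Min counter-strategy. First fix the bookkeeping. A \emph{round} is the $4k$-step cycle from $s_1$ back to $s_1$. In pair $i$, Min picks $m_i\in\{\pm1\}$: $m_i=+1$ means the move to $s_{i,L}$, with reward $m_i\vec{e}_i$. Max picks $x_i\in\{\pm1\}$: $x_i=+1$ means the move to $t_{i,R}$, with reward $x_i\vec{e}_i$. The total reward of a round is therefore $\vec{\delta}+\sum_{i=1}^k (m_i+x_i)\vec{e}_i$. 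Coordinate $2i-1$ receives $\delta+(m_i+x_i)$ and coordinate $2i$ receives $\delta-(m_i+x_i)$. Rewards are bounded and rounds have fixed length $4k$, so the liminf of the per-step average equals $\frac{1}{4k}$ times the liminf of the per-round average. It therefore suffices to reason about per-round averages.

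\emph{Upper bound (every state is almost surely winning).} Max uses $2^k$ memory modes, one for each vector in $\{\pm1\}^k$. During the $s$-phase he records Min's choices $m_1,\dots,m_k$, and in the $t$-phase he plays $x_i=-m_i$. Every complete round then yields exactly $\vec{\delta}$, so the mean payoff is $\vec{\delta}/4k>\vec{0}$ surely. From an arbitrary start state, Max plays arbitrarily until the play first reaches $s_1$, which happens after at most $4k$ steps. Shift-invariance then gives sure, hence almost-sure, winning from every state.

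\emph{Lower bound.} Fix any Max strategy $\zstrat$ with a private memory set $\memconfset$, $|\memconfset|<2^k$, possibly with randomized next and update functions. In the $t$-phase only Max moves. Hence, once the play is at $t_1$ with memory mode $\mu$, the distribution of Max's response vector $x=(x_1,\dots,x_k)$ is a fixed distribution $D_\mu$ on $\{\pm1\}^k$, independent of the earlier history. Each $D_\mu$ gives probability $>1/2$ to at most one vector. So fewer than $2^k$ vectors $-m$ satisfy $D_\mu(-m)>1/2$ for some $\mu$, and there exists $m^*\in\{\pm1\}^k$ with $D_\mu(-m^*)\le 1/2$ for every $\mu\in\memconfset$. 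Min plays the memoryless deterministic strategy that always chooses $m^*$. This requires no knowledge of Max's private memory, which is why the argument covers private memory.

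Now analyse a single round under this Min strategy. A mismatch in pair $i$ ($x_i=m^*_i$) contributes $-2$ to one specific coordinate $c_i$, determined only by the sign of $m^*_i$. Since $m^*$ is fixed, penalties in pair $i$ always hit the same coordinate. Conditional on any history, each round has a mismatch in some pair with probability $\ge 1/2$. Let $f_i(n)$ be the fraction of the first $n$ rounds with a mismatch in pair $i$. A martingale strong law (e.g.\ Azuma--Hoeffding together with Borel--Cantelli, applied to the indicators minus their conditional expectations) gives $\liminf_n \sum_i f_i(n)\ge 1/2$ almost surely. Hence for all large $n$ some $i$ has $f_i(n)\ge 1/(4k)$, and by pigeonhole one fixed $i$ does so for infinitely many $n$. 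The per-round average in coordinate $c_i$ is at most $\delta-2f_i(n)$. We have $\delta=2^{-2k}<1/(2k)\le 2\cdot\frac{1}{4k}$, so this average is negative infinitely often, and its liminf is $<0$. Thus $\MP>\vec{0}$ fails almost surely, and $\zstrat$ is not almost surely winning from $s_1$.

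The main obstacle is making the lower bound work for randomized strategies with private memory and randomized updates. The key step is observing that the $t$-phase is Min-free, so $D_\mu$ is well defined. After that, a fixed oblivious Min choice $m^*$ turns the problem into a pigeonhole statement over $\memconfset$. The remaining care lies in the martingale strong-law step converting the per-round conditional mismatch probability into an almost-sure frequency bound, and in the observation that fixing $m^*$ prevents penalties from cancelling across signs.
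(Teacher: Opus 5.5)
Your proposal is correct. The upper bound is the paper's copying strategy, but your lower bound takes a genuinely different route.

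The paper's lower bound works as follows:
\begin{itemize}
\item Min picks one of the $2^k$ options uniformly at random and then commits to it forever.
\item The paper forms the matrix $P$ of long-run (steady-state) memory-mode frequencies at $t_1$, one row per committed Min option, and the matrix $Q$ of Max's response distributions per memory mode.
\item Since $PQ$ is row-stochastic with rank $<2^k$, its trace is at most $2^k-1$. Hence some option $j'$ is copied with long-run frequency at most $1-2^{-k}$.
\item This yields a block whose expected per-round reward is $\le -2^{-2k}$, so Max wins with probability at most $1-2^{-k}$.
\end{itemize}
You instead use a majority-vote pigeonhole over memory modes to find a single Min choice $m^*$ that no mode copies with probability $>1/2$. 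This gives a mismatch bound of $1/2$ conditional on every history, and a martingale strong law then yields almost-sure failure.

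Your route gains three things:
\begin{itemize}
\item It needs no steady-state analysis. This avoids the reducibility and periodicity caveats the paper only handles informally (picking ``an irreducible subchain reached with maximal probability'', switching to long-run frequencies).
\item It needs only a memoryless deterministic Min counter-strategy.
\item It shows that Max strategies with $<2^k$ modes are worthless (winning probability $0$). This implies the paper's bound: under the paper's uniform mixture, Min commits to your $m^*$ with probability $2^{-k}$, against which $\zstrat$ loses almost surely.
\end{itemize}

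The paper's formulation offers a Min strategy fixed independently of $\zstrat$, which your argument also recovers as just noted. It also offers an averaged quantity: the trace bound, equivalently $\mathrm{Tr}(PQ)=\mathrm{Tr}(QP)\le m$, says the average long-run copying rate is at most $m/2^k$. That quantity speaks to the memory-versus-attainment trade-off raised in the conclusion.
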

\newcommand{\triggerlowerproof}{
\begin{proof}
Consider the games $\game_k$ from \Cref{fig:trigger-lower}
with $6k$ states.
The dimension of the rewards $2k$ is split into $k$ blocks of $2$ dimensions each
and the reward vector $\vec{e_i}$ is $(+1,-1)$ on the $i$-th block
and zero elsewhere.
First Min makes $k$ decisions, choosing between $\vec{e_i}$
and $\vec{-e_i}$ for each $i=1,2,\dots,k$.
Then Max makes $k$ decisions, choosing between $\vec{-e_i}$
and $\vec{e_i}$ for each $i=1,2,\dots,k$.
Max can win $\MP > \vec{0}$ surely (from $\s_1$ and thus from every other state)
by exactly copying Min's choices such that these rewards cancel
out, which leaves just the strictly positive reward vector
$\vec{\delta} = (2^{-2k}, \dots, 2^{-2k}) > \vec{0}$ between each visit to $\s_1$.
(Since $\vec{\delta}$ can be described with a number of bits
that is polynomial in $k$, $\game_k$ has polynomial size.)

Intuitively, Max needs at least $2^k$ memory modes
to remember which of the $2^k$ possible choices Min
made, in order to copy it.
While this is easy to prove for deterministic Max strategies,
we show a stronger result:
Even randomized Max strategies with $< 2^k$ private memory modes
cannot win almost surely.

Let $\zstrat$ be an arbitrary randomized Max strategy with $m < 2^k$ private memory modes.
Overall, for the $k$ choices in the states $\s_1,\dots,\s_k$,
Min has $2^k$ different possible options,
which we denote by index numbers $j \in \{1,\dots,2^k\}$.
(Similarly for Max's options in states $t_1,\dots,t_k$.)
Let $\ostrat_j$ be the deterministic Min strategy that always plays option $j$ forever.
Let $\ostrat$ be the randomized Min strategy which initially picks
a $j \in \{1,\dots,2^k\}$, with equal probability $2^{-k}$, and then plays
$\ostrat_j$ forever.
(Note that $\ostrat$ does not care about Max's strategy at all, and thus
neither about Max's memory modes, so the memory can be private.)
Now we show that $\Prob[\game_k][\zstrat,\ostrat,\s_1]{\MP > \vec{0}} < 1$.

Let $M_k^j$ be the Markov chain derived from $\game_k$ by fixing
$\zstrat$ and $\ostrat_j$.
If $M_k^j$ is irreducible and aperiodic then
let $P$ be the $2^k \times m$ matrix such that
$P(j,i)$ is the probability, in the steady-state,
that $\zstrat$ is in memory mode $i \in \{1,\dots,m\}$
in state $t_1$.
(It is also possible that $M_k^j$ is reducible,
since it has a memory component from the strategy $\zstrat$
as part of its state.
If $\zstrat$ is such that it would never visit certain memory modes again eventually,
then this makes $M_k^j$ reducible.
In this case we consider an irreducible subchain
that is reached with maximal probability.
If $M_k^j$ is periodic then consider the long-run average frequency
of being in memory mode $i$ in state $t_1$ instead of the probability in the
following arguments.)
Let $Q$ be the $m \times 2^k$ matrix where $Q(i,j)$ is the
probability that $\zstrat$ will play option $j$ in states $t_1,\dots,t_k$
if it is in memory mode $i$ in state $t_1$.

The probability that Max will play option $j$ in $M_k^j$
in the long run
is then given by $(PQ)(j,j)$, the $j$-th diagonal value in the product matrix.
Since $m < 2^k$, the $2^k \times 2^k$ matrix $PQ$ does not have full rank.
Thus at least one of its eigenvalues is $0$.
Moreover, since both matrices $P$ and $Q$ are row-stochastic, the absolute
value of the eigenvalues 
of $PQ$ are upper-bounded by $1$. So the sum of the eigenvalues of $PQ$
is upper-bounded by $2^k -1$.
The Cayley–Hamilton theorem implies that the trace is the sum of the
eigenvalues. This yields an upper bound on the  
trace of $PQ$, namely ${\it Tr}(PQ)= \sum_{j=1}^{2^k} (PQ)(j,j) \le 2^k -1$.
Hence there exists a $j'$ such that $(PQ)(j',j') \le 1-2^{-k}$.
So, in $M_k^{j'}$, the long-run probability that Max will pick
some option different from $j'$ is at least $2^{-k}$.
Hence there exists at least one option $j'' \neq j'$ that
is picked by Max in the long run with probability $\ge 2^{-k}2^{-k} = 2^{-2k}$.
Since $j'' \neq j'$, there exists at least one block of two dimensions,
say $x$ and $x+1$, 
where Max choosing $j'$ has effect $(+1,-1)$ and $j''$ has the opposite effect
$(-1,+1)$ (or vice-versa; this case is symmetric). 
Thus, between two visits to state $s_1$, the expected reward in dimension
$x$ is $\le -2\cdot 2^{-2k} + \delta = -2^{-2k} < 0$.
Therefore, in $M_k^{j'}$, the $\MP > \vec{0}$ objective is satisfied with
probability zero, since it almost surely fails in dimension $x$.
I.e., $\Prob[M_k^{j'}][\zstrat,\ostrat_{j'},\s_1]{\MP > \vec{0}}=0$.

Since Min's strategy $\ostrat$ initially decides to become
$\ostrat_{j'}$ with probability $2^{-k}$,
we obtain that $\Prob[\game_k][\zstrat,\ostrat,\s_1]{\MP > \vec{0}}
\le 1- 2^{-k} (1-\Prob[M_k^{j'}][\zstrat,\ostrat_{j'},\s_1]{\MP > \vec{0}})
= 1 - 2^{-k} < 1$.
\end{proof}
}
\triggerlowerproof

As a side note, the assumption of a shift-invariant inverse-submixing Max objective
implies that Min always has optimal memoryless deterministic (MD) strategies in all MDPs/SGs by
\cite[Theorem 1.1]{GimbertKelmendi:IJGT20203}, since this objective is then
shift-invariant and submixing for Min.

\section{Strategy Complexity of \texorpdfstring{$\mppar$}{Positive Mean-payoff Parity} With Randomization}\label{sec:mp_parity}

In this section, unless otherwise stated, we consider the 
one-dimensional $\mppar$ objective, where Max needs
to attain a strictly positive one-dimensional mean payoff while
also satisfying a max-even parity condition.
These are defined via a reward function $r$ and a coloring function $\coloring$
where rewards are in $[-R,R]$ and colors in $\set*{0,1, \ldots, d}$.

For almost surely winning strategies in MDPs $\mdp$,
\emph{deterministic} Max strategies require
$\Ocompl\lrc{\exp\lrc{\size{\mdp}}}$ memory modes, i.e.,
exponential memory is both necessary and sufficient
\cite[Theorem~5]{Gimbert2011ComputingOS}.
We show that \emph{randomized} Max strategies require less memory: none in MDPs
and just \emph{linearly} many memory modes in stochastic games.

\begin{restatable}{theorem}{MRmpparMDP}\label{thm:MR_mppar_MDP}
In maximizing MDPs, almost surely
winning strategies for the multi-dimensional $\MP > \vec{0} \, \cap \, \Eparity$ objective
can be chosen memoryless randomized (MR).
\end{restatable}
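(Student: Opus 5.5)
We reduce to end components and then use a single MR strategy per component. Let $\mdp$ be a maximizing MDP and $\obj \eqdef \MP > \vec{0} \cap \Eparity$. Since $\obj$ is shift-invariant, almost every play under any strategy eventually stays in some end component (EC) $C$ and visits all its states and transitions infinitely often. Whether the play is in $\obj$ depends only on how it behaves inside $C$.

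Call an EC $C$ \emph{good} if it admits an MR strategy, with support inside $C$, under which $C$ forms a single recurrent class (with stationary distribution $\mu$) such that two conditions hold. First, the maximal color among states of $C$ visited under this strategy is even. Second, the expected one-step reward $\sum_{e} \mu(e)\, r(e)$ is $> \vec{0}$ component-wise. By the ergodic theorem, such an MR strategy wins $\obj$ almost surely from every state of $C$.

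The key lemma is the converse. Suppose some strategy wins $\obj$ almost surely. Then almost every play ends in an EC $C$ whose limit set of visited states and edges has an even max color $c$, and whose long-run frequencies yield positive mean payoff. We show such a set contains a good sub-EC.

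To do this we pass to the sub-MDP $C_{\le c}$, restricted to states of color $\le c$. We use the characterization of \cite[Prop.~5.1]{Kuceraetal2014}: almost-sure $\MP>\vec 0$ in an EC is witnessed by a flow/frequency vector $x$ with positive reward, and this yields an MR strategy. Positive mean payoff with positive probability gives such a frequency vector, obtained as a limit point of empirical edge frequencies; via the standard Fatou/limit-point argument we can take it so that it also puts positive mass on some state of color $c$. Here we use that $c$ is visited infinitely often, but possibly with zero frequency, which is the main obstacle.

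To handle zero frequency, we mix. Take the MR strategy $\zstrat_x$ from $x$, with positive reward $\vec{r}_x > \vec 0$. Take a second MR strategy $\zstrat_{\mathrm{reach}}$ that plays uniformly on all edges of the EC, so it visits a color-$c$ state. Consider the convex combination of frequency vectors $(1-\lambda)x + \lambda y$, where $y$ is the stationary frequency of $\zstrat_{\mathrm{reach}}$. Since the set of invariant flows of an EC is a polytope and every point in its relative interior restricted to a connected support is realized by an MR strategy (normalize outgoing flow at each state), for small $\lambda>0$ this combination has reward still $> \vec 0$. Its support is strongly connected and contains a color-$c$ state while avoiding colors $>c$. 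Hence $C$ (restricted) is good.

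Globally we proceed as follows. Let $G$ be the union of good ECs, each equipped with its MR strategy. States from which $\obj$ holds almost surely are exactly those that reach $G$ almost surely: necessity is by the lemma, and sufficiency follows because once in $G$ we play the local MR strategy. Reachability to $G$ almost surely has an MD strategy \cite{CONDON1992203} outside $G$. Combining the MD attractor strategy outside $G$ with the local MR strategies inside $G$ gives one memoryless randomized strategy. Overlapping good ECs are handled by fixing a single one per state, for example by taking a maximal good EC decomposition. The step needing most care is the perturbation argument that realizes a mixed frequency vector by an MR strategy with a single recurrent class. Here we must ensure strict positivity survives in every dimension simultaneously, which holds because $\vec r_x > \vec 0$ strictly and rewards are bounded by $R$, so $\lambda < \min_i r_x[i]/(2R)$ suffices.
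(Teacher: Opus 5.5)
Your proof is correct and follows essentially the same route as the paper: decompose into end components, show that inside a good (winning) end component an MR strategy using every edge wins $\MP > \vec{0}$ almost surely and therefore also sees the even maximal color, and glue these local strategies together with an MD almost-sure reachability strategy outside. Where the paper imports the full-support MR strategy from the proof of \cite[Prop.~5.1]{Kuceraetal2014}, you build it directly by mixing a positive-reward limit flow with the uniform strategy's flow, which also supplies the justification (left implicit in the paper) that the limit end component of almost every winning run is winning. One small clean-up: your claim that the limit point can be chosen to put positive mass on a color-$c$ state is false in general and is not needed, since the mixing handles that. Likewise the detour through $C_{\le c}$ is superfluous if you take $C$ to be the run's limit set itself; restricting a larger end component to colors $\le c$ need not even yield an end component.
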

\newcommand{\MRmpparMDPproof}{
\begin{proof}
We use the fact that, for any strategy, almost surely all the runs eventually end up in an
end component~\cite[Theorem~3.2]{D-A1997}. In a finite-state MDP, there can
only be a finite (albeit exponential) number of end components.
Let
$\mdp_1 \eqdef \tuple{\states_1, \zstates^1, \rstates^1, \transition_1, \probp_1}$ 
be one such end component of $\mdp$.
We say that $\mdp_1$ is `winning' iff the highest color is even
and $\states_1 \subseteq \AS[\mdp_1][\pz]{\MP > \vec{0}}$.
Given an almost surely winning strategy
$\optzstrat$, except for a null set, all induced runs must eventually
stay inside a winning end component.
Thus from every almost surely winning state it is possible to almost surely
reach a winning end component.
\begin{claim}\label{claim:multi-mppar-mdp}
    Let $\mdp_1$ be a winning end component. Then there is a memoryless
    randomized strategy $\optzstrat$ which is almost surely winning for
    $\MP > \vec{0} \, \cap \, \Eparity$ from every state in $\mdp_1$.
\end{claim}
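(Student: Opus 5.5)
We prove the claim by combining two memoryless randomized strategies on $\mdp_1$ as a convex mixture. The first ingredient handles the mean payoff. Since $\states_1 \subseteq \AS[\mdp_1][\pz]{\MP > \vec{0}}$, \cite[Prop.~5.1]{Kuceraetal2014} gives an MR strategy $\zstrat_{\MP}$ on $\mdp_1$ that wins $\MP > \vec{0}$ almost surely from every state of $\mdp_1$. Let $c$ be the highest color in $\mdp_1$; it is even by assumption. The second ingredient is the uniform MR strategy $\zstrat_u$, which picks every successor inside $\mdp_1$ with positive probability. Because $\mdp_1$ is an end component, the Markov chain induced by $\zstrat_u$ is a single recurrent class on $\states_1$. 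So every state of color $c$ is visited infinitely often almost surely.

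The candidate strategy is, for a small $\lambda \in (0,1)$, $\optzstrat \eqdef (1-\lambda)\zstrat_{\MP} + \lambda\zstrat_u$, mixed locally at each Max state. It is MR, and its support at every state is the full set of successors inside $\mdp_1$. Fixing $\optzstrat$ therefore yields a finite Markov chain that is irreducible on $\states_1$. This gives the parity part immediately: the chain visits every state infinitely often almost surely, so the highest color seen infinitely often is $c$, which is even.

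The mean-payoff part is the main obstacle, and we handle it by continuity. In an irreducible finite chain, the mean payoff converges almost surely to the constant vector $\sum_{e} \mu_\lambda(e)\, r(e)$, where $\mu_\lambda$ is the stationary edge frequency. So it suffices to find $\lambda$ with $\sum_e \mu_\lambda(e) r(e) > \vec{0}$ componentwise.

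The subtlety is that the chain under $\zstrat_{\MP}$ may be reducible. We argue as follows.
\begin{itemize}
\item Under $\zstrat_{\MP}$, every bottom strongly connected component $B$ of the induced chain that is reachable from some state must have a stationary mean-payoff vector $g_B > \vec{0}$. Otherwise $\zstrat_{\MP}$ would fail with positive probability from states in $B$, contradicting almost-sure winning.
\item As $\lambda \to 0$, the stationary distribution $\mu_\lambda$ of the perturbed irreducible chain converges to a convex combination of the stationary distributions of these bottom components. This follows from the Markov chain tree theorem or from standard perturbation results: $\mu_\lambda$ is a ratio of polynomials in $\lambda$, so $\lim_{\lambda\to 0}\mu_\lambda$ exists and is stationary for the unperturbed chain. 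Any stationary distribution of the unperturbed chain is supported on its recurrent classes.
\item Hence $\lim_{\lambda \to 0}\sum_e \mu_\lambda(e) r(e)$ is a convex combination of the vectors $g_B$. It is therefore $> \vec{0}$, and by continuity $\sum_e \mu_\lambda(e) r(e) > \vec{0}$ for all sufficiently small $\lambda > 0$.
\end{itemize}
Fixing such a $\lambda$ gives an MR strategy that wins both conjuncts almost surely from every state of $\mdp_1$.

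With the claim in hand, the theorem follows by a standard combination. Let $W$ be the union of the state sets of the winning end components. On $W$, play the claim's strategies; states lying in several winning end components can be handled by taking a maximal winning end component, since a union of overlapping winning end components is again one for which the claim applies. Outside $W$, play an MD strategy for almost-sure reachability of $W$, which exists because every almost surely winning state can almost surely reach a winning end component. Since the objective is shift-invariant, the combined strategy is MR and almost surely winning.
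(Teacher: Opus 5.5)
Your proof is correct, but it reaches the key fact by a different route than the paper.

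The paper constructs nothing itself. It takes the strategy $\xi_\eps$ from the \emph{proof} of \cite[Prop.~5.1(ii)]{Kuceraetal2014}, notes that by construction it already uses every edge of $\transition_1$ with positive probability and has probabilities of polynomial size, and then gets $\Eparity$ exactly as you do, from full support together with the even top color.

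You use only the \emph{statement} of that proposition, namely a uniformly almost-surely winning MR strategy $\zstrat_{\MP}$. You then build the full-support property yourself: you mix with the uniform strategy and let $\lambda \to 0$, where the stationary distribution tends to a convex combination of the BSCC distributions of the unperturbed chain. This makes your argument self-contained and independent of how the cited strategy is constructed. It is also the same singular-perturbation idea the paper later uses for $\eps_0$ in \Cref{claim:max_colour_even}.

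What you lose is quantitative. Your continuity argument shows that a suitable $\lambda$ exists but gives no bound on its bit size. The paper's citation supplies polynomial-size probabilities, which is what supports the $\Ocompl(\size{\mdp}^c)$ bit-size remark for MDPs. You could recover such a bound with the rational-function root bounds used in \Cref{claim:perturbed-stationary-bound}.

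Your limit step also depends essentially on the uniformity you asserted, i.e., that $\zstrat_{\MP}$ wins from \emph{every} state of $\mdp_1$. The limiting distribution may put weight on any BSCC of the unperturbed chain, not only on those reachable from a given start state. Uniformity is available: take one MR strategy that wins almost surely from some state, and redirect all other states of the end component into its good BSCCs with an MD reachability strategy. Still, this is where the hypothesis is genuinely used and deserves a sentence.
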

\begin{claimproof}
    By definition of a winning end component, we know that 
    $\states_1 \subseteq \AS[\mdp_1][\pz]{\MP > \vec{0}}$. From the proof
    of~\cite[Prop.~5.1 (ii)]{Kuceraetal2014} we obtain that
    there exists a memoryless randomized strategy $\xi_\eps$ for some $\eps > 0$,
    such that $\xi_\eps$ almost surely wins $\MP > \vec{0}$ from any state in
    $\states_1$
    and $\size{\eps}$, the size of the probabilities used
    by $\xi_\eps$, is polynomial in $\size{\mdp_1}$.
    Furthermore, every edge in
    $\transition_1$ is used with some positive probability. This implies that
    $\xi_\eps$ also satisfies $\Eparity$ almost surely, since the highest color in $\mdp_1$ is even.
\end{claimproof}
Consider the union of all winning
end components $\?C$ in $\mdp$.
We obtain an almost surely winning MR strategy for $\MP > \vec{0} \, \cap \, \Eparity$
from every state in $\mdp$ as follows.
Outside of $\?C$ it plays an almost surely winning uniform MD strategy for
the reachability objective $\eventually\, \?C$.
Inside each maximal winning end component $\mdp_1$ in $\?C$ it plays
the MR strategy $\xi_{\eps}$ from \Cref{claim:multi-mppar-mdp}.

\end{proof}
}

\begin{proof}
We use the fact that, for any strategy, almost surely all the runs eventually end up in an
end component~\cite[Theorem~3.2]{D-A1997}. In a finite-state MDP, there can
only be a finite (albeit exponential) number of end components.
Let
$\mdp_1 \eqdef \tuple{\states_1, \zstates^1, \rstates^1, \transition_1, \probp_1}$ 
be one such end component of $\mdp$.
We say that $\mdp_1$ is `winning' iff the highest color is even
and $\states_1 \subseteq \AS[\mdp_1][\pz]{\MP > \vec{0}}$.
Given an almost surely winning strategy
$\optzstrat$, except for a null set, all induced runs must eventually
stay inside a winning end component.
Thus from every almost surely winning state it is possible to almost surely
reach a winning end component.
\begin{claim}\label{claim:multi-mppar-mdp}
    Let $\mdp_1$ be a winning end component. Then there is a memoryless
    randomized strategy $\optzstrat$ which is almost surely winning for
    $\MP > \vec{0} \, \cap \, \Eparity$ from every state in $\mdp_1$.
\end{claim}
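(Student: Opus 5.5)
The plan is to build a single MR strategy inside $\mdp_1$ that wins $\MP > \vec{0}$ almost surely and also uses every edge of $\transition_1$ with positive probability. Full support settles the parity part. The strategy is memoryless, so fixing it on the finite end component $\mdp_1$ gives a finite Markov chain. In the chosen strategy every edge of $\mdp_1$ has positive probability, and $\mdp_1$ is strongly connected, so this chain is irreducible on $\states_1$. Hence every state is visited infinitely often almost surely. In particular a state of maximal color, which is even by the definition of a winning end component, is visited infinitely often, and $\Eparity$ holds almost surely. It therefore remains to find an almost surely winning MR strategy for $\MP > \vec{0}$ that has full support on $\transition_1$.

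For the mean-payoff part I would invoke \cite[Prop.~5.1]{Kuceraetal2014}. Since $\states_1 \subseteq \AS[\mdp_1][\pz]{\MP > \vec{0}}$, that result gives an almost surely winning MR strategy $\xi$ in $\mdp_1$. I would then perturb it. Let $u$ be the uniform memoryless strategy on $\mdp_1$, and set $\xi_\eps \eqdef (1-\eps)\xi + \eps u$ statewise. For every $\eps>0$ this strategy has full support on the edges of $\mdp_1$, and its induced chain is irreducible.

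The main obstacle is to show that $\xi_\eps$ still wins $\MP > \vec{0}$ for small enough $\eps$. The approach is as follows. Under an irreducible memoryless strategy the mean payoff is almost surely equal to the vector $\sum_{e} f_\eps(e) r(e)$, where $f_\eps$ is the stationary edge frequency. Now $\xi$ may itself be reducible. However, it wins from every state, so every bottom SCC $B$ of its chain has stationary frequency vector $g_B$ with $\sum_e g_B(e) r(e) > \vec{0}$. I would pick one such bottom SCC $B$. Instead of mixing with $u$ everywhere, I would play $\xi$ on $B$, and outside $B$ play the mixture $(1-\eps)\cdot(\text{an MD attractor strategy to } B) + \eps u$. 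Then $B$ is still left with probability $\eps$ from its states, because $u$ is mixed in there as well. This gives a unique recurrent class covering all of $\states_1$. As $\eps \to 0$, the stationary distribution converges continuously to the one concentrated on $B$. This follows from the Markov chain tree theorem: stationary probabilities are rational functions of the transition probabilities, and a perturbation analysis shows that the expected time spent outside $B$ is $O(\eps)$. The mean payoff therefore tends to $\sum_e g_B(e) r(e) > \vec{0}$, and it is strictly positive for sufficiently small $\eps$, with $\size{\eps}$ polynomial. This perturbation step, that is, continuity of the long-run frequencies under small full-support mixing, is the crux. It is essentially what is extracted from the proof of \cite[Prop.~5.1 (ii)]{Kuceraetal2014}, so I would try to cite that argument directly before redoing the estimate.

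Combining the two parts, $\xi_\eps$ is MR and almost surely satisfies both $\MP > \vec{0}$ and $\Eparity$ from every state of $\mdp_1$, which proves the claim.
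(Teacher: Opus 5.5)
Your proposal is correct and follows essentially the same route as the paper: take from (the proof of) \cite[Prop.~5.1]{Kuceraetal2014} an MR strategy that wins $\MP > \vec{0}$ almost surely in $\mdp_1$ and gives every edge of $\transition_1$ positive probability, so the induced chain is irreducible on $\states_1$ and the even maximal color is visited infinitely often. Your explicit perturbation (mixing $\eps u$ into $\xi$ on $B$ and into an attractor strategy to $B$ elsewhere, so the $\eps=0$ chain is unichain and its stationary distribution is continuous there) soundly justifies the step the paper only cites, though continuity gives only the existence of a suitable $\eps$, not the polynomial bound on $\size{\eps}$ you mention, which the claim does not need.
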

\begin{claimproof}
    By definition of a winning end component, we know that 
    $\states_1 \subseteq \AS[\mdp_1][\pz]{\MP > \vec{0}}$. From the proof
    of~\cite[Prop.~5.1 (ii)]{Kuceraetal2014} we obtain that
    there exists a memoryless randomized strategy $\xi_\eps$ for some $\eps > 0$,
    such that $\xi_\eps$ almost surely wins $\MP > \vec{0}$ from any state in
    $\states_1$
    and $\size{\eps}$, the size of the probabilities used
    by $\xi_\eps$, is polynomial in $\size{\mdp_1}$.
    Furthermore, every edge in
    $\transition_1$ is used with some positive probability. This implies that
    $\xi_\eps$ also satisfies $\Eparity$ almost surely, since the highest color in $\mdp_1$ is even.
\end{claimproof}
Consider the union of all winning
end components $\?C$ in $\mdp$.
We obtain an almost surely winning MR strategy for $\MP > \vec{0} \, \cap \, \Eparity$
from every state in $\mdp$ as follows.
Outside of $\?C$ it plays an almost surely winning uniform MD strategy for
the reachability objective $\eventually\, \?C$.
Inside each maximal winning end component $\mdp_1$ in $\?C$ it plays
the MR strategy $\xi_{\eps}$ from \Cref{claim:multi-mppar-mdp}.

\end{proof}

Since both $\MP > 0$ and $\Eparity$ are shift-invariant
and inverse-submixing~(by \cite[Lemma~6]{BBE2010:arxiv} and
\cite[Prop.~3.1]{GimbertKelmendi:IJGT20203}),
the same holds for the conjunction $\mppar$.
For randomized strategies, even with
the improved upper bound in MDPs of~\cref{thm:MR_mppar_MDP},
the construction in \Cref{thm:fmt}
(\cite[Sec.~6.1]{GimbertKelmendi:IJGT20203})
still only yields an exponential upper bound on
the memory of Max strategies in stochastic games.
We show that this is excessive in general.
Optimal randomized Max strategies for $\mppar$ in stochastic games
require, at least and at most, $\numevencolours$
many memory modes, i.e., just linear memory.

\begin{theorem}\label{thm:improved_mppar_games}
Consider a game $\game = \gametuple$, coloring
function $\coloring$ with the highest color $d$, reward function $r$
and objective $\mppar$ for player Max. 
\begin{enumerate}
\item
  If Max can win almost surely from some state $\s$
  then there also exists an almost surely winning randomized Max strategy from $\s$
  that uses at most $k$ public memory modes
  and total bit size $\Ocompl\lrc{{\size{\game}}^{d+c}}$
  where $k = \numevencolours$, $d$ the highest color, and $c$ is a constant
  independent of $\game$.
  \label{itm:upper_bound_mppar}
\item
  The bound on the number of required memory modes for almost surely winning
  randomized Max strategies is tight. There is a family of deterministic
  games $\setcomp{\game_n}{n \ge 2}$ as in~\cref{def:game_n} and \Cref{fig:gn}
  such that
  \begin{itemize}
  \item
    $\size{\game_n}=\Theta(n)$,
    $\game_n$ contains $n$ even colors and every state is almost surely
    winning for Max.
  \item
    For every $n\ge 2$, any Max strategy with $<n$ (private) memory modes is
    worthless, i.e., it cannot guarantee $\mppar$ with any positive probability.  
  \end{itemize}
  \label{itm:lower_bound_mppar}
\end{enumerate}
\end{theorem}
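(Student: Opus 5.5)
For the upper bound (\cref{itm:upper_bound_mppar}) I would argue by induction on the highest color $d$, following the classical Zielonka/McNaughton-style decomposition used for mean-payoff-parity games \cite{CHJ2005,chatterjee2011games}, but replacing the infinite-memory "switching" strategies by randomized ones. First I restrict attention to $W = \AS[\game][\pz]{\mppar}$. This set is a Max-trap for Min, and Max can stay inside it. If $d$ is odd, let $A = \attr_\po(\coloring^{-1}(d)) \cap W$ be Min's attractor to the top odd color. By the standard argument, Max wins almost surely in the subgame $W\setminus \attr_\po(\ldots)$, which has fewer colors; inside it he uses the induction hypothesis. On the remaining states he plays an MD strategy for almost-sure reachability of that subgame, which exists by \cite{CONDON1992203}. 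Repeating this peels off layers without adding memory modes, because the layers are disjoint and the memory can be reused.

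The essential case is $d$ even. Let $D=\coloring^{-1}(d)$ and $B=\attr_\pz(D)$. On $W\setminus B$ (a Min-trap from Max's perspective, with highest color $<d$), induction gives a strategy $\sigma'$ with $k-1$ memory modes, since color $d$ is absent. Max also needs a strategy $\sigma_{\MP}$ that wins $\MP>0$ almost surely on all of $W$. I would take an MD strategy, using \cite[Prop.~7]{Brazdil2010}, after first justifying that $W\subseteq \AS{\MP>0}$.

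The combined strategy uses one additional memory mode, "mode $d$". In mode $d$, Max plays a randomized mixture at each Max state: with probability $1-\eta$ he follows $\sigma_{\MP}$, and with probability $\eta$ he follows the attractor strategy towards $D$. Whenever play leaves $B$, or a suitable event occurs, the memory moves into the $\sigma'$ modes. I would try to prove the following claim. For $\eta$ small enough, with rational $\eta$ whose size is polynomial per level (this is where $\size{\game}^{d+c}$ comes from, via the bound on $\eps$ in \cite[Prop.~5.1]{Kuceraetal2014}), the resulting finite Markov chain against any Min MD counter-strategy has the property that every bottom SCC either visits $D$ infinitely often with positive mean payoff, or lies in a region where $\sigma'$ wins.

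Min has MD optimal counter-strategies by \cite[Theorem~1.1]{GimbertKelmendi:IJGT20203}, since the objective is shift-invariant inverse-submixing for Max. So it suffices to check BSCCs of the finite chain induced by the finite-memory Max strategy and each MD Min strategy. In each BSCC, the positivity of the mean payoff is a continuity argument: the mean payoff of the $\eta$-perturbed chain tends, as $\eta\to 0$, to a convex combination of values that are all $>0$. Making this uniform over all MD Min strategies, and controlling the bit size, is the main obstacle. I would first try a perturbation bound on stationary distributions of chains with rational entries, as in \cite{Kuceraetal2014}. The counting gives one mode per even color, i.e.\ $k$ modes.

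For the lower bound (\cref{itm:lower_bound_mppar}), I would mimic the trace/rank argument of \Cref{thm:trigger-lower}. In $\game_n$, Min repeatedly chooses one of $n$ branches; each branch $i$ offers Max a choice, and only a specific choice keyed to a previously seen even color keeps both the mean payoff positive and the parity even. A Max strategy with $m<n$ private modes is played against a Min strategy that randomizes uniformly over its $n$ branches, or against each fixed branch. Then by pigeonhole some branch $j$ is met, in the long run, in a memory state that chooses wrongly with positive frequency. In the BSCC this wrongly chosen event recurs, which forces either a negative drift or a dominating odd color almost surely. Hence some Min MD strategy wins with probability $1$, and the Max strategy is worthless.
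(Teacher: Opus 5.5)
\textbf{Upper bound.} The upper bound fails at the switching rule of the even case, which is where the difficulty lies. Suppose, as you write, the memory moves into the $\sigma'$ modes whenever the play leaves $B$. Then your extra mode is only used inside $B$, where Max plays a memoryless mixture, so it is redundant. Your induction would then produce memoryless strategies, contradicting Item~2.

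Concretely, in $\game_2$ one has $B=\{4\}$, and ``always go to $1$'' is an almost-surely winning strategy $\sigma'$ on $Y=S\setminus B$. If Min always picks $4$, every cycle $t\to 1\to s\to 4\to t$ has reward $2-3=-1$. The point is that Min can interrupt $\sigma'$, whose mean payoff is positive only in the long run, by re-entering $B$ at will.

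The missing idea is the paper's probabilistic clock. In the extra mode, Max plays $\sigma_{\MP}$ (perturbed by $\eps_0\,\sigma_{\attr}$ on $X=B$) on \emph{all} states, including $Y$. He leaves this mode only with a tiny probability $\eps_1$ per step in $Y$, and returns to it whenever $X$ is re-entered. A Phase-1 stretch then lasts at least $1/\eps_1$ steps in expectation and gains at least $\mu'/\eps_1-C$, with $C$ independent of $\eps_1$. This dominates the expected loss of a Phase-2 stretch before the play returns to $X$, which is also bounded independently of $\eps_1$.

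Your continuity argument as $\eta\to 0$ presupposes that all BSCCs of the unperturbed chain have positive mean payoff. That is exactly what fails for BSCCs alternating between $B$ and the $\sigma'$ modes, and these mixed BSCCs need the renewal-type comparison above. It is also this doubly exponentially small $\eps_1$, with $\size{\eps_1}=O(\size{\game}^{c_0}+\bits(\sigma_Y))$, that drives the $\size{\game}^{d+c}$ bit-size recursion.

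Your odd case is misstated too. Max need not win almost surely from every state of the subgame on $W\setminus\attr_{\po}(\coloring^{-1}(d))$, and in general he cannot reach it almost surely from the rest. What is needed is:
\begin{itemize}
\item a proof that its almost-sure winning region $R$ is non-empty;
\item Max's \emph{positive} attractor $Z$ to $R$;
\item iteration on the Max-trap $W\setminus Z$;
\item a proof, via the positive-attractor property, that almost surely the play eventually stays in a single layer.
\end{itemize}

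\textbf{Lower bound.} Your lower-bound argument fails for $\game_n$ as well. Contrary to your description, copying Min's last choice is not the only good response. Against Min fixed on $2j$, choosing $2i-1$ with $i<j$ at $t$ gives cycle reward $2(i-j)+1\le -1$ instead of $+1$, so such ``wrong'' choices are harmless at any frequency below $1/2$. For example, for $j\ge 2$, a mode playing $2j-1$ with probability $3/4$ and $2j-3$ with probability $1/4$ wins almost surely against this Min strategy.

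Hence a recurring wrong choice forces neither negative drift nor an odd top color. The trace bound gives only a wrong-choice frequency of order $1/n$, blind to the direction of the deviation, so it yields no contradiction. Reasoning about a single BSCC would in any case only give probability $<1$, not worthlessness.

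What works is an argument about supports, exploiting that only \emph{upward} deviations are fatal at every positive frequency. On almost every run that Max wins against Min fixed on $2j$, two things hold:
\begin{itemize}
\item every mode in which $t$ is visited infinitely often has support within $\{1,3,\dots,2j-1\}$, since otherwise an odd color above $2j$ recurs almost surely;
\item at least one of these modes has $2j-1$ in its support, since otherwise eventually every cycle has reward $\le -1$.
\end{itemize}
A non-worthless strategy wins with positive probability against each of these $n$ MD Min strategies. So the map sending a mode to the highest option in its support must hit each of $1,3,\dots,2n-1$, and $n$ modes are needed. The paper organizes this as an induction producing at least $k$ modes capped at $2k-1$, after which Min always choosing $2n$ wins surely.
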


The idea of using randomization to reduce memory complexity is not new and
appears, e.g., in \cite{CDGH2015,DBLP:conf/fossacs/Chatterjee07,DBLP:conf/stacs/Horn09}.
However, it is interesting to note that the Max strategy in SGs still requires
a small amount of memory even in
the presence of randomization and cannot be made memoryless,
unlike in MDPs (\cref{thm:MR_mppar_MDP}).

\begin{table}[ht]
	\centering
	\begin{tabular}{|l|c|c|}
		\hline
		\diagbox[width=8em]{Strategy}{Arena}           & 
		MDPs $\mdp$                                        &
		Stochastic games $\game$
		\\ \hline
		\textbf{Deterministic}                         &
		$\Thetacompl\lrc{\exp\lrc{\size{\mdp}}}$~\cite{Gimbert2011ComputingOS}
		                                               &
		$\Ocompl\lrc{2\!-\!\exp\lrc{\size{\game}}}$~\cite{Gimbert2011ComputingOS,GimbertKelmendi:IJGT20203}
		\\ \hline
		\textbf{Randomized}                            &
		$1$ &
		$\numevencolours = \Thetacompl\lrc{\size{\game}}$
		\\ \hline
	\end{tabular}
	\caption{Worst case number of memory modes required for almost surely winning Max
          strategies for objective $\mppar$ in MDPs and
          stochastic games.
        }
	\label{tab:str_comp_mppar}
\end{table}

\Cref{tab:str_comp_mppar} highlights the strategy complexity of almost surely
winning Max strategies in MDPs and games for the $\mppar$ objective.
Note that while there is no explicit result for the complexity of deterministic
winning strategies, the results from~\cite{Gimbert2011ComputingOS}
and lifting this strategy using \Cref{thm:fmt}
(\cite[Sec.~6.1]{GimbertKelmendi:IJGT20203})
provide a doubly exponential upper bound.
The entries in the randomized row are our contributions. The bit size of these
randomized Max strategies is $\Ocompl\lrc{\size{\mdp}^c}$
and $\Ocompl\lrc{{\size{\game}}^{d+c_1}}$ in MDPs and games
respectively, where $k=\numevencolours$, $d$ is the highest color, $c$ and $c_1$ are constants.

\subparagraph{Upper bound.}
The upper bound in \Cref{thm:improved_mppar_games} Item 1
is shown by induction on the number of colors, with a case distinction
whether the highest color is odd or even.
Max's optimal strategy needs to switch between different modes,
and it uses randomized memory updates with small probabilities
to implement a `probabilistic clock',
in order to stay in one mode for a very long time (in expectation), but not forever;
cf.~\Cref{app:thm41}.

\subparagraph{Outline of the proof.}
We first outline the proof of the upper bound in \Cref{thm:improved_mppar_games},
Item 1. (Full proof in \Cref{app:thm41}.)
It follows the
induction argument of~\cite[Lemma 2,3]{CDGQ:2014} along with strategy
complexity analysis for $\mppar$ instead of $\MP \ge 0\,\cap \,
\Eparity$.
There are two cases, depending on whether the maximum
color in the game is odd or even.

If the maximum color $d$ is odd, then the $\Eparity$ part of $\mppar$
requires that this color must eventually never been seen any more
(except in a null set of the plays).
Using a ranking argument, we show that 
it is possible to partition the state space
into layers $Z_1, \cdots, Z_{\ell}$ such that
\begin{itemize}
\item
  Max can force the win in any one of these layers if the play stays there forever.
\item
  Min cannot infinitely often switch between the layers, except in
  a null set of the plays.
\end{itemize}
The above two facts can be combined to build a winning Max strategy. Interestingly,
this strategy does not use any additional memory, compared to the inductive
case with one less color. It can be seen as a
combination of memoryless attractor strategies on certain states, combined with almost
surely winning strategies in other states in subgames with at least one less
color (using the IH).

If the maximum color $d$ is even, Max tries wherever possible to
reach a state of this color but does so sufficiently infrequently,
so that this does not compromise the satisfaction of the
positive mean-payoff objective. This is achieved by operating in phases with
Max playing either for $\MP > 0 \, \cap \,
\eventually\lrc{\states\lrc{d}}$ or for $\mppar$ in a subgame with at least 1
less color.

\subparagraph{Max's strategy.}
Let $\states$ denote the set of states in the game $\game$, and $X
\eqdef \attr_{\pz}\lrc{\states\lrc{d}}$ the positive attractor of states
with color $d$
in $\game$, $Y \eqdef \states \setminus X$. Since $\game\lrb{Y}$ has at least 1 less
color, by induction hypothesis let $\optzstrat_{Y}$ denote the almost surely
winning strategy for Max in $\game\lrb{Y}$. Let $\optzstrat_{\MP}$ denote the
optimal MD strategy for $\MP > 0$ in $\game$ and $\optzstrat_{\attr}$
denote the positive attractor strategy in states in $X \cap \zstates$.
Then the optimal randomized Max strategy $\optzstrat$ works as follows.
\begin{description}
\item[Phase-1] If the current state is in $X$, play
  the mixed strategy $\eps_0\optzstrat_{\attr} +(1-\eps_0)\optzstrat_{\MP}$.
  I.e., play $\optzstrat_{\attr}$ with some
	      sufficiently small probability $\eps_0 >0$ and play $\optzstrat_{\MP}$ with probability
	      $1-\eps_0$. Else, play according to $\optzstrat_{\MP}$. At every step, there is
	      a sufficiently small chance $\eps_{1} >0$ to stop this phase.
              This is done by changing Max's memory mode with probability $\eps_{1}$.
              So Phase-1 stops eventually almost surely, where the expected
              time to stopping depends on $\eps_{1}$. This serves as
              a makeshift probabilistic clock (since this strategy does not
              use a real clock).
              After this phase is
	      over, if the play is in $X$, restart Phase-1, else move to Phase-2.
\item[Phase-2] Play according to $\optzstrat_{Y}$ while the play is in $Y$. If the
	      play ever moves to $X$, switch to Phase-1.
\end{description}
One has to choose $\eps_0$, $\eps_1>0$ so that the overall strategy is almost
surely winning.
Intuitively, $\eps_0 >0$ is chosen so small that the mean-payoff is strictly 
positive in Phase-1, but this holds only in the long run.
Moreover, while Phase-2 also yields a strictly 
positive mean-payoff in the long run, it might switch back to Phase-1 early
while the accumulated reward is still negative.
(However, this possible negative reward can be bounded, in expectation.)
Therefore $\eps_1>0$ is chosen so small that Phase-1 is played for a very long
time (in expectation). Hence Phase-1 closely approximates its
long run behavior with strictly positive mean-payoff,
and thus it also compensates for any possible negative reward obtained during
a temporary switch to Phase-2.

Also note that this strategy uses randomization in both the
memory updates and the next move. This corresponds to DRR randomized strategy class in
the notation of~\cite[Figure~1.1]{DBLP:conf/concur/MainR22}. Furthermore, the
strategy needs $1$ additional memory mode (compared to strategy $\optzstrat_{Y}$)
in order to know the current phase.
This is necessary, because it needs to play different strategies in $Y$:
$\optzstrat_{\MP}$ in Phase-1 and $\optzstrat_{Y}$ in Phase-2.

See \Cref{app:thm41} for the full proof.

\subparagraph{Lower bound.}
Towards proving the lower bound in \Cref{thm:improved_mppar_games} Item 2,
we construct the following class of games in 
\Cref{fig:gn}.
The idea is that Max needs to remember Min's last move from state $s$
to some state $x$,
since, in order to win, he needs to imitate it from state $t$
by going to state $x-1$.

\begin{figure}[ht!]
	\centering
	\begin{tikzpicture}[
    >=Latex,
    max_node/.style={Gmax, minimum size=12mm, font=\small},
    min_node/.style={Gmin, minimum size=16mm, font=\small},
    edge_label/.style={below, sloped, font=\small},
    edge_label2/.style={below, sloped, font=\small},
    invisible_node/.style={opacity=0, minimum size=0pt}
]

\def\xpos{2cm}
\def\ypos{2.5cm}

\node[max_node] (s0) at (0*\xpos,0*\ypos) {$t$};
\foreach \i in {-2,-1,0,1,2}
  \node[invisible_node] (aboves0\i) at (-\xpos,\i*\ypos) {};

\node[max_node] (odd1) at (1*\xpos,2*\ypos) {$1$};
\node[max_node] (odd3) at (1*\xpos,1*\ypos) {$3$};
\node[rectangle, minimum size=12mm] (oddvdots) at (1*\xpos,0*\ypos) {$\vdots$};
\node[max_node] (oddlastbefore) at (1*\xpos,-1*\ypos) {$2n-3$};
\node[max_node] (oddlast) at (1*\xpos,-2*\ypos) {$2n-1$};

\node[min_node] (s1) at (2*\xpos,0*\ypos) {$\s$};

\node[min_node] (even2) at (3*\xpos,2*\ypos) {$2$};
\node[min_node] (even4) at (3*\xpos,1*\ypos) {$4$};
\node[diamond, minimum size = 12mm] (evenvdots) at (3*\xpos,0*\ypos) {$\vdots$};
\node[min_node] (evenlastbefore) at (3*\xpos,-1*\ypos) {$2n-2$};
\node[min_node] (evenlast) at (3*\xpos,-2*\ypos) {$2n$};

\node[invisible_node] (beloweven-2) at (4*\xpos,2*\ypos) {};
\node[invisible_node] (beloweven-1) at (4*\xpos,1*\ypos) {};
\node[invisible_node] (beloweven0) at (4*\xpos,0*\ypos) {};
\node[invisible_node] (beloweven1) at (4*\xpos,-1*\ypos) {};
\node[invisible_node] (beloweven2) at (4*\xpos,-2*\ypos) {};

\foreach \i in {-2,-1,0,1,2}
  \draw[->,dashed] (aboves0\i) edge (s0);

\draw[->] (s0) edge   node[midway, edge_label] {2} (odd1);
\draw[->] (s0) edge   node[midway, edge_label] {4} (odd3);
\draw[->] (s0) edge   node[midway, edge_label] {$2n-2$} (oddlastbefore);
\draw[->] (s0) edge   node[midway, edge_label] {$2n$} (oddlast);
\draw[->] (s0) edge  (oddvdots); 

\draw[->] (odd1) edge  node[midway, edge_label] {0}(s1);
\draw[->] (odd3) edge  node[midway, edge_label] {0}(s1);
\draw[->] (oddlastbefore) edge  node[midway, edge_label] {0}(s1);
\draw[->] (oddlast) edge  node[midway, edge_label] {0}(s1);
\draw[->] (oddvdots) edge node[midway, edge_label] {0}  (s1); 

\draw[->] (s1) edge  node[midway, edge_label2] {\hspace{-1cm}-1}  (even2);
\draw[->] (s1) edge  node[midway, edge_label2] {-3} (even4);
\draw[->] (s1) edge  node[midway, edge_label2] {$-2n+3$} (evenlastbefore);
\draw[->] (s1) edge  node[midway, edge_label2] {$-2n+1$} (evenlast);
\draw[->] (s1) edge (evenvdots); 

\draw[->,dashed] (even2) edge node[midway, edge_label] {0} (beloweven-2);
\draw[->,dashed] (even4) edge node[midway, edge_label] {0} (beloweven-1);
\draw[->,dashed] (evenvdots) edge node[midway, edge_label] {0} (beloweven0);
\draw[->,dashed] (evenlastbefore) edge node[midway, edge_label] {0} (beloweven1);
\draw[->,dashed] (evenlast) edge node[midway, edge_label] {0} (beloweven2) [bend right=15];

\end{tikzpicture}
	\caption{Game $\game_n$ from \Cref{def:game_n} with odd states up to $2n-1$
          and even states up to $2n$.}
	\label{fig:gn}
\end{figure}

The following definition formally defines the class of games in
\Cref{fig:gn}.

\begin{definition}\label{def:game_n}
For each $n \in \Z_+$ let
$\odd^n \eqdef \setcomp{x \in \Z_+}{x \le 2n \,\land\, x \mod 2 = 1}$ and
$\even^n \eqdef \setcomp{x \in \Z_+}{x \le 2n \,\land\, x \mod 2 = 0}$ and
$\game_n \eqdef
	\tuple{\states^n,\tuple{\zstates^n,
			\ostates^n, \rstates^n}, \transition^n,
		\probp^n}$ where
	\begin{enumerate}
		\item $\zstates^n \eqdef \set{t} \uplus \odd^n$,
                  $\ostates^n \eqdef \set{\s} \uplus \even^n$,
                  $\rstates^n \eqdef \emptyset$, consequently $\probp^n$ is
		      trivial.
		\item $\transition^n \eqdef t \times \odd^n \cup \odd^n \times
		      \s \cup \s \times \even^n \cup \even^n \times
		      t$
	\end{enumerate}
        For $n \in \Z_{+}$, state $n$ has color $n$
        and $\coloring\lrc{\s} = \coloring\lrc{t} \eqdef 1$.
	The reward function $r^n$ on the edges is defined as 
	$
        r^n\lrc{\tuple{t,i}} \eqdef i+1, \quad
	r^n\lrc{\tuple{i,\s}} \eqdef 0, \quad
	r^n\lrc{\tuple{\s,j}} \eqdef -j+1, \quad
	r^n\lrc{\tuple{j,t}} \eqdef 0.
      $
\end{definition}

\begin{proof}[Proof of \cref{itm:lower_bound_mppar}(\cref{thm:improved_mppar_games}).]
It is clear from~\cref{def:game_n}
that $\size{\game_n} = \Thetacompl\lrc{n}$ and there are $n$ even colors
in $\game_n$ for every $n \ge 2$.
Min (resp.\ Max) controls only one state $\s$ (resp.\ $t$), where it has a non-trivial choice.
Max can win $\mppar$ surely from state $s$ (and thus from every other state)
by the strategy that, at state $t$, copies Min's most recent observed choice at state $\s$. I.e., if Min's
last step was $\s \to x$ for some even $x$ then Max chooses $t \to (x-1)$.
(This Max strategy requires $n$ memory modes.)
All induced plays trivially satisfy $\Eparity$. Moreover, the total reward between
consecutive visits to state $s$ is always $-(x-1)+x = 1$, and thus
$\MP >0$ is also satisfied surely.

However, we show that all Max strategies with $<n$ private memory modes are worthless.

Towards a contradiction, assume that there exists a Max strategy $\zstrat$
from state $\s$ in $\game_n$ with a set of private memory modes $\memconfset$ where
$|\memconfset| < n$ and
$\inf_\ostrat\Prob[\game_n][\zstrat,\ostrat,\s]{\mppar} > 0$.

First we show, by induction on $k$, the following property $P(k)$
for every $1\le k \le n-1$: 
There exists
a subset $\memconfset_k$ of the memory modes, such that $|\memconfset_k| \ge k$
and for every $\memconf \in \memconfset_k$, the support of $\zstrat[\memconf](t)$
is limited to the $k$ lowest options, i.e., $\subseteq \{1,3,\dots,2k-1\}$.

\textbf{Base case $k=1$}:
Let $\ostrat$ be the Min strategy that always chooses the lowest option $2$.
If, for every $\memconf \in \memconfset$, $\zstrat[\memconf](t)$
includes options \emph{different from} $1$, then
$\Prob[\game_n][\zstrat,\ostrat,\s]{\mppar}
\le \Prob[\game_n][\zstrat,\ostrat,\s]{\Eparity} = 0$, a contradiction.
Hence there must exist at least one memory mode
$\memconf \in \memconfset_1$ such that the support of $\zstrat[\memconf](t)$
is limited to option $1$.

\textbf{Induction step $k-1$ to $k$}:
Let $\ostrat$ be the Min strategy that always chooses the option $2k$.
Consider the long-run behavior of the finite-state Markov chain induced by
$\game_n$, $\zstrat$ and $\ostrat$.
In the runs where Max's memory mode at $t$ is eventually always
contained in $\memconfset_{k-1}$, the total payoff between consecutive
visits to $\s$ is $\le -1$ and therefore $\MP < 0$ and thus these runs
are losing for $\mppar$.
Hence there must exist a non-null set of runs where Max's memory mode
at $t$ is infinitely often in $\memconfset \setminus \memconfset_{k-1}$.
Suppose that for all $\memconf \in \memconfset \setminus \memconfset_{k-1}$
the support of $\zstrat[\memconf](t)$ is \emph{not} limited to the
$k$ lowest options $\{1,\dots,2k-1\}$.
Then $\Prob[\game_n][\zstrat,\ostrat,\s]{\mppar}
\le \Prob[\game_n][\zstrat,\ostrat,\s]{\Eparity} = 0$, a contradiction.
Therefore there exists at least one
$\memconf \in \memconfset \setminus \memconfset_{k-1}$
such that the support of $\zstrat[\memconf](t)$ is limited to the
$k$ lowest options $\{1,\dots,2k-1\}$.
Thus $\memconfset_k \supseteq \{\memconf\} \uplus \memconfset_{k-1}$
and, by induction hypothesis,
$|\memconfset_k| \ge 1 + |\memconfset_{k-1}| \ge 1+(k-1) = k$.
This concludes the induction step.

For $k=n-1$, property $P(n-1)$ yields the required contradiction.
Since $|\memconfset| \le n-1$, we have $\memconfset = \memconfset_{n-1}$.
Thus the support of $\zstrat(t)$ never includes
the highest option $2n-1$. Let $\ostrat$ be the Min strategy that always
chooses the highest option $2n$. Then
$\Prob[\game_n][\zstrat,\ostrat,\s]{\mppar} \le
\Prob[\game_n][\zstrat,\ostrat,\s]{\MP >0} = 0$.
\end{proof}

\ignore{
\begin{proof}[Proof~\cref{itm:lower_bound_mppar}(~\cref{thm:improved_mppar_games}).]
	It is clear from~\cref{def:game_n}
	that $\size{\game^n} = \Thetacompl\lrc{n}$ and there are $n$ even colors
	in $\game^n$ for every $n \ge 2$. In $\game^n$, Min controls 
	only $1$ state, $\s_1$ where it has a non-trivial choice.
	Denote by $\optostrat_n$ the following Min strategy.
        At the beginning, choose one of the $n$ possible edges at $\s_1$
        uniformly at random, and then stick to this choice (i.e., always play
        this edge).
	Let $P_n$ denote the following
	statement for every $n \ge 2$.
	$$
		\text{For any Max strategy } \zstrat \text{ with } < n
		\text{ memory modes } \Prob[\game^n][\zstrat,\optostrat_n, \s_0]{\mppar} < 1
	$$
	The result follows if one shows $P_n$ for every $n \ge 2$. We prove
	it by induction. \\
	\textbf{Base case}: When $n=2$, Max has two choices and since 
	$\size{\zstrat} < 2$, this means Max can only use randomization and no
	memory. Let $p_3$ be the probability that Max goes to node $3$. If 
	$p_3 > 0$, then w.p. $1/2$, Min always plays $2$ 
	and hence Max loses $\Eparity$.
	If $p_3 = 0$, then w.p. $1/2$ as Min always plays $4$ for a reward of $-3$
	and Max loses $\MP > 0$. In either case, the statement $P_2$ is true.

	\textbf{Induction step}: Assume $P_n$ is true for all $2 \le n < k$.
	Consider the game $\game^k$ in which Min plays $\optostrat_k$ and Max plays a
	strategy $\zstrat = \memstrattuple$ such that $\card{\memconfset} < k$. 
	Let $\ostrat_i$ denote the
	MD Min strategy that always chooses to go to $2i$. W.l.o.g.\ assume
	$\card{\memconfset} = k-1 $ and that the memory modes are $\initmem, \ldots, 
	 \memconf_{k-2}$ with $\initmem$ being the start memory configuration. 
	Let $\mc^{i}_k$ denote the Markov chain derived from
	$\game^k$ by fixing $\zstrat$ and $\ostrat_i$. Consider any BSCC reachable
	from $\tuple{\s_0,\initmem}$ in $\mc^{i}_k$ and denote by 
	$\probp_i\lrc{\ell}$ the steady state probability that the memory 
	configuration is $\memconf_{\ell}$ at $\s_0$ in this BSCC. Also, let
	$\supp\lrc{\probp_i}$ denote all $\memconf_{\ell}$ such that 
	$\probp_i\lrc{\ell} > 0$. We show that there is at least one Markov chain
	$\mc^{i}_k$ and a reachable BSCC where Max almost surely loses $\mppar$. 
	To see why, consider
	the Markov chain $\mc^{k}_k$ and the $\supp\lrc{\probp_k}$ of the steady
	state distribution in one of its reachable BSCCs. If there is $i < k$
	such that $\supp\lrc{\probp_i}\, \cap \, \supp\lrc{\probp_k} \neq 
	 \emptyset$, then consider the two corresponding BSCCs in $\mc^i_k$
	and $\mc^k_k$. Let $\memconf_{x}$ be a common memory configuration in both
	of the supports. Then either $\memsuc\tuple{\s_0,\memconf_{x}}
	 \lrc{\tuple{\s_0,2k-1}} > 0$ or it is equal to $0$. In the former case,
	Max almost surely loses $\Eparity$ in $\mc^i_k$ and in the latter case,
	Max almost surely loses $\MP > 0$ in the BSCC of $\mc^k_k$. On contrast,
	if there is no such $i$, it means union of all the memory configurations
	used in $\mc^{i}_k$ for $i < k$ is $ < k-1$ and one can then use the 
	induction hypothesis.
\end{proof}
}

\section{Counterexamples for Lifting Randomized  Strategies}\label{sec:other-lifting}

A different method to lift strategies from
MDPs to SGs was described in \cite[Theorem 9]{GZ2009}.
Unlike \cite[Theorem 6.1]{GimbertKelmendi:IJGT20203},
it does not require that the objective
is shift-invariant inverse-submixing, but instead has stronger
prerequisites about the strategy complexity in MDPs (resp.\ 1-player games).
Given an objective, if both players have optimal memoryless
\emph{deterministic} strategies in all 
maximizing (resp.\ minimizing) MDPs, then they also have
optimal memoryless deterministic strategies in all SGs.
Under mild assumptions this can be generalized to
finite-memory deterministic strategies
\cite[Theorem 4.1]{Bouyer:LMCS2023}, in stochastic or non-stochastic arenas.

Such results do \emph{not} generalize to \emph{randomized} strategies by
the counterexamples below.

However, first note that there cannot exist any counterexample where the objective
is shift-invariant and submixing.
In that case \cite[Theorem 1.1]{GimbertKelmendi:IJGT20203} implies that Max has optimal
MD strategies in SGs.
This leaves the question whether there are counterexamples that
satisfy other strong properties, e.g., shift-invariant and inverse-submixing.

One counterexample was discussed in \cite[Section 4.4]{Bouyer:LMCS2023}.
For the $\MP=0$ objective, Max (resp.\ Min) have optimal MR (resp.\ MD)
strategies in deterministic 1-player games, but optimal Max strategies require
at least 1 bit of memory in deterministic 2-player games,
even if randomization is allowed.
\cite[Section 4.4]{Bouyer:LMCS2023} does not analyze the strategy complexity
of this objective in MDPs, but it is easy to show that Max (resp.\ Min) has optimal
MR (resp.\ MD) strategies even in MDPs.
It is also easy to extend this example with multiple rewards,
such that optimal Max strategies require more (but still finite)
memory.
However, there remains one downside.
While the $\MP=0$ objective is shift-invariant, it is
neither submixing nor inverse-submixing.
The plays/sequences constructed in the proof of \cite[Lemma 9]{BBE2010:arxiv}
(only in the full version on arXiv) provide counterexamples to either property.

Our lower bounds for the one-dimensional $\mppar$ objective
in \Cref{sec:mp_parity} show a slightly stronger property.
Max (resp.\ Min) have optimal MR (resp.\ MD) strategies even in MDPs,
but Max strategies in deterministic 2-player games can require
arbitrarily many memory modes (equal to the number of even colors),
even if randomization is allowed.
Moreover, the $\mppar$ objective is shift-invariant and inverse-submixing.

The multi-dimensional 
$\MP > \vec{0}$ objective considered in \Cref{sec:lifting1}
is also shift-invariant and inverse-submixing and
Max (resp.\ Min) have optimal MR (resp.\ MD) strategies even in MDPs.
However, optimal Max strategies in deterministic 2-player games can require
an \emph{exponential} (in the dimension) number of memory modes,
even if randomization is allowed.


\begin{figure}[t]
  \centering
\begin{minipage}{.47\textwidth}
  \centering
    \begin{tikzpicture}[node distance= 1.25cm and 2cm,
    acta/.append style={square,inner sep=1}
]

\node (x) at (0,-1) {};  
\node (y) at (0,2) {};  
\node[Gmax] (d) at (0,0) {$s_1$};
\node[Grand,left= of d] (s) {$s_0$};
\node[Gmax,right= of d] (t) {$s_2$};

\draw[->] (s) edge[loop above] node[above]{$\frac{1}{2},+1$} (s);
\draw[->] (s) edge[] node[above]{$\frac{1}{2},+1$} (d);
\draw[->] (d) edge[] node[above]{$0$} (t);
\draw[->] (d) edge[loop above] node[above]{$-1$} (d);
\draw[->] (t) edge[loop above] node[above]{$0$} (t);
\end{tikzpicture}
    \captionof{figure}{
      An MDP where every optimal Max strategy for the objective $W$ requires
      infinite memory. In the random state $s_0$ the successor
      is either $s_0$ or $s_1$, each with probability $1/2$, and the reward is
      $+1$. In the controlled state $s_1$ Max chooses between staying in $s_1$
      with reward $-1$ or going to $s_2$ with reward $0$.
      State $s_2$ is a sink with reward $0$.
      }
\label{fig:W-MDP}
\end{minipage}%
\quad
\begin{minipage}{.47\textwidth}
  \centering
    \begin{tikzpicture}[node distance= 1.25cm and 2cm,
    acta/.append style={square,inner sep=1}
]

\node[Gmax] (d) at (0,0) {$s_1$};
\node (x) at (0,-1) {};  
\node (y) at (0,2) {};  
\node[Gmin,left= of d] (s) {$s_0$};

\draw[->] (s) edge[loop above] node[above]{$-1$}(s);

\draw[->] (s) edge[bend left=50] node[above]{$-1$} (d);

\draw[->] (d) edge[bend left=50] node[below]{$+1$} (s);

\draw[->] (d) edge[loop above] node[above]{$+1$}(d);
\end{tikzpicture}
    \captionof{figure}{
      A deterministic 2-player game where every optimal Max
      strategy for the objective $\obj$ with $X \eqdef \{s_0\}$ requires
      infinite memory. State $s_0$ (resp.\ $s_1$) belongs
      to Min (resp.\ Max) with reward $-1$ (resp.\ $+1$).
      The players choose between staying in the current state or switching.
     \\ \quad
}
\label{fig:O-game}
\end{minipage}
\end{figure}


If one drops the requirement that Min has optimal MD strategies in MDPs,
then there exist even stronger counterexamples where Max requires
infinite memory in deterministic 2-player games.

An example in
\cite[Proposition 3.1.3]{Vandenhove:PhD}
considers the objective $W \eqdef W_1 \cup W_2$, where
$W_1 \eqdef \{c_1c_2 \dots \mid \liminf_{n \to \infty} \sum_{i=1}^n c_i = +\infty\}$ 
and $W_2 \eqdef
\{c_1c_2 \dots \mid \sum_{i=1}^n c_i = 0\ \mbox{for infinitely many $n$}\}$.
Both players have optimal finite-memory deterministic strategies in deterministic
1-player games, but Max needs infinite memory in deterministic
2-player games. However, note that the objectives $W_2$ and $W = W_1 \cup W_2$
are not shift-invariant. Moreover, \cite{Vandenhove:PhD} does not discuss
the strategy complexity of $W$ in MDPs, and
the example in \Cref{fig:W-MDP} shows that optimal Max strategies for $W$
already require infinite memory in MDPs. I.e.,
$W$ is not a counterexample for lifting strategies from MDPs to SGs.

\begin{restatable}{proposition}{WMDP}\label{prop:W-MDP}
Given the MDP in \Cref{fig:W-MDP}, Max has an optimal strategy that attains
the objective $W$ with probability $1$, but every FRR Max strategy is not
optimal.
\end{restatable}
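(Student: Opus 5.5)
We prove the two parts separately. First we exhibit an infinite-memory strategy that wins with probability $1$, so the value is $1$ and optimal means almost surely winning. Then we show that every FRR strategy wins with probability strictly below $1$.

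\emph{Understanding $W$ in this MDP.} A play stays in $s_0$ for a number $N \ge 1$ of steps, each with reward $+1$. This $N$ counts the $s_0$-transitions including the one into $s_1$, and $N = j$ has probability $2^{-j}$. Then $N<\infty$ almost surely, and the accumulated sum on entering $s_1$ equals $N$. From then on only Max acts. If Max stays in $s_1$ forever, the partial sums tend to $-\infty$; this violates $W_1$, and the sum hits $0$ at most once, so $W_2$ fails too. If Max stays $j$ times and then moves to $s_2$, the partial sums are eventually constant, equal to $N-j$. Then $W_1$ fails, and $W_2$ holds iff $N-j=0$. Hence a play is in $W$ iff Max leaves $s_1$ after exactly $N$ self-loops, where $N$ is the number of $s_0$-steps.

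\emph{Upper bound (value $1$).} Max uses an infinite-memory counting strategy. It counts the $+1$ rewards collected in $s_0$, takes the self-loop at $s_1$ until the running sum reaches $0$, and then moves to $s_2$. By the characterization above, every play in which $s_1$ is reached is won. The set of plays that stay in $s_0$ forever has measure $0$. So this strategy attains $W$ with probability $1$, and it is optimal.

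\emph{No FRR strategy is optimal.} Let $\zstrat$ be an FRR strategy with finite memory set $\memconfset$. For each $j \ge 1$, let $D_j \subseteq \memconfset$ be the set of memory modes that have positive probability at the moment the play first sits in $s_1$ (after the memory update on the transition into $s_1$), conditioned on $N=j$. Since $\Pr(N=j)>0$, each $D_j$ is nonempty.

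From $s_1$ onward the game is deterministic for Max alone. So the law of the number of self-loops before leaving depends only on the current memory mode $\memconf$, via the strategy $\zstrat[\memconf]$. Call this law $L_\memconf$; it may put mass on $\infty$.

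If $\zstrat$ were almost surely winning, the characterization would force $L_\memconf$ to be the Dirac distribution at $j$ for every $j$ and every $\memconf \in D_j$. Consequently the sets $D_j$ are pairwise disjoint: a mode in $D_j \cap D_{j'}$ would have to exit after exactly $j$ and exactly $j'$ self-loops almost surely. But then infinitely many pairwise disjoint nonempty subsets of the finite set $\memconfset$ would exist, a contradiction. Hence every FRR strategy wins with probability $<1$ and is not optimal.

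The main point requiring care is the disjointness step. One must argue that the continuation law from $s_1$ depends only on the memory mode held there, which uses the Mealy-machine semantics, including randomized updates. One must also condition on each $N=j$ separately, which is allowed because each event $N=j$ has positive probability.
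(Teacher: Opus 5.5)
Your proof is correct and follows essentially the same route as the paper. The counting strategy that loops exactly $n$ times after entering $s_1$ with total reward $n$ wins almost surely, and a pigeonhole argument over the finitely many memory modes held on entering $s_1$ rules out every FRR strategy; your pairwise disjointness of the supports $D_j$ plays the role of the paper's single mode $\memconf$ shared by two rewards $n_1 < n_2$, and you additionally spell out the Markov property of the state--memory product that the paper uses implicitly.
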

\newcommand{\WMDPproof}{
\begin{proof}
First, since $s_0$ is left almost surely, $W_1$ is a null set under all Max
strategies.

An optimal Max strategy plays as follows.
When $s_1$ is reached with total reward $+n$ (which happens with
probability $2^{-n}$), then it loops $n$ times in $s_1$
and then goes to $s_2$ where the total reward stays $0$ forever.
This satisfies $W_2$ (and thus $W$) with probability $1$.

Now consider an FRR Max strategy with $m$ memory modes.
Since $m$ is finite, there exists at least one memory mode $\memconf$
and numbers $n_2 > n_1 > 0$ such that the two events of entering
$s_1$ with memory mode $\memconf$ and total rewards $n_1$
and $n_2$ each have a nonzero probability.
Thus with nonzero probability either the state $s_2$ is reached with a
total reward $\neq 0$ or $s_2$ is never reached,
and hence $W$ is not satisfied almost surely.
\end{proof}
}
\WMDPproof

We now present a stronger counterexample
where the objective is shift-invariant 
and both Max and Min each have
optimal MR (or alternatively, FDD) strategies
in all MDPs, but optimal Max
strategies still require infinite memory
(even if randomization is allowed) in deterministic 2-player games.

Let 
$\obj_1 \eqdef \{c_1c_2 \dots \mid \limsup_{n \to \infty} \sum_{i=1}^n c_i > -\infty\}$.
Let $X \subseteq \states$ be a subset of the states, e.g., defined via a
coloring function.
Our objective $\obj$ combines $\obj_1$ with B\"uchi and co-B\"uchi objectives
wrt.\ $X$. Let
$
\obj \eqdef \eventually\always X\ \vee\ ((\always\eventually X) \wedge \obj_1)
$.  
The objective $\obj$ is shift-invariant, but neither submixing
nor inverse-submixing.

\begin{restatable}{proposition}{OMRmax}\label{prop:O-MR-max}
Max has optimal MR (and FDD) strategies for $\obj$ in maximizing MDPs.
\end{restatable}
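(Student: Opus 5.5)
The plan is the usual end-component decomposition used for \Cref{thm:MR_mppar_MDP}, adapted to $\obj$.

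\textbf{Step 1: the limit behaviour of a run.} By \cite[Theorem~3.2]{D-A1997}, under any Max strategy almost every run eventually stays in an end component $D$ and uses exactly the edges of $D$ infinitely often. Since $\obj$ is shift-invariant, whether such a run satisfies $\obj$ depends only on its tail inside $D$.

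\textbf{Step 2: which end components are winning.} I call an end component $D$ \emph{winning} in one of two cases.
\begin{enumerate}
\item $D \subseteq X$. Then every run confined to $D$ satisfies $\eventually\always X$.
\item $D \cap X \neq \emptyset$, and there is a stationary state-action frequency vector (a flow) $f$ with full support on $D$ and $r\cdot f \ge 0$.
\end{enumerate}

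\textbf{Step 3: Markov-chain lemma.} In a finite irreducible Markov chain with edge rewards and mean payoff $m$:
\begin{itemize}
\item if $m>0$, the partial sums tend to $+\infty$ almost surely;
\item if $m<0$, they tend to $-\infty$ almost surely;
\item if $m=0$, either every cycle has weight $0$ (so the partial sums are bounded), or $\limsup = +\infty$ almost surely.
\end{itemize}
The last case follows from recurrence of a zero-drift Markov additive walk with bounded increments, via the cycle decomposition of partial sums. Consequently $\obj_1$ holds almost surely if and only if $m \ge 0$.

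\textbf{Step 4: sufficiency with an MR strategy.} In a winning end component of the second kind I take the MR strategy induced by the flow $f$. It is irreducible on $D$, since $f$ has full support, and so it visits $X$ infinitely often. Its mean payoff is $r\cdot f \ge 0$, so $\obj_1$ holds almost surely by Step~3. In a winning end component of the first kind, any MR strategy that stays in $D$ works. Outside the union $\mathcal C$ of winning end components, Max plays a uniform MD attractor strategy for $\eventually\,\mathcal C$. This yields an almost surely winning MR strategy from every state that can almost surely reach $\mathcal C$. To get optimality beyond the almost-sure case, I would first note that by Step~1 and shift-invariance the value of $\obj$ equals the maximal probability of reaching $\mathcal C$. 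This holds provided Step~5 shows that almost no run satisfies $\obj$ while ending in a non-winning end component. Optimal reachability strategies are MD, so the combined strategy is MR and optimal.

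\textbf{Step 5: necessity (the main obstacle).} I must show that, under any strategy, the set of runs ending in a non-winning end component $D$ and satisfying $\obj$ is null. If $D \not\subseteq X$, $\eventually\always X$ fails on these runs, so they would need $\always\eventually X$ and $\obj_1$. Along a subsequence where the partial sums stay bounded below, limit points of the empirical frequencies are flows $g$ with $r\cdot g \ge 0$, but such a $g$ need not have full support. My first attempt is to mix $g$ with a full-support flow $h$: then $r\cdot(g+\eta h) \ge 0$ for small $\eta$ whenever $r\cdot g>0$. The hard subcase is $r\cdot g = 0$ with $r\cdot h<0$. There I would argue probabilistically that repeatedly leaving the support of $g$ to visit $X$ incurs a negative drift that is not compensated. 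This should force the sums to $-\infty$ almost surely, via a supermartingale built from the potential (bias) function of the mean-payoff LP dual on $D$. This step is where I expect the real work.

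\textbf{Step 6: the FDD strategy.} The randomized mixing of Step~4 is replaced by deterministic finite-memory switching. In a winning end component of the second kind, Max plays an MD strategy realizing the mean-payoff value. Every $N$ steps (a counter of size $N$) he runs the MD attractor strategy to $X$ once, choosing $N$ large enough that the average stays $\ge 0$. The zero-value case again needs the cycle/flow argument of Step~5, now to show that a suitable deterministic periodic schedule keeps the partial sums bounded below.
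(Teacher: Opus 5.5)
\textbf{The gap is in Step~5.} Steps~3 and~4 are essentially sound, up to routine care with overlapping winning end components. Step~5, however, is not merely hard: its claim is false.

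Take the maximizing MDP with a Max state $a$ that either moves to $x$ with reward $-1$ (and $x$ returns to $a$ with reward $0$), or moves to a random state $g$. From $g$ the play goes to $g^{+}$ or $g^{-}$ with probability $\frac12$ each, with rewards $+1$ resp.\ $-1$, and both return to $a$ with reward $0$. Let $X=\{x\}$. The whole MDP is an end component $D\not\subseteq X$ that meets $X$. The two gadget edges always carry equal flow, so every full-support flow has $r\cdot f=-f(a,x)<0$, and $D$ is non-winning in your sense.

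Yet Max wins $\obj$ almost surely. At $a$, he goes to $x$ exactly when the current total reward is $\ge 1$, and to $g$ otherwise. By recurrence of the simple random walk, the total climbs back to $1$ in finite time almost surely. Hence $x$ is visited infinitely often, and the total is $0$ on every arrival at $x$, so $\limsup_n\sum_{i\le n}c_i\ge 0$. Every edge of $D$ is also used infinitely often, so these runs end in $D$ and satisfy $\obj$.

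Your supermartingale idea cannot rule this out. $\obj_1$ only constrains the $\limsup$, and the strategy controls the compensator. Here the compensator is the number of visits to $x$, which is just the running maximum of the gadget walk. So the negative drift grows more slowly than the martingale's upward fluctuations.

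\textbf{The statement itself fails on this example, so the gap cannot be closed.} Fix any finite-memory Max strategy (FRR, in particular MR or FDD). The induced finite Markov chain almost surely ends in a bottom SCC.
\begin{itemize}
\item If that SCC avoids $x$, both $\eventually\always X$ and $\always\eventually X$ fail.
\item If it contains $x$, its mean payoff is minus the long-run frequency of the edge $a\to x$, which is negative. The sums therefore tend to $-\infty$, so $\obj_1$ fails, and $\eventually\always X$ fails since $x$ always returns to $a$.
\end{itemize}
So every finite-memory strategy wins with probability $0$, while the value is $1$.

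The paper's own sketch breaks at the matching spot. In its second case there is indeed a strategy that returns to $x$ almost surely with total reward $0$ per excursion, but only with infinite expected return time. Any finite-memory strategy that returns to $x$ almost surely has finite expected return time. By optional stopping, its expected excursion reward is then exactly $-1$. Accordingly, an MD strategy ``for the optimal expected total reward'' either never returns to $x$ or pays $-1$ per return. Your Step~6 runs into the same obstruction.
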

\newcommand{\OMRmaproof}{
\begin{proof}
  Given a maximizing MDP, we can consider the end components wrt.\ an
  optimal strategy. For each winning end component $E$ there are several
  possible cases.

  If $\eventually\always X$ is satisfied almost surely in $E$
  then an MD strategy is sufficient inside $E$, since that
  is a co-B\"uchi objective.

  Otherwise, if it is possible to satisfy
  $((\always\eventually X) \wedge \obj_1)$ almost surely in $E$,
  then there are two cases.
  In the first case, it is possible to satisfy even the stronger
  objective $((\always\eventually X) \wedge \MP>0)$ almost surely in $E$.
  This objective is a special case of $\mppar$ and thus an MR
  (or alternatively FDD) strategy
  is sufficient by \Cref{thm:MR_mppar_MDP}.
  In the second case, it is possible to almost surely satisfy
  $((\always\eventually X) \wedge \obj_1)$ but not
  $((\always\eventually X) \wedge \MP>0)$ inside $E$.
  There must exist a state $x \in X$ inside $E$, such
  that one can satisfy $((\always\eventually x) \wedge \obj_1)$.
  Since $E$ is finite,
  it must be possible to satisfy
  $((\always\eventually x) \wedge \MP=0)$ almost surely.
  Thus, there must exist a strategy from state $x$ such that $x$ is re-visited
  almost surely with an expected total reward $=0$.
  Playing such a strategy repeatedly is also sufficient to satisfy
  $((\always\eventually x) \wedge \obj_1)$ almost surely.
  Thus it suffices to play an MD strategy for the optimal expected total
  reward (paid out upon visiting $x$) inside $E$.

  Hence inside every winning end component an MD strategy
  or an MR (resp.\ FDD) strategy suffices.
  Elsewhere, we can fix an MD strategy that maximizes the chance of
  reaching a winning end component.
  Altogether this yields an optimal MR strategy (resp.\ FDD strategy).
\end{proof}
}
\OMRmaproof

\begin{restatable}{proposition}{OMRmin}\label{prop:O-MR-min}
Min has optimal MR (and FDD) strategies for $\obj$ in minimizing MDPs.
\end{restatable}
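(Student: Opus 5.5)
We work with Min's objective, the complement of $\obj$. Expanding,
$$\complementof{\obj} \;=\; \eventually\always \complementof{X} \;\vee\; \bigl(\always\eventually X \wedge \always\eventually \complementof{X} \wedge \limsup_{n}\textstyle\sum_{i=1}^n c_i = -\infty\bigr),$$
and Min, as the only player, maximizes the probability of this event. As in \Cref{thm:MR_mppar_MDP}, the plan is to use the fact that almost every run eventually stays in an end component~\cite[Theorem~3.2]{D-A1997}. We classify end components $E$ as \emph{Min-winning} if Min can win $\complementof{\obj}$ almost surely from every state of $E$ using only the transitions of $E$. We then build a strategy that plays an MD strategy maximizing the probability of reaching the union of Min-winning end components, and inside each such component plays a local MR (resp.\ FDD) strategy. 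Optimality of this composition follows once we have the following characterization: an end component $E$ is Min-winning iff it contains a sub-end component $E'$ such that either (a) $E'$ contains no state of $X$, or (b) $E'$ contains states from both $X$ and $\complementof{X}$ and the minimal achievable mean payoff in $E'$ is $<0$.

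\emph{Sufficiency.} In case (a), an MD strategy that reaches $E'$ and stays in it wins $\eventually\always\complementof{X}$, which is a co-B\"uchi objective. In case (b), let $\xi$ be an MD strategy attaining mean payoff $-\delta<0$ in $E'$, and let $u$ be the uniform memoryless strategy on the edges of $E'$.
\begin{itemize}
\item MR version: play the mixture $(1-\eps)\xi + \eps u$. Every edge of $E'$ is used with positive probability, so both $X$ and $\complementof{X}$ are visited infinitely often. By continuity of the stationary distribution in $\eps$, the mean payoff is still $<0$ for small $\eps$, and hence $\sum_{i\le n} c_i \to -\infty$ almost surely.
\item FDD version: use a finite cyclic schedule. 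Play $\xi$ for $N$ steps, then run MD reachability strategies inside $E'$ to visit $X$ and then $\complementof{X}$, and repeat. Each detour has bounded expected length and cost, so for large $N$ the long-run average stays negative.
\end{itemize}

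\emph{Necessity (main obstacle).} Suppose neither (a) nor (b) holds for any sub-end component $E'$ of $E$. Under an arbitrary, possibly infinite-memory, Min strategy, the runs that satisfy $\always\eventually X \wedge \always\eventually\complementof{X}$ eventually stay in some sub-end component $E'$ that contains both kinds of states. In such an $E'$ the minimal mean payoff is $\ge 0$. Mean-payoff optimality equations then give a potential $h$ on the states of $E'$ such that:
\begin{itemize}
\item for every Min edge $s\to s'$ of $E'$ we have $r(s,s')+h(s')-h(s)\ge 0$;
\item at every random state the expectation of $r(s,s')+h(s')-h(s)$ is $\ge 0$.
\end{itemize}
Consequently $\sum_{i\le n}c_i + h(s_n)$ is a submartingale with bounded increments while the run stays in $E'$. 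Such a process almost surely either converges or has $\limsup=+\infty$, so $\limsup_n \sum_{i \le n} c_i = -\infty$ has probability $0$. The remaining runs that stay forever in an $X$-free sub-end component would need (a) to hold. So no run set of positive probability can win inside $E$, and $E$ is not Min-winning.

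The delicate points here are the existence of $h$, which follows from the gain/bias equations of finite MDPs once the gain is $\ge 0$ everywhere in $E'$, and the conditioning on the event of staying in a fixed $E'$. I would handle the latter by applying the submartingale argument to the process stopped on leaving $E'$, started at each time $k$, and taking a countable union over $k$ and over the finitely many $E'$.
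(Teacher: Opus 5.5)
Your proof is correct and has the same skeleton as the paper's: end components, an MD strategy for the co-B\"uchi part, reduction of $\always\eventually\overline{X}\wedge\overline{\obj_1}$ to a negative mean payoff, and MD reachability outside the winning components. The difference is in how the key step is justified. The paper only asserts that, in a finite MDP, $\always\eventually\overline{X}\wedge\overline{\obj_1}$ can be won almost surely iff $\always\eventually\overline{X}\wedge\MP<0$ can. It then obtains MR strategies by applying \Cref{thm:MR_mppar_MDP} to the negated rewards with a B\"uchi coloring. You instead prove the nontrivial direction, using the bias potential $h$ of the communicating sub-end component and the bounded-increment submartingale dichotomy. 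That argument is exactly what rules out $\sum_i c_i\to-\infty$ under infinite-memory strategies when the minimal gain is $\ge 0$. You also construct the MR and FDD strategies by hand. So your argument is self-contained, while the paper's is shorter because it relies on citations.

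One step needs tightening. The stationary distribution is \emph{not} continuous at $\eps=0$ in general, because $\xi$ may have several BSCCs and transient states in $E'$; this is the singular-perturbation issue raised in the proof of \Cref{claim:max_colour_even}. If $\xi$ attained $-\delta$ only from some states and had a BSCC with positive mean payoff, the mixture could have positive mean payoff for every small $\eps$. The fix is to take $\xi$ greedy for your $h$, i.e.\ $r(s,\xi(s))+h(\xi(s))-h(s)=-\delta$ at every Min state of $E'$. Then under $(1-\eps)\xi+\eps u$ the expected one-step increment of $\sum_{i\le n}c_i+h(s_n)$ is at most $-\delta+\eps(\delta+R+2\max_s\abs{h(s)})<0$ at every state, which gives a negative mean payoff directly. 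The same $h$ also yields the start-state-independent bound $-\delta N+2\max_s\abs{h(s)}$ on the expected reward of each $N$-step $\xi$-phase, which is what your FDD schedule needs.
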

\newcommand{\OMRminproof}{
\begin{proof}
  It suffices to show the existence of optimal Max MR (and FDD) strategies
  for the complement objective
  $\overline{\obj}
  =
  \always\eventually\overline{X}\ \wedge\ ((\eventually\always\overline{X}) \vee \overline{\obj_1})
  =
  (\eventually\always\overline{X}) \vee (\always\eventually\overline{X} \wedge \overline{\obj_1})$.

  Like in the proof of \Cref{prop:O-MR-max}, we can consider the strategies
  in winning end components $E$.
  For the co-B\"uchi objective $\eventually\always\overline{X}$, an MD strategy
  in $E$ suffices.
  Otherwise, in finite-state MDPs, we can win 
  $(\always\eventually\overline{X} \wedge \overline{\obj_1})$ almost surely
  iff we can win
  $(\always\eventually\overline{X} \wedge \MP <0)$ almost surely.
  For this an MR
  (or alternatively FDD) strategy
  is sufficient by \Cref{thm:MR_mppar_MDP}.
  
  Hence inside every winning end component an MD strategy
  or an MR (resp.\ FDD) strategy suffices.
  Elsewhere, we can fix an MD strategy that maximizes the chance of
  reaching a winning end component.
  Altogether this yields an optimal MR strategy (resp.\ FDD strategy).
\end{proof}
}
\OMRminproof

However, in deterministic 2-player games, optimal Max strategies
for $\obj$ require infinite memory.

\begin{restatable}{proposition}{Oinfgames}\label{prop:O-inf-games}
  Consider the deterministic 2-player game $\game$ from \Cref{fig:O-game},
  and objective $\obj$ with $X \eqdef \{s_0\}$.
  Max can win objective $\obj$ surely, but every FRR Max strategy
  cannot guarantee any positive probability for $\obj$.
\end{restatable}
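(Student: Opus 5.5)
The proof has two parts: an explicit infinite-memory strategy with which Max wins surely, and a single Min strategy (independent of Max's private memory) against which every FRR Max strategy wins with probability $0$.

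\textbf{Upper bound (Max wins surely).} Max keeps an unbounded counter holding the current total reward $\sum_{i\le n} c_i$. In $s_1$ he loops (reward $+1$ each time) until the total is $\ge 0$, and only then moves to $s_0$. Every Min excursion in $s_0$ is finite or infinite.
\begin{itemize}
\item If some excursion in $s_0$ is infinite, then $\eventually\always X$ holds.
\item Otherwise $s_0$ is visited infinitely often, so $\always\eventually X$ holds. Each visit to $s_1$ lasts only finitely many steps, because the deficit is finite. Every departure from $s_1$ happens at a moment when the partial sum is $\ge 0$, and there are infinitely many such moments, so $\limsup_n \sum_{i\le n} c_i \ge 0 > -\infty$ and $\obj_1$ holds.
\end{itemize}
Hence every play consistent with this strategy satisfies $\obj$.

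\textbf{Lower bound, setup.} Fix an FRR Max strategy $\zstrat$ with finite mode set $\memconfset$. While the play loops in $s_1$, Max's memory evolves as a finite Markov chain on $\memconfset$; in each mode $\zstrat$ has some probability of leaving to $s_0$. Let $L_\memconf$ be the number of loops at $s_1$ when $s_1$ is entered in mode $\memconf$. Classify the modes of this chain on $\memconfset \cup \{\mathrm{exit}\}$.
\begin{itemize}
\item The event $L_\memconf=\infty$ corresponds to entering a closed class of modes from which leaving $s_1$ has probability $0$.
\item The event $L_\memconf<\infty$ corresponds to passing through transient classes and then exiting.
\end{itemize}
Standard finite Markov chain theory then gives constants $C$ and $r<1$, uniform over all $\memconf \in \memconfset$, with
$$\Pr(n < L_\memconf < \infty) \le C r^n \quad \text{for all } n.$$
This uniform geometric tail is the key fact that finite memory imposes. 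It survives randomized updates and private memory, since it is a statement about the finite chain itself.

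\textbf{Lower bound, Min's counter-strategy.} Min uses a deterministic strategy that ignores Max entirely. On the $i$-th visit to $s_0$ she loops $i$ times (reward $-1$ each) and then moves to $s_1$ (reward $-1$). Since Min always leaves $s_0$, $\eventually\always X$ fails surely. On the event that Max stays in $s_1$ forever at some round, $\always\eventually X$ fails, so that event is losing.

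On the complementary event every $L^{(i)}$ is finite, where $L^{(i)}$ is Max's loop count in round $i$. Conditioned on any entry mode, $\Pr(i/2 < L^{(i)} < \infty) \le C r^{i/2}$, and this is summable in $i$. By Borel--Cantelli, almost surely $L^{(i)} \le i/2$ for all but finitely many rounds. Round $i$ contributes a net reward of at most $(L^{(i)}+1) - (i+1) \le -i/2$ for large $i$. Hence $\sum c_i \to -\infty$ almost surely, and the partial sums inside a round only move monotonically between round endpoints, so this forces $\limsup = -\infty$. Thus $\obj_1$ fails almost surely on this event, and overall $\Prob[\game][\zstrat,\ostrat,s_0]{\obj}=0$.

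\textbf{Main obstacle.} The step I expect to need the most care is the uniform geometric tail bound for $L_\memconf$ conditioned on finiteness. Randomized memory updates and the fact that Min cannot observe the mode might seem to interfere. I would handle this with the finite-chain decomposition into transient and closed classes described above, uniformly over the finitely many starting modes. Borel--Cantelli then applies regardless of the history-dependent distribution of entry modes, because the bound holds for every mode simultaneously.
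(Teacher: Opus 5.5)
Your proof is correct. The upper-bound half matches the paper's: the paper loops $n-1$ times so that the total returns to exactly $0$, and your overshoot by $1$ is harmless. Your lower bound, however, takes a genuinely different route. The paper fixes $\zstrat$ and lets Min play an MR strategy that leaves $s_0$ with probability $p/2$, where $p$ is the least \emph{positive} one-step exit probability $p(\memconf)$ of $\zstrat$ at $s_1$. It discards the event $E$ that $s_1$ is visited in a mode with $p(\memconf)=0$. On $\overline{E}$ it uses a stationary-distribution estimate: the time fraction in $s_0$ is at least $2/3$, so the mean payoff is at most $-1/3$. Your Min strategy has unbounded memory but is universal, i.e., independent of $\zstrat$. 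The only consequence of finite memory you use is the uniform geometric tail of Max's finite sojourns at $s_1$, followed by Borel--Cantelli. The paper's route, when it works, additionally shows that a memoryless randomized Min strategy suffices against each fixed FRR Max strategy.

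Your route is more robust exactly where the paper's is fragile. Memory updates on the loop $s_1\to s_1$ let a mode with $p(\memconf)=0$ pass to one with $p(\memconf)>0$, so $E$ does not force Max to stay in $s_1$. For instance, take the two-mode FDD strategy ``loop once at $s_1$, then leave''. It has $p=1$ and makes $E$ occur almost surely. Yet against Min leaving with probability $1/2$, the net reward per round is $2-G$ with $G$ geometric of mean $2$. This is a mean-zero nondegenerate random walk, so $\always\eventually X$ and $\obj_1$, hence $\obj$, hold almost surely. So the paper's claim that $\obj$ fails on $E$ is false in general.

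The MR approach can be repaired by letting Min leave $s_0$ with probability below $1/K$. Here $K$ bounds Max's expected sojourn time at $s_1$ from every mode from which he leaves $s_1$ almost surely, and that is exactly the quantity your tail bound controls. Because you bound the whole sojourn length rather than the one-step exit probability, your argument covers this case automatically.
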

\newcommand{\Oinfgamesproof}{
\begin{proof}
Towards the first part, we define an optimal Max strategy
that keeps track of the total reward and plays
as follows.
Whenever $s_1$ is reached with some negative total reward $-n$,
Max loops $n-1$ times at $s_1$
and then goes back to $s_0$, which makes the total reward $0$.
This ensures objective $\obj$ against any Min strategy as follows.
In plays where Min eventually stays in $s_0$ forever
we have $\eventually\always X$ and thus $\obj$.
Otherwise, $s_0$ and $s_1$ are both visited infinitely often and
the $\limsup$ of the total reward is $0$, and thus $\obj$ holds.

Towards the second part, we
consider an FRR Max strategy $\zstrat$ with $m$ private memory modes,
starting in memory mode $\memconf_0$.
We will construct a Min strategy $\ostrat$ such that
$\probm^{\game}_{\zstrat,\ostrat,s_0}(\obj)=0$.

For every memory mode $\memconf$ let $p(\memconf)$ be probability
that, in state $s_1$ and memory mode $\memconf$, in the next step
$\zstrat$ returns to $s_0$.
Let $p \eqdef \min\{p(\memconf) \mid p(\memconf)>0\} > 0$.
Let $\ostrat$ be the MR Min strategy that in $s_0$ goes to
$s_1$ with probability $p/2$ and to $s_0$ with probability $1-p/2$.
Fixing the strategies $\zstrat,\ostrat$ yields a Markov chain with
$2m$ states and initial state $(s_0,\memconf_0)$.

Let $E$ be the event that $s_1$ is visited when $\zstrat$ is in some
memory mode $\memconf$ with $p(\memconf)=0$.

Conditioned under $E$, the properties $\eventually\always X$
and $\always\eventually X$ 
are not satisfied and thus $\obj$ is not satisfied, i.e.,
$\probm^{\game}_{\zstrat,\ostrat,s_0}(\obj \cap E) =0$.
If $\overline{E}$ is a null set then this shows the required property.

Otherwise, conditioned under $\overline{E}$, both states $s_0$ and $s_1$
are visited infinitely often almost surely, since $p > p/2 > 0$.
In the steady state, the probability $\alpha$ of being in $s_0$
satisfies $\alpha \ge \alpha (1-p/2) + (1-\alpha)p$,
and therefore $\alpha \ge 2/3$.
Hence the (conditional under $\overline{E}$)
expected mean payoff is $\le \alpha(-1)+(1-\alpha) \le -1/3$.
It follows that $\probm^{\game}_{\zstrat,\ostrat,s_0}(\MP < 0 \mid \overline{E})=1$
and thus $\probm^{\game}_{\zstrat,\ostrat,s_0}(\obj_1 \mid \overline{E})=0$.
Moreover have have
$\probm^{\game}_{\zstrat,\ostrat,s_0}(\eventually\always X)=0$, since $p/2 >0$
and thus
$\probm^{\game}_{\zstrat,\ostrat,s_0}(\obj \cap \overline{E}) = 0$.

Finally, $\probm^{\game}_{\zstrat,\ostrat,s_0}(\obj)
=
\probm^{\game}_{\zstrat,\ostrat,s_0}(\obj \cap E) +
\probm^{\game}_{\zstrat,\ostrat,s_0}(\obj \cap \overline{E})
=0$.
\end{proof}
}

\Oinfgamesproof

\section{Conclusion and Open Questions}\label{sec:conclusion}

While finite-memory Max strategies for shift-invariant inverse-submixing
objectives can be lifted from MDPs to 2-player stochastic games
by \cite[Theorem 6.1]{GimbertKelmendi:IJGT20203},
this yields doubly exponentially many extra memory modes in general and 
exponentially many extra memory modes when lifting memoryless randomized
strategies.
In the latter case, this exponential extra memory cannot be avoided in general
(\Cref{thm:trigger-lower}).

An open question concerns the exact trade-off between memory and attainment,
i.e., how good can randomized strategies with a sub-optimal number of memory
modes be in the worst case, relative to strategies with sufficient memory.

The mean-payoff-parity objective $\mppar$ in \Cref{sec:mp_parity}
provides an interesting example where randomization in strategies
drastically reduces the amount of memory required (from doubly 
exponential to linear), but does not eliminate the need for memory
entirely.

An open question is whether low-memory strategies for $\mppar$ require
randomized memory updates, or whether just randomized actions and
deterministic memory updates would suffice.


\newpage
\bibliography{conferences,journals,ref1}

\newpage
\appendix
\section{Appendix for~\cref{sec:prelim}}\label{app:prelim}

\subparagraph{Reachability.} We use the syntax and semantics of the LTL
operators~\cite{CGP:book}
\mbox{$\eventually$ (eventually)} and $\always$ (always)
to specify some conditions on plays.
A \emph{reachability objective} is defined by a set of target states
$\reachset \subseteq \states$. A play $\play = \s_0s_1 \ldots$
belongs to $\eventually\,\reachset$ iff $\exists i \in \nat\, \s_i
\in \reachset$.
Similarly, $\play$ belongs to $\eventually^{\le n} \reachset$
(resp.\ $\eventually^{\ge n} \reachset$) iff
$\exists i \le n$ (resp.\ $i \ge n$) such that $\s_i \in \reachset$.
Dually, the \emph{safety} objective
$\always\,\reachset$ consists of all plays
which never leave $\reachset$. We have $\always\,\reachset =
\neg\eventually\neg\reachset$.

\subparagraph{Parity.} A parity objective is defined via a bounded function
$\coloring\!: \states \to \nat$ that assigns non-negative priorities
(aka colors) to states. Given an infinite play
$\play = \s_0s_1 \ldots$, let $\text{Inf}(\play)$
denote the set of numbers that occur infinitely often in the sequence
$\coloring(\s_0)\coloring(\s_1)\ldots$.
A play $\play$ satisfies \textit{even parity}
w.r.t.\ $\coloring$ iff the maximum \footnote{The maximum exists since
$\coloring$ is bounded, even in the general case where $\states$ is not
finite. However, in this paper, we only consider finite sets of states $\states$.}
of $\text{Inf}(\play)$ is even.
Otherwise, $\play$ satisfies \textit{odd parity}.
The objective even parity is denoted by $\Eparity(\coloring)$
and odd parity is denoted by $\Oparity(\coloring)$.
Most of the time, we implicitly assume that the coloring function is known
and just write $\Eparity$ and $\Oparity$.
Observe that, given any coloring $\coloring$, we have
$
\complementof{\Eparity} = \Oparity$ and
$
\Oparity(\coloring) = \Eparity(\coloring + 1)
$
where $\coloring + 1$ is the function which adds $1$ to the color of every
state. This justifies to consider only one of the even/odd parity objectives,
but, for the sake of clarity, we distinguish these objectives
wherever necessary. For any subset $T$ and color $c$, we denote by $T\lrc{c}$,
states in $T$ with color $c$.

\subparagraph{Attractors, traps and subgames.}
A non-empty set $H \subseteq \states$ defines a \emph{subgame} iff for all $\s
 \, \in H \, \cap \, \rstates$, $\successors{\s} \subseteq H$ and every state
in $H$ has at least one successor in $H$. The resulting subgame which is
well-defined is denoted by $\game\lrb{H}$.
For a set $T \subseteq \states$, the \emph{positive attractor} for player
$\px \in \{\pz,\po\}$
in game $\game$, denoted by $\attr_{\px}\lrc{T,\game}$ is the set of states 
$\s$ where $\px$ has a strategy to ensure that $\eventually T$ is satisfied
with positive probability when starting from $\s$. For any given $T$, the
positive attractor set and a uniform MD strategy $\xstrat_{\attr}$ which
satisfies this can be computed in polynomial time. A set $U \subseteq \states$
is a \emph{trap} for Min iff for every
Min/random state in $U$, the successors are always in $U$ and there is at
least one successor in $U$ for every Max state.
One can similarly define a trap for Max. Note that by definition
a trap for either player is a subgame. Also, for any set $T$, $\states 
 \setminus \attr_{\px}\lrc{T,\game}$ is a trap for $\px$ if it is non-empty.

\section{Appendix for~\texorpdfstring{\Cref{sec:mp_parity}}{MeanPayoff-Parity}}

\ignore{
\subsection{Proof of \Cref{thm:MR_mppar_MDP}}

\MRmpparMDP*

\begin{proof}

\end{proof}

}

\subsection{Proof of \Cref{thm:improved_mppar_games}, Item 1}\label{app:thm41}

\begin{proof}[Proof~of~\cref{itm:upper_bound_mppar}(~\cref{thm:improved_mppar_games}).]
	The proof is by strong induction on the maximum color $d$.\\
	Let $P_d$ denote the statement as given
	by~\cref{itm:upper_bound_mppar}. We then prove the following.
	\begin{align}
		P_0                                             & \lrc{\text{if minimum color is 0}} \label{eqn:base-zero} \\
		P_1                                             & \lrc{\text{if minimum color is 1}} \label{eqn:base-one} \\
		k \ge 0\; \lrc{\forall\;s\,\le\, 2k.\; P_{s}}  & \implies P_{2k+1} \label{eqn:induction-odd}                   \\
		k \ge 0\; \lrc{\forall\;s\,\le\,2k+1.\; P_{s}} & \implies P_{2k+2} \label{eqn:induction-even}
	\end{align}
	\textbf{Base case}: 
	When there is only one color, $\Eparity$ is trivial(~\cref{eqn:base-zero}) or never satisfied(~\cref{eqn:base-one})
	and hence $\mppar$ is equivalent to $\MP > 0$ for which MD
	strategies exist \cite[Prop.~7]{Brazdil2010}.
        Since $d$ is either $0$ or $1$, either the number of even colors
	is $1$ or no almost surely winning strategy exists for Max.\\
	\textbf{Induction step}: 
	Assume that the statement is
	true for all $k < d$. W.l.o.g.\ one can assume that every state $\s$ is
	almost surely winning as otherwise it is possible to consider a 
	subgame with states that satisfy this condition.
	To show that statement holds for $d$, we split into two cases
	depending on whether $d$ is even or odd.
        The following two claims are proven below.

	\begin{restatable}[Maximum Color Odd]{claim}{maxcolourodd}\label{claim:max_colour_odd}
		Let $d = 2k+1$ be odd. Then there is an almost surely winning
                Max strategy
		$\optzstrat$ which can be constructed from a finite number of
		$\optzstrat_i$ which are finite-memory almost surely winning strategies in
		subgames with maximum color $< d$. Moreover, the number of memory
		modes in $\optzstrat$ is equal to the maximum number of memory modes in
		$\optzstrat_i$. If all $\optzstrat_i$ have rational probabilities, then
		$\bits(\optzstrat) = \Ocompl(\sum_i \bits(\optzstrat_i)) = \Ocompl(
		\card{\states} \bits(\optzstrat_i) )$
	\end{restatable}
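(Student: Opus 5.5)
We proceed in the style of \cite[Lemma 2]{CDGQ:2014} for the odd case of parity games: an iterated attractor/trap decomposition that peels off, one at a time, regions where Max wins in a subgame with fewer colors. Throughout, every state is assumed almost surely winning, as arranged in the induction step. Let $d=2k+1$ and $\states_0 \eqdef \states$.

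\textbf{Peeling step.} Given the current game $\game[\states_i]$, let $A_i \eqdef \attr_{\po}\lrc{\states_i(d),\game[\states_i]}$ be Min's positive attractor to color $d$, and let $U_i \eqdef \states_i \setminus A_i$. Then $U_i$ is a trap for Min containing no color-$d$ state, so $\game[U_i]$ is a subgame with maximum color $<d$.
\begin{itemize}
\item We claim $U_i$ is nonempty and all its states are almost surely winning for Max in $\game[U_i]$. Otherwise, Min could attract the play to color $d$ with positive probability from every state, repeatedly. Standard reasoning then shows that color $d$ is seen infinitely often with positive probability, contradicting that all states are almost surely winning.
\item By the induction hypothesis there is a finite-memory almost surely winning strategy $\optzstrat_i$ on $U_i$.
\item Let $Z_i \eqdef \attr_{\pz}\lrc{U_i,\game[\states_i]}$ be Max's positive attractor to $U_i$, and set $\states_{i+1} \eqdef \states_i \setminus Z_i$. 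This is a trap for Max, so Min cannot leave it, and it is again a subgame in which all states are almost surely winning.
\end{itemize}
Iterating yields a partition $Z_1,\dots,Z_\ell$ with $\ell \le \card{\states}$.

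\textbf{Max's strategy.} On $Z_i\setminus U_i$, Max plays the MD attractor strategy towards $U_i$. On $U_i$, he plays $\optzstrat_i$. Memory is reused: whenever the play (re)enters some $U_i$, Max resets to the initial mode of $\optzstrat_i$, which he can do via the next-function's case split on the current state. This requires no extra modes, so the number of modes is the maximum over the $\optzstrat_i$. The bit size is the sum of the $\bits(\optzstrat_i)$ plus polynomially many attractor choices, giving $\Ocompl(\card{\states}\max_i\bits(\optzstrat_i))$.

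\textbf{Correctness.} We use three facts.
\begin{itemize}
\item From $Z_i$, Min can only leave towards $Z_j$ with $j<i$, since each $\states_{i+1}$ is a Max-trap in $\game[\states_i]$ and hence Min cannot enter deeper layers. Max never leaves $Z_i$ voluntarily. So the layer index is nonincreasing along a play, except within attractor parts, which are transient.
\item Within a layer, the attractor part reaches $U_i$ with probability bounded below. So almost surely the play eventually stays in a single $U_i$ forever.
\item Because the objective is shift-invariant, the play from its last entry into $U_i$ is governed by $\optzstrat_i$, which wins almost surely.
\end{itemize}

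\textbf{Main obstacle.} Min may switch layers (downward) finitely often, and $\optzstrat_i$ is restarted at each entry. We need almost surely finitely many restarts, each followed by an almost surely winning tail. Monotonicity of the layer index handles this, provided we also argue that Min cannot, with positive probability, leave $U_i$ and bounce back infinitely often into the same layer's attractor. This is exactly where the trap property of $U_i$ for Min must be used carefully.
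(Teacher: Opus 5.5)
Your overall architecture (iterated peeling, MD attractor moves, shared and reset memory modes) is the paper's. However, the peeling step rests on a false claim. It is not true that every state of $U_i=\states_i\setminus\attr_{\po}\lrc{\states_i(d),\game\lrb{\states_i}}$ is almost surely winning in $\game\lrb{U_i}$. An almost surely winning Max strategy may pass through Min's attractor finitely often, and seeing color $d$ finitely often is harmless.

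Here is a counterexample with colors $\{0,1\}$:
\begin{itemize}
\item $u$ is a Max state of color $0$, with a self-loop of reward $-1$ and an edge of reward $0$ to $x$;
\item $x$ has color $1$, and its only successor is $w$;
\item $w$ has color $0$ and a self-loop of reward $+1$.
\end{itemize}
Every state is surely winning, and $\attr_{\po}\lrc{\{x\}}=\{x\}$, so $U_1=\{u,w\}$. But in $\game\lrb{U_1}$ the only play from $u$ is the $-1$ loop, so there is no $\optzstrat_1$ at $u$. Since $Z_1=\states$, your combined strategy never takes the necessary move $u\to x$ and loses from $u$. Your justification also misfires: states of $U_i$ lie outside Min's attractor by definition, so Min cannot ``attract from every state''.

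What one can and must prove is only that Max's almost-sure winning region $R_i$ of $\game\lrb{U_i}$ is \emph{nonempty}. Otherwise Min combines a spoiling strategy of $\game\lrb{U_i}$, with memory reset whenever the play leaves $U_i$, with her attractor strategy on the Min attractor. Max then fails with positive probability either by eventually staying in $U_i$, or, if he enters the Min attractor infinitely often, by seeing color $d$ infinitely often almost surely. With $R_i$ in hand, note that $R_i$ is a Min-trap of $\game\lrb{\states_i}$ without color $d$. The induction hypothesis applies on $\game\lrb{R_i}$, and you set $Z_i\eqdef\attr_{\pz}\lrc{R_i,\game\lrb{\states_i}}$. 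This is exactly the paper's decomposition.

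You also leave your ``main obstacle'' open, and the monotonicity you rely on fails. First, $\states_{i+1}$ is a trap for \emph{Max}: Max and chance cannot leave it, but Min can. Second, chance states in $Z_i\setminus R_i$ may move into $\states_{i+1}$, from where Min can come back up. So a play may change layers in both directions infinitely often.

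The paper avoids monotonicity and uses nested conditioning instead:
\begin{itemize}
\item Condition on $\eventually\always\states_i$. The positive-attractor property (\cref{claim:attr_properties}) gives $\always\eventually Z_i\Rightarrow\always\eventually R_i$ almost surely.
\item Once the play is in $R_i$ after having settled in $\states_i$, it can never leave $R_i$: $R_i$ is a Min-trap inside $\game\lrb{\states_i}$ and $\optzstrat_i$ stays in $R_i$. Hence $\eventually\always R_i$ holds; otherwise $\eventually\always\states_{i+1}$ holds.
\item Start from the sure event $\eventually\always\states_1$ and end with $\states_{\ell+1}=\emptyset$. This yields $\bigcup_i\eventually\always R_i$ almost surely; equivalently, look at the least layer visited infinitely often.
\end{itemize}
Finally, the reset to the common initial mode must be done by the \emph{update} function on attractor and layer-changing transitions; the next function cannot change the memory. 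Then the suffix after the last entry into $R_i$ is played by $\optzstrat_i$ inside $\game\lrb{R_i}$, and shift-invariance concludes.
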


	From the above lemma, observe that every state $\s$ in the almost
	sure winning region can be uniquely identified as belonging to one of
	these sets.
	\begin{enumerate}
		\item $\s \in Z_i \setminus R_i$ for some $i$
		      \label{itm:attractor_state}
		\item $\s \in R_i$
	\end{enumerate}

	The strategy $\optzstrat$ can be constructed as follows.
	\begin{itemize}
		\item Since each $R_i$ has at least one less color, obtain the strategies
		      $\optzstrat_i$ by induction.
		\item In each $Z_i$, find the MD attractor strategy $\optzstrat_{\attr,i}$
		\item If the play is currently in $Z_i$, then
		      \begin{itemize}
			      \item If the current state is of type~\cref{itm:attractor_state}, play according to
			            $\optzstrat_{\attr,i}$
			      \item Else if we land into $R_i$, begin playing according to $\optzstrat_i$ and
			            continue this until the play exits $R_i$.
		      \end{itemize}
		\item If at any point, the play exits from $Z_i$, reset and start again
	\end{itemize}

	\begin{restatable}[Maximum Color Even]{claim}{maxcoloureven} \label{claim:max_colour_even}
		Let $d > 0$ be even and assume there are $k$ even colors. Define $X \eqdef
		\attr_{\pz}\lrc{\states\lrc{d},\game}$ and $Y
		\eqdef
		\states \setminus X$.
		\begin{enumerate}
			\item $\states \subseteq \AS[\game][\pz]{\mppar}$ if
			      and only if
			      \begin{itemize}
				      \item $\states \subseteq
				            \AS[\game][\pz]{\MP > 0}$ and
				      \item $Y \subseteq
				            \AS[\game\lrb{Y}][\pz]{\mppar}$
			      \end{itemize} \label{itm:equivalence_even}
			\item Let $\optzstrat_{Y}$ denote a Max
			      strategy in $\game\lrb{Y}$, which has finite memory and
				  is almost surely winning from
				  every state in $Y$. Then an almost surely winning
                  Max strategy $\optzstrat$ in $\game$
			      can be constructed such that $\size{\optzstrat} = 1 +
				  \size{\optzstrat_{Y}}$. Furthermore, if all the
				  probabilities in $\optzstrat_{Y}$ are rational, then 
				  $\bits(\optzstrat) = \Ocompl\lrc{\size{\game}^{c_1} 
				  \size{\optzstrat_Y} + 
			 	   \bits(\optzstrat_Y) \size{\game}}$ where
				  $c_1$ is a fixed constant independent of $\game$.
				  \label{itm:strategy_even}
		\end{enumerate}
	\end{restatable}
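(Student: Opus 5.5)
\emph{Item~\ref{itm:equivalence_even}.} For ($\Rightarrow$), an almost surely winning Max strategy for $\mppar$ is in particular almost surely winning for $\MP>0$, which gives the first condition. For the second, I use that $Y=\states\setminus X$ is the complement of a Max positive attractor, so it is a trap for Max. Consequently every Min strategy in $\game\lrb{Y}$ is also a Min strategy in $\game$, and the play stays inside $Y$ under it. Restricting Max's winning strategy from $\game$ to histories inside $Y$ then yields an almost surely winning strategy in $\game\lrb{Y}$. For ($\Leftarrow$), it suffices to show that the two-phase strategy $\optzstrat$ of Item~\ref{itm:strategy_even} wins almost surely, using $\optzstrat_{\MP}$ (MD, from \cite[Prop.~7]{Brazdil2010}) and the induction strategy $\optzstrat_Y$. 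Its memory consists of the modes of $\optzstrat_Y$ plus one extra mode for Phase-1, so $\size{\optzstrat}=1+\size{\optzstrat_Y}$.

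\emph{Choice of $\eps_0$.} The Phase-1 strategy $\eps_0\optzstrat_{\attr}+(1-\eps_0)\optzstrat_{\MP}$ is memoryless randomized. Against it Min has an optimal MD response, so there are only finitely many Markov chains to consider, one per MD Min strategy. For $\eps_0=0$ every BSCC of each such chain has strictly positive mean payoff, say at least some $\mu>0$. By perturbation theory for finite Markov chains, as $\eps_0\to0$ the stationary distributions of the perturbed chain converge to convex combinations of stationary distributions of the unperturbed BSCCs. Hence for all sufficiently small $\eps_0$, every recurrent class has mean payoff at least $\mu/2$. Standard bit-size bounds on such perturbations, via polynomial root-separation for determinants, show that $\eps_0$ can be chosen of size polynomial in $\size{\game}$. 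Together with the factor $\size{\optzstrat_Y}$ for the joint chain, this gives the claimed $\bits$ bound.

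\emph{Choice of $\eps_1$ and the winning argument.} The play decomposes into Phase-1 episodes and Phase-2 episodes.

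\textbf{Parity.} If there are infinitely many Phase-1 episodes, then the play is in $X$ infinitely often during Phase-1. At each such step the attractor strategy is used with probability $\eps_0$, so color $d$ is reached within $\card{\states}$ steps with probability bounded below. By Lévy's 0-1 law, color $d$ is then seen infinitely often almost surely. If instead the play eventually stays in Phase-2 forever, it stays in $Y$, and $\optzstrat_Y$ wins $\mppar$ almost surely by the induction hypothesis.

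\textbf{Mean payoff.} Once Phase-1 has lasted long enough, the expected accumulated reward of a Phase-1 episode is at least $\frac{\mu}{4\eps_1}-K$. Here $K$ is a mixing constant that depends only on $\game$ and $\eps_0$, and the bound holds uniformly over Min strategies because Min's best response is MD. I need a constant $B$ bounding the expected deficit, that is, the expected negative part of the running sum, of any Phase-2 episode that exits to $X$; this must hold uniformly over Min. I then choose $\eps_1$ so small that $\frac{\mu}{4\eps_1}>K+B+1$. Each Phase-1/Phase-2 pair then has expected increment at least $1$, while its expected length is $O(1/\eps_1)$ and its second moment is bounded. A martingale strong law (Azuma-type, or Kolmogorov's SLLN for martingale differences with bounded variance) gives $\liminf$ average $>0$ almost surely. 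Mean-payoff fluctuations inside long episodes are handled by the same uniform bounds, since payoffs per step are bounded by $R$.

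\emph{Main obstacle.} The hardest step is the uniform bound $B$. The induction hypothesis only says that $\optzstrat_Y$ wins almost surely in $\game\lrb{Y}$. Min, however, may leave $Y$ at an adversarially chosen time, ideally when the running sum is most negative. I would first fix $\optzstrat_Y$ (finite memory) and look at the resulting finite MDP for Min. Since the positive mean payoff in every reachable end component is achieved almost surely against every Min strategy, the running sum behaves like a process with positive drift. From this I would derive that the expected value of $-\min_n\sum_{i<n}c_i$ is finite and attained by an MD Min policy in the product MDP. That yields $B$ explicitly, with bit size polynomial in $\size{\game}\cdot\size{\optzstrat_Y}$.
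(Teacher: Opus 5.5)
Your strategy, the memory count $1+\size{\optzstrat_Y}$, the trap argument for ($\Rightarrow$), the parity argument via the positive attractor with $\eps_0>0$, and the perturbation-based choice of $\eps_0$ all match the paper. The gap is in how you assemble the mean-payoff argument.

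\textbf{The episode-based SLLN does not go through.} You argue over Phase-1/Phase-2 pairs, claim each has expected length $O(1/\eps_1)$ and bounded second moment, and then invoke Azuma or a bounded-variance martingale SLLN. Against an arbitrary (history-dependent, randomized) Min strategy neither property holds. First, $1/\eps_1$ is only a \emph{lower} bound on the expected length of a Phase-1 episode. The episode can end only at an $\eps_1$-coin on entering $Y$, and Min decides how long the play stays in $X$ before that: arbitrarily long, heavy-tailed, or forever. Second, Min alone decides when a Phase-2 episode leaves $Y$. So per-pair increments are unbounded, possibly with infinite variance. Neither concentration tool applies as stated, and a final infinite episode is not treated for $\MP>0$. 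Your claim that the worst expected maximal deficit in Phase~2 is attained by an MD Min policy is also unjustified: Min's best move for that objective can depend on how far the running sum currently is above its running minimum. (The uniform-over-Min route could be rescued by a per-step drift argument instead. One would use potential functions for the Phase-1 chain and for the product MDP of $\optzstrat_Y$, plus a bonus term amortizing phase switches, to get a submartingale with bounded increments and positive drift. That is not what you describe.)

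\textbf{How the paper avoids this.} Your ``main obstacle'' disappears if you apply to the whole strategy the reduction you already used for Phase~1 alone. Since $\optzstrat$ has finite memory, fixing it gives a finite minimizing MDP on $\states\times(\{\memconf_{\states}\}\uplus\memconfset)$. There, Min needs only MD strategies to refute $\MP>0$ almost surely. Each MD choice yields a finite Markov chain, and one checks every BSCC, split into its Phase-1 part $B_1$ and Phase-2 part $B_2$.
If $B_2=\emptyset$, your perturbation argument applies.
If $B_1=\emptyset$, Min's strategy is one of $\game\lrb{Y}$ and the induction hypothesis applies.
Otherwise, a renewal-reward argument over alternating segments works. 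A $B_2$-segment has expected reward at least $-R\,\card{B_2}\,x_Y^{-\card{B_2}}$, where $x_Y$ is the least transition probability on $Y\times\memconfset$. This holds because, inside a BSCC, $B_1$ is re-entered within $\card{B_2}$ steps with probability at least $x_Y^{\card{B_2}}$. A $B_1$-segment has expected reward at least $\mu'/\eps_1$ minus a constant of the Phase-1 chain, where $\mu'>0$ is that chain's least BSCC mean payoff. Min's exit decision is then a function of the current product state, so adversarial exit timing never arises.

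\textbf{Bit size of $\eps_1$.} This explicit bound is also what your bit-size sketch is missing. $\size{\eps_1}$ must be controlled through $x_Y$, i.e., through the probabilities used by $\optzstrat_Y$, which may be very small. It is not polynomial in $\size{\game}\cdot\size{\optzstrat_Y}$ alone, as you assert for $B$. The stated bound needs roughly $\size{\eps_1}=\Ocompl(\size{\game}^{c_0}+\bits(\optzstrat_Y))$.
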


	In \cref{claim:max_colour_odd}, $\size{\optzstrat}$ 
    is the maximum of the $\size{\optzstrat_i}$ which, by induction
	hypothesis, is at most $\numevencolours$.
    Since, in this case, the maximum color is odd,
    the number of even colors in the subgames is at most the same as in the
	whole game. Hence $\size{\optzstrat} = \max_i \size{\optzstrat_i}$. By
	induction hypothesis $\bits(\optzstrat_i) = \Ocompl\lrc{
	{\size{\game}}^{d-1+c}}$, and from~\cref{claim:max_colour_odd}, we get
	$\bits(\optzstrat) = \Ocompl(\card{\states} \bits(\optzstrat_i)) = 
	\Ocompl(\size{\game} \bits(\optzstrat_i)) = \Ocompl(\size{\game}^{d+c})$.
    This shows~\cref{eqn:induction-odd}.

	When the maximum color $d$ is even and $k = \numevencolours$,
	\cref{claim:max_colour_even} implies that the
	number of even colors in $Y$ is at most $k-1$.
        Since $\size{\optzstrat} = 1 + \size{\optzstrat_Y}$, we have proven the induction step
	for the number of memory modes. For the bit size, by the induction hypothesis
	$\bits(\optzstrat_Y) = \Ocompl\lrc{\size{\game}^{d-1+c}}$. 
	From~\cref{claim:max_colour_even},
	\begin{align*}
		\bits(\optzstrat) &= \Ocompl\lrc{\size{\game}^{c_1} \size{\optzstrat_Y}
		+ \bits(\optzstrat_Y) \size{\game}} \\
		&= \Ocompl\lrc{\size{\game}^{c_1+1} + \size{\game}^{d+c}} \\
		&= \Ocompl\lrc{\size{\game}^{d+c}}
	\end{align*}
	This shows~\cref{eqn:induction-even}.
\end{proof}

\subsubsection{Proofs of claims used above}

The following auxiliary claim is used in the proofs of
\Cref{claim:max_colour_odd}
and \Cref{claim:max_colour_even}.

\begin{claim}[Positive Attractor property]\label{claim:attr_properties}
		Consider a game $\game = \gametuple$, a subset of the states
                $H \subseteq S$,
                and a finite-memory Max strategy $\optzstrat$ that satisfies the following
                condition:
                $\forall \s \in \attr_{\pz}\lrc{H}\, \forall \memconf.\inf_{\ostrat}
                \Prob[][\optzstrat[\memconf],\ostrat,\s]{\eventually\, H} >0$.
                Then the following two properties hold. ($\symdiff$ denotes
                the symmetric difference between sets.)
		\begin{enumerate}
			\item $\Prob[][\optzstrat,\ostrat,\s_0]{\always\,
				      \eventually\,
				      H\;\symdiff\; \always\,
				      \eventually\, \lrc{\attr_{\pz}H} } =
			      0$.\label{itm:attractor_target_equiv}
			\item Furthermore, if $H$ is a Min-trap and
                          $\optzstrat$ always remains inside 
			      $\game\lrb{H}$ after entering H, then
			      $\Prob[][\optzstrat,\ostrat,\s_0]{\eventually\, \always\, H\;\symdiff\;
				      \always\, \eventually\, H} = 0$.\label{itm:target_target_equiv}
		\end{enumerate}
	\end{claim}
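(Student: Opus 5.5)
The plan is to prove the two items separately: Item 1 via a uniform positive lower bound on reaching $H$ combined with L\'evy's 0-1 law, and Item 2 directly from the trap assumption. In both items one inclusion is trivial. For Item 1, $H \subseteq \attr_{\pz}\lrc{H}$ gives $\always\eventually H \subseteq \always\eventually\lrc{\attr_{\pz}H}$. For Item 2, $\eventually\always H \subseteq \always\eventually H$ for every play. So in each case only one difference set has to be shown null (resp.\ empty).

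\textbf{Item 1.} The first step is to make the positive lower bound uniform. Since $\optzstrat$ has finitely many memory modes and $\states$ is finite, define
\[
  p \eqdef \min_{\s \in \attr_{\pz}(H),\,\memconf}\ \inf_{\ostrat}\Prob[][\optzstrat[\memconf],\ostrat,\s]{\eventually\, H}.
\]
This is a minimum of finitely many strictly positive numbers, so $p>0$.

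The next step is to transfer this bound to arbitrary histories. Fix any Min strategy $\ostrat$ and let $\F_n$ be the $\sigma$-algebra of the first $n$ steps. Consider a history $h$ of length $n$ ending in some $\s\in\attr_{\pz}(H)$. The continuation of the play after $h$ is governed by $\optzstrat$ started in a (possibly random, private) memory mode $\memconf$, together with the residual Min strategy $\ostrat_h$. The conditional probability is an average over $\memconf$ of quantities of the form $\Prob[][\optzstrat[\memconf],\ostrat_h,\s]{\eventually H}$, each of which is $\ge p$. Hence
\[
  \Prob[][\optzstrat,\ostrat,\s_0]{\eventually^{\ge n} H \mid \F_n} \ge p
\]
whenever the $n$-th state lies in $\attr_{\pz}(H)$. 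This step, in which the bound must survive conditioning on arbitrary histories, randomized Min behaviour and hidden memory, is where I expect the main care to be needed. It is resolved by the fact that the infimum in the hypothesis is taken for every fixed mode $\memconf$ and over all Min strategies.

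Then I apply L\'evy's 0-1 law. Fix $N$. For $n \ge N$ we have $\eventually^{\ge n}H \subseteq \eventually^{\ge N}H$, so
\[
  \Prob[][\optzstrat,\ostrat,\s_0]{\eventually^{\ge N}H\mid\F_n}\ge p
\]
at every time $n\ge N$ at which the play is in $\attr_{\pz}(H)$. By L\'evy's 0-1 law, $\Prob[][\optzstrat,\ostrat,\s_0]{\eventually^{\ge N}H\mid\F_n}\to \mathbf 1_{\eventually^{\ge N}H}$ almost surely. On $\always\eventually\lrc{\attr_{\pz}H}$ the left-hand side is $\ge p$ for infinitely many $n$, so the limit indicator is $\ge p>0$, hence equals $1$, almost surely. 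Thus almost every play in $\always\eventually\lrc{\attr_{\pz}H}$ lies in $\eventually^{\ge N}H$ for every $N$. Intersecting over the countably many $N$ gives $\always\eventually H$, which proves Item 1.

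\textbf{Item 2.} Take any play compatible with $\optzstrat$ and $\ostrat$ that satisfies $\always\eventually H$. In particular it enters $H$ at some point. From then on, Min and random states in $H$ have all their successors in $H$ because $H$ is a Min-trap, and Max stays in $\game\lrb{H}$ by assumption. Hence the play remains in $H$ forever and satisfies $\eventually\always H$. Together with the trivial inclusion, the symmetric difference is empty on all compatible plays and therefore has probability $0$.
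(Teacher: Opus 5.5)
Your proposal is correct and follows the same route as the paper. Both use a uniform bound $p>0$, obtained from the finiteness of $\attr_{\pz}(H)$ and of the memory, to show that $\always\eventually(\attr_{\pz}H)\setminus\always\eventually H$ is null, and both read Item 2 directly off the Min-trap assumption together with Max never leaving $H$. Your history-by-history conditional bound (averaging over the private memory mode against the residual Min strategy) combined with L\'evy's 0-1 law is a rigorous version of the paper's informal ``$(1-p)^\infty=0$'' step.
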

	\begin{claimproof}[Proof of~\cref{claim:attr_properties}]
          Towards \cref{itm:attractor_target_equiv}, we use the facts that
          $\attr_{\pz}\lrc{H} \subseteq S$
          is finite and $\optzstrat$ is finite-memory to obtain that
          $p \eqdef \min_{\s \in \attr_{\pz}\lrc{H}} \min_\memconf\ \inf_{\ostrat}
          \Prob[][\optzstrat[\memconf],\ostrat,\s]{\eventually\, H} >0$.
          Hence $\Prob[][\optzstrat,\ostrat,\s_0]{\always\, \eventually\,
            H\;\symdiff\; \always\, \eventually\, \lrc{\attr_{\pz}H} } =
          \Prob[][\optzstrat,\ostrat,\s_0]{\always\, \eventually\,
            \lrc{\attr_{\pz}H} \setminus \always\, \eventually\, H} \le
          (1-p)^\infty = 0$.

          \cref{itm:target_target_equiv} follows directly from the assumptions. 
	\end{claimproof}
	It is easy to see that the above claim also holds true from the perspective
	of Min, i.e., when Max and Min roles are reversed.

    \maxcolourodd*
    \begin{claimproof}[Proof.
			of~\cref{claim:max_colour_odd}]
		For Max to win $\mppar$ almost surely, it has to eventually not see any
		of the states from $\states\lrc{d}$ any more. To achieve this, we `rank' the
		states in $\states\lrc{d}$ so that it is not possible for Min to force
		a move from a lower ranked state to a higher ranked one.
                Ultimately,
		this implies that eventually almost surely we are always inside states with the
		same rank. If we show that Max can win here, then we are done.
		Formalizing this intuition, we need to show that
		\begin{enumerate}
			\item There exists a partition $\set{Z_i}_{1 \le
				      i \le \ell}$ of
                                    $\states$ and non-empty sets $R_i, U_i$ for
                                    $1 \le i \le \ell$ 
                                    where $U_1 = \states$, and $U_{\ell+1} =
                                    \emptyset$ such that
			      \begin{enumerate}
				      \item $R_i \subseteq U_i \setminus
				            U_i\lrc{d} \neq \emptyset$
				            is a trap for Min in $\game\lrb{U_i}$
				            and $R_i
				            \subseteq
				            \AS[\game\lrb{R_i}][\pz]{\mppar}$ with corresponding
				            winning strategy $\optzstrat_{i}$
				      \item $Z_i = \attr_{\pz}\lrc{R_i,
					            \game\lrb{U_i}}$
				      \item $U_{i+1} = U_i \setminus Z_i$
			      \end{enumerate} \label{itm:partition_characterization}
		\end{enumerate}
		The states in $Z_i$ as defined above for $1 \le i \le \ell$ can be
		thought of as having rank $i$. Note that finding a non-empty $R_i$ in
		$U_i$ fixes $Z_i$ and $U_{i+1}$. The above characterization can be
		seen as inductively defining the sets $Z_i$ as long as one can
		find the non-empty set $R_i$. The induction stops when $U_{i+1}$ is
		empty. The $R_i$ is not unique and different
		$R_i$'s thus give rise to different partitions. The claim holds
		for any partition satisfying the above conditions.
		To explicitly find one such non-empty $R_i$, we can use the following claim.

		\begin{claim} \label{claim:nonempty_winning_set}
			Let the maximum color $d$ be odd and all states $\in \;
			 \AS[][\pz]{\mppar}$. Let
			$\emptyset \subset U \subseteq \states$ be a trap for player
			Max. Define
			$X_U \eqdef
			 \attr_{\po}\lrc{U\lrc{d},\game\lrb{U}}$ and
			$Y_U \eqdef U \setminus X_U$. Then
			\[
				R_U \eqdef
				\AS[\game\lrb{Y_U}][\pz]{\mppar} \neq \emptyset.
			\]
		\end{claim}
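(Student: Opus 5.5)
We argue by contradiction: assume $R_U = \emptyset$. Then every state of $Y_U$ lies outside $\AS[\game\lrb{Y_U}][\pz]{\mppar}$. The goal is a Min strategy in $\game$ that, from some state of $U$, makes Max lose $\mppar$ with positive probability. This contradicts the standing assumption that all states are almost surely winning.

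First we check that $Y_U$ defines a subgame and is a trap for Min inside $\game\lrb{U}$. The complement of a Min positive attractor is a Min-trap in $\game\lrb{U}$. Since $U$ is itself a Max-trap in $\game$, Max cannot leave $U$ at all. Hence, from $Y_U$, Max can only stay in $Y_U$ or move into $X_U$; Min and random states cannot leave $Y_U$.

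Case $Y_U = \emptyset$, i.e.\ $U = X_U$. Min plays, inside $U$, the uniform MD positive-attractor strategy towards $U\lrc{d}$. Max cannot leave $U$. By \Cref{claim:attr_properties}, with Min and Max roles reversed and the constant $p>0$ supplied by the MD attractor strategy on finite $U$, almost surely $U\lrc{d}$ is visited infinitely often. Since $d$ is the maximal colour and odd, $\Eparity$ fails almost surely. This already contradicts almost-sure winning from any state of $U$.

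Case $Y_U \neq \emptyset$. Because the maximal colour of $\game$ is $d$ and all colour-$d$ states of $U$ lie in $X_U$, the game $\game\lrb{Y_U}$ has fewer colours. By determinacy of $\game\lrb{Y_U}$ (Borel, weakly determined), every state $s \in Y_U$ has value $<1$ there. Min therefore has, for each $s$, a strategy $\ostrat_s$ that keeps $\mppar$ below $1-\eta$ for some $\eta>0$, uniformly over the finitely many states.

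We combine this in $\game\lrb{U}$ into a Min strategy $\ostrat$. In $X_U$, play the MD attractor to $U\lrc{d}$. In $Y_U$, play $\ostrat_s$, restarted from the current state each time the play (re-)enters $Y_U$ from $X_U$. Now analyse the plays against an arbitrary Max strategy. On plays that enter $X_U$ infinitely often, \Cref{claim:attr_properties} (Min's version) gives infinitely many visits to colour $d$ almost surely, so these plays are losing. The remaining plays eventually stay in $Y_U$ forever, following one restarted copy of some $\ostrat_s$. By shift-invariance of $\mppar$, the conditional winning probability of each such last segment is at most $1-\eta$.

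The main obstacle is that Max can leave $Y_U$, so Min's guarantee inside $\game\lrb{Y_U}$ does not directly bound what happens in $\game$. The resolution is exactly the argument above: leaving $Y_U$ infinitely often is fatal for Max. A Max strategy in $\game$ restricted to a segment that stays in $Y_U$ can therefore be matched by a Max strategy in $\game\lrb{Y_U}$ which, whenever the original would exit, switches to an arbitrary continuation inside $Y_U$. This only overestimates Max's winning probability on that segment.

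A cleaner formulation counts the probability of staying in $Y_U$ forever from the $n$-th restart while winning. Each restart contributes at most $1-\eta$, and all other outcomes are either losing or lead to a further restart. Summing gives total winning probability $\le 1-\eta < 1$, contradicting $U \subseteq \AS[][\pz]{\mppar}$. If making this summation rigorous proves delicate, the fallback is to take each $\ostrat_s$ to be an optimal finite-memory (indeed MD, since the complement is shift-invariant submixing for Min by \cite[Theorem 1.1]{GimbertKelmendi:IJGT20203}) Min strategy. Min's overall strategy is then finite-memory, and a Markov-chain end-component argument shows that some bottom component inside $Y_U$ is reached with positive probability and is losing for Max.
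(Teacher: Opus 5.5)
Your overall architecture matches the paper's proof. You argue by contradiction, let Min play the positive attractor to $U\lrc{d}$ on $X_U$ and a counter-strategy on $Y_U$ that is restarted on every re-entry, and observe that plays visiting $X_U$ infinitely often lose $\Eparity$. The case $Y_U=\emptyset$ is fine. The gap is at the step you yourself flag as the main obstacle.

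The matching argument gives, for each $n$,
$\mathcal{P}(\text{segment } n \text{ stays in } Y_U \text{ and wins}\mid\text{segment } n \text{ starts})\le 1-\eta$.
Summing only yields $(1-\eta)\sum_n\mathcal{P}(\text{segment } n \text{ starts})$, which can be $\ge 1$. So ``total $\le 1-\eta$'' does not follow, and this is not just a loose estimate. Max can abandon a segment once he sees that Min's randomized, memoryful $\ostrat_s$ is behaving badly for him. He makes one detour through $X_U$ and gets a fresh copy of $\ostrat_s$. The attractor only punishes infinitely many detours, and almost surely he needs only finitely many.

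Here is a concrete instance. Your combination step never uses the global hypothesis, so a game where that hypothesis fails suffices. Take $U=\states$ and a Max state $x$ of colour $0$ whose successors are a Min state $m$ and a state $z$ of colour $d=1$ with $z\movesto x$. At $m$, Min chooses between two colour-$0$ states leading back to $x$, via edges with rewards $+1$ and $-1$; all other rewards are $0$. Then $X_U=\{z\}$. The Min strategy that, at its first visit to $m$, commits with probability $\frac12$ each to always $+1$ or always $-1$ keeps $\mppar$ at $\frac12$ in $\game\lrb{Y_U}$. Yet against your combined strategy, the Max strategy ``go to $z$ right after Min took $-1$, otherwise go to $m$'' wins almost surely. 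The paper's one-line factorisation $\mathcal{P}(\mppar\cap E)=\mathcal{P}(E)\cdot\mathcal{P}^{\game\lrb{Y_U}}(\mppar)$, which implicitly conditions on the last entry into $Y_U$, glosses over the same point when $\ostrat$ has memory.

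What repairs this is your fallback, promoted to the main argument. The operative property is memorylessness, not finite memory. Fix a single MD optimal Min strategy $\ostrat^*$ on $\game\lrb{Y_U}$; it exists by \cite[Theorem~1.1]{GimbertKelmendi:IJGT20203}, since Min's objective is shift-invariant and submixing. Together with the MD attractor this gives an MD Min strategy on $\game\lrb{U}$, so a detour has nothing to reset and Max faces a finite MDP.

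Almost surely the set of states visited infinitely often is an end component. Following attractor ranks, every end component meeting $X_U$ contains a state of colour $d$ and is therefore losing. An end component inside $Y_U$ in which Max wins almost surely would give value $1$ in $\game\lrb{Y_U}$ against the optimal $\ostrat^*$. Hence Max wins with probability $0$, which is the contradiction. Note that this is an argument over all Max strategies in an MDP, not about one Markov chain as your fallback phrases it.

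The last step needs $\mathtt{val}<1$ on $Y_U$, and weak determinacy does not give that: $R_U=\emptyset$ excludes almost-sure winning, not value $1$. You need the fact that value $1$ implies almost-sure winning for $\mppar$, e.g.\ via existence of optimal Max strategies from \Cref{thm:fmt} with $\eps=0$.
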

		\begin{claimproof}[Proof of~\cref{claim:nonempty_winning_set}]
            Towards a contradiction, assume $R_U = \emptyset$.
            We'll show that this implies that there is a Min strategy
			$\optostrat$ always staying in $U$ such that for all states $\s\,
			\in \, U$ and all strategies $\zstrat$ of Max
			\[
				\Prob[\game\lrb{U}][\zstrat,\optostrat,\s]{\mppar} < 1.
			\]
			$R_U = \emptyset$ implies there is a Min strategy $\ostrat = 
			\memstrattuple$ in $\oallstrats{\game\lrb{Y_U}}$ such that
			for all $\s \, \in \, Y_U$ and all strategies $\zstrat$ which
			always stay in $Y_U$, 
			$\Prob[\game\lrb{Y_U}][\zstrat,\ostrat\lrb{\initmem},\s]{\mppar}
			< 1.$
			Let $E \eqdef \eventually\,\always\, Y_U$ and
			$F \eqdef \complementof{E}$ in $\game\lrb{U}$. Then,
			$F = \always\,\eventually\,X_U$. By definition of $X_U$, there
			is a MD Min strategy $\ostrat_{\attr,U}$ from every Min state in
			$X_U$. Consider the following Min strategy
			$\optostrat \eqdef
			\tuple{\memconfset^\prime,\memup^\prime, \memsuc^\prime}$ where
			\begin{itemize}
				\item $\memconfset^\prime \eqdef \memconfset$
				\item $\memsuc^\prime\lrc{\memconf,\s} \eqdef
				      \begin{cases}

					      \ostrat_{\attr,U}\lrc{\s} & \s\,\in\, X_U \\
					      \memsuc\lrc{\memconf,\s}
					                                & \s\,\in\, Y_U
				      \end{cases}$
				\item
				      $\memup^\prime\lrc{\memconf,\tuple{\s,\s'}} \eqdef \begin{cases}
					      \initmem
					                                          & \s\text{ or }\s'\, \in\, X_U \\

					      \memup\lrc{\memconf,\tuple{\s,\s'}} & \text{otherwise}
				      \end{cases}$
			\end{itemize}
			Observe that $\optostrat\lrb{\initmem}$ never leaves $U$. Consider
			any Max strategy $\zstrat$, start state $\s \in U$. Then
			$\Prob[\game][\zstrat,\optostrat\lrb{\initmem},\s]{\mppar}$
			\begin{align*}
					& = \Prob[\game][\zstrat,\optostrat\lrb{\initmem},\s]{\mppar\, \cap \, E} 
					& +
					&\Prob[\game][\zstrat,\optostrat\lrb{\initmem},\s]{\mppar\, \cap \, F}
				\\
					& = \Prob[\game][\zstrat,\optostrat\lrb{\initmem},\s]{E} 
					\cdot \underbrace{\Prob[\game\lrb{Y_U}][\zstrat,\optostrat\lrb{\initmem},\alpha]{\mppar}}_{< 1}
					& +
					&\Prob[\game][\zstrat,\optostrat\lrb{\initmem},\s]{\mppar\, \cap \, F} 
				\\
				 & <  \Prob[\game][\zstrat,\optostrat\lrb{\initmem},\s]{E}
				 & + 
				 &\underbrace{\Prob[\game][\zstrat,\optostrat\lrb{\initmem},\s]{\Eparity \, \cap \, \always\,\eventually\, U\lrc{d} }}_{\text{by}~\cref{claim:attr_properties}}
				\\
				&< \Prob[\game][\zstrat,\optostrat\lrb{\initmem},\s]{E} & + & \underbrace{0}_{d \text{ is the max color and odd}} \\
				&< \Prob[\game][\zstrat,\optostrat\lrb{\initmem},\s]{E} & & \\
				&< 1 & &
			\end{align*}
			which implies $U \not\subseteq \AS[][\pz]{\mppar}$, a contradiction.
		\end{claimproof}

		Since $U_1 = \states$ itself is a trap for
		Max,~\cref{claim:nonempty_winning_set} implies existence of a non-empty set
		$R_1$ such that $R_1 \subseteq U_1 \setminus U_1\lrc{d}$ and $R_1 \subseteq
		\AS[\game\lrb{R_1}][\pz]{\mppar}$ (true since a winning strategy never
		leaves the almost sure set of states). Define $Z_1 \eqdef
		\attr_{\pz}\lrc{R_1,\game\lrb{U_1}}$ and let $U_2 \eqdef U_1 \setminus Z_1$.
		Observe that $U_2$ is a strict subset of $U_1$ and is once again a trap for
		Max. If the highest color in $U_2$ is $d$, one can again
		apply~\cref{claim:nonempty_winning_set} or else trivially take $R_2 \eqdef U_2$
		and the procedure stops.
		
		For our purposes, any partition which
		satisfies~\cref{itm:partition_characterization} suffices for the proof.
		Observe that
		$U_i = \bigcup_{k=i}^{\ell} Z_k$ and $U_i \supset U_{i+1}$.
		Also
		$R_i \subseteq Z_i$ and $R_i \subseteq U_i \setminus U_i(d)$
		implies
		$R_i \subseteq Z_i \setminus Z_i(d)$. Furthermore, given
		$\optzstrat_i\lrb{\initmem} = \tuple{\memconfset,\memsuc_i,
			\memup_i}$ which is winning from every state in $R_i$
		and the uniform MD attractor strategies $\optzstrat_{\attr,i}$
		to
		$R_i$ in $Z_i$, we construct a strategy
		$\optzstrat\lrb{\initmem} \eqdef \tuple{\memconfset,\memsuc,
			\memup}$ as follows.
		\begin{align*}
			\memsuc\lrc{\memconf, \s}             & \eqdef
			\begin{cases}
				\optzstrat_{\attr,i}\lrc{\s} & \s\,\in\,Z_i \setminus R_i
				\text{ for some } 1 \le i \le \ell                            \\
				\memsuc_i\lrc{\memconf,\s}   & \s\,\in\,R_i \text{ for some } 1
				\le i \le \ell
			\end{cases}                     \\
			\memup\lrc{\memconf, \tuple{\s, \s'}} & \eqdef
			\begin{cases}
				\initmem                            & \s\,\in\,Z_i \setminus R_i,\; \s'\,\in\,Z_i
				\text{ for some } 1 \le i \le \ell                                                \\
				\memup_i\lrc{\memconf,\tuple{\s,\s'}} & \s,\; \s'\,\in\,R_i
				\text{ for some } 1 \le i \le \ell                                                \\
				\initmem                            & \s\,\in\,Z_i,\; \s'\,\in\,Z_j \; i \neq j
			\end{cases}
		\end{align*}
		We assumed that all the strategies $\optzstrat_i$ share the same set of memory
		configurations $\memconfset$ and start in the same memory configuration
		$\initmem$. If they are different, one can take $\memconfset$ such that
		$\card{\memconfset} = \max_i\lrc{\card{\memconfset_i}}$ and simple renaming of
		the configurations in the strategies so that every start configuration is
		renamed to $\initmem$. From the definition, it is clear that $\size{
		\optzstrat} = \max\limits_i\size{\optzstrat_i}$ and $\bits(\optzstrat)
		= \sum\limits_i \bits(\memsuc_i) + \bits(\memup_i) = \Ocompl(
		\card{\states} \bits(\optzstrat_i) )$. This takes care of the
		complexity part of the claim. We now show that $\optzstrat[\initmem]$ is almost surely
		winning from every state $\s$.

		Let $\obj_i$ denote $\eventually\,\always\,R_i$ for each $1\le i
		\le \ell$ and consider the event $\mppar\;\cap\;\obj_i$. Fix some
		arbitrary strategy $\ostrat$ for Min and assume
		$\Prob[\game][\optzstrat\lrb{\initmem},\ostrat,\s]{\obj_i} > 0$ as otherwise
		the conclusion obtained below is trivial. For every play $\play \in \obj_i$,
		let the random variable $T_i$ denote the hitting time of a state which
		satisfies $\always\,R_i$ and $X_{T_i}$ denote the state in which one
		enters $R_i$. Also, let $\alpha_i$ denote the distribution of
		$X_{T_i}$ and $\ostrat_i \, \in \oallstrats{\game\lrb{R_i}}$ denote the
		Min strategy which simulates the play until $\always\,R_i$ in $\game$
		and then plays according to $\ostrat$. Then
		$$ \Prob[\game][\optzstrat\lrb{\initmem},\ostrat,\s]{\mppar\;\cap\;\obj_i}
			= \Prob[\game][\optzstrat\lrb{\initmem},\ostrat,\s]{\mppar
				\given \obj_i} \cdot
			 \Prob[\game][\optzstrat\lrb{\initmem},\ostrat,\s]{\obj_i}$$
		\begin{align*}	
			& = \sum_{\s' \in
				R_i}\Prob[\game][\optzstrat\lrb{\initmem},\ostrat,\s]{\mppar
				\given X_{T_i} = \s'} 
			 \cdot \Prob[\game][\optzstrat\lrb{\initmem},\ostrat,\s]{X_{T_i}=\s' \given \obj_i} 
			 \cdot \Prob[\game][\optzstrat\lrb{\initmem},\ostrat,\s]{\obj_i}      \\
			& = \Prob[\game\lrb{R_i}][\optzstrat_i\lrb{\initmem},\ostrat_i,\alpha_i]{\mppar} \cdot
			 \Prob[\game][\optzstrat\lrb{\initmem},\ostrat,\s]{\obj_i}      \\
			& = \Prob[\game][\optzstrat\lrb{\initmem},\ostrat,\s]{\obj_i}
		\end{align*}

		From the above discussion, one can see that
		$\Prob[\game][\optzstrat\lrb{\initmem},\ostrat,\s]{\mppar\;\cap\;\bigcup_i\obj_i} = \Prob[\game][\optzstrat\lrb{\initmem},\ostrat,\s]{\bigcup_i\obj_i}$.
		If we show the latter event occurs with probability $1$,
		then we are done with the converse since
		$$ \Prob[\game][\optzstrat\lrb{\initmem},\ostrat,\s]{\mppar}
			\ge
			\Prob[\game][\optzstrat\lrb{\initmem},\ostrat,\s]{\mppar\;\cap\;\bigcup_i\obj_i}
			= 1.$$
		Consider the complement objective and its probability
		$\Prob[\game][\optzstrat\lrb{\initmem},\ostrat,\s]{\complementof{\bigcup_i\obj_i}}$
		which can equivalently be written as
		$\Prob[\game][\optzstrat\lrb{\initmem},\ostrat,\s]{\bigcap_i\complementof{\obj_i}}$.
		$Z_i$ is the attractor for $R_i$ within $\game\lrb{U_i}$ where
		$R_i$ is also a Min-Trap. $\optzstrat\lrb{\initmem}$ when in 
		$\game\lrb{U_i}$ satisfies
		the hypothesis given in~\cref{claim:attr_properties}. Therefore
		for every $i$,
		\begin{eqnarray}
			\Prob[\game][\optzstrat\lrb{\initmem},\ostrat,\s]{\lrc{\always\,
					\eventually\,  Z_i\;\symdiff\; \always\, \eventually\, R_i}
				\given \eventually\, \always\, U_i} &=& 0
			\label{eqn:restriced_equiv_Z_i_R_i} \\
			\Prob[\game][\optzstrat\lrb{\initmem},\ostrat,\s]{\lrc{\always\,
					\eventually\,  R_i\;\symdiff\; \obj_i} \given \eventually\,
				\always\, U_i} &=& 0 \label{eqn:restriced_equiv_R_i_R_i}
		\end{eqnarray}

		From~\eqref{eqn:restriced_equiv_Z_i_R_i},~\eqref{eqn:restriced_equiv_R_i_R_i},
		one can then conclude
		\begin{align*}
			\Prob[\game][\optzstrat\lrb{\initmem},\ostrat,\s]{\eventually\,
			\always\, U_i\;\cap\; \complementof{\obj_i}}
			 & =
			\Prob[\game][\optzstrat\lrb{\initmem},\ostrat,\s]{\eventually\,
				\always\, U_i\;\cap\; \complementof{\always\, \eventually\,  R_i}} \\
			 & =
			\Prob[\game][\optzstrat\lrb{\initmem},\ostrat,\s]{\eventually\,
				\always\, U_i\;\cap\; \complementof{\always\, \eventually\,  Z_i}} \\
			 & =
			\Prob[\game][\optzstrat\lrb{\initmem},\ostrat,\s]{\eventually\,
				\always\, U_i\;\cap\; \eventually\, \always\,\complementof{Z_i}}   \\
			 & =
			\Prob[\game][\optzstrat\lrb{\initmem},\ostrat,\s]{\eventually\,
				\always\, \lrc{U_i\;\cap\;\complementof{Z_i}}}                     \\
			 & =
			\Prob[\game][\optzstrat\lrb{\initmem},\ostrat,\s]{\eventually\,
				\always\, U_{i+1}}.                                                \\
		\end{align*}
		Since $U_1 = \states$, $\eventually\,\always\, U_1$ is a sure event and hence
		\begin{align*}
			\Prob[\game][\optzstrat\lrb{\initmem},\ostrat,\s]{\bigcap_i\complementof{\obj_i}}
			 & =
			\Prob[\game][\optzstrat\lrb{\initmem},\ostrat,\s]{\eventually\,\always\,
				U_1\;\cap\; \complementof{\obj_1}\;\cap\;
				\bigcap_{i>1}\complementof{\obj_i}} \\
			 & =
			\Prob[\game][\optzstrat\lrb{\initmem},\ostrat,\s]{\eventually\,\always\,
				U_2\;\cap\; \complementof{\obj_2}\;\cap\;
				\bigcap_{i>2}\complementof{\obj_i}} \\
			 & =
			\Prob[\game][\optzstrat\lrb{\initmem},\ostrat,\s]{\eventually\,\always\,
				U_j\;\cap\; \complementof{\obj_j}\;\cap\;
				\bigcap_{i>j}\complementof{\obj_i}} \\
			 & =
			\Prob[\game][\optzstrat\lrb{\initmem},\ostrat,\s]{\eventually\,\always\,
				U_{l+1}}                            \\
			 & = 0
		\end{align*}
	\end{claimproof}

	\begin{remark}
		It is easy to see that the only property of $\MP > 0$ used
		in the proof was that it is shift-invariant. Hence,
		the above claim can be generalized to any objective of the form
		$\obj \, \cap \, \Eparity$ with $\obj$ being shift-invariant.
		However, the claim is presented in its present form since this
		observation by itself is of little significance if one cannot
		generalize the case with the highest color being even as well to utilize
		the induction argument.
	\end{remark}

    \maxcoloureven*
    
    \begin{claimproof}[Proof of~\cref{claim:max_colour_even}]
		For~\cref{itm:equivalence_even}, one direction (forward) is
		trivial. For
		the other direction (converse), by our assumptions, there exist
		\begin{enumerate}
			\item A uniform almost surely winning Max strategy $\zstrat_{\MP}$ that is MD such that
			      against any Min strategy $\ostrat$ and from any start state $\s$, $$
				      \Prob[\game][\zstrat_{\MP},\ostrat,\s]{\MP > 0} = 1 $$

			\item A almost surely winning Max strategy $\zstrat_{Y} \eqdef \memstrattuple$ such that
			      against any Min strategy $\ostrat \, \in \, \oallstrats{\game\lrb{Y}}$ and from
			      any start state $\s \, \in \, Y$, $$
				      \Prob[\game\lrb{Y}][\optzstrat_Y\lrb{\initmem},\ostrat,\s]{\mppar} = 1 $$

			\item A uniform positive attractor Max strategy $\zstrat_{\attr}$ that is MD defined for
			      all Max states in $X$ such that against any Min strategy $\ostrat$ and any
			      state $\s \, \in \, X$, $$
				      \Prob[\game][\zstrat_{\attr},\ostrat,\s]{\eventually\lrc{\states\lrc{d}}} > 0 $$

		\end{enumerate}
		For small probabilities $\eps_0,\eps_1 >0$,
                we define a parameterized family of finite-memory Max strategies $\optzstrat_{\eps_0,
			\eps_1} \eqdef \tuple{\memconf_{\states} \uplus \memconfset, \memup^{\prime},
			\memsuc^{\prime}}$ where

		\begin{align*}
			\memsuc^\prime\lrc{\memconf,\s}            & \eqdef \begin{cases}
				\eps_0 \cdot \zstrat_{\attr}\lrc{\s} + \lrc{1-\eps_0} \cdot \zstrat_{\MP}\lrc{\s} & \s\,\in\, X                                    \\
				\zstrat_{\MP}\lrc{\s}                                                             & \s\, \in \, Y,\; \memconf = \memconf_{\states} \\
				\memsuc\lrc{\memconf,\s}                                                          & \s\,\in\, Y,\; \memconf\, \in \, \memconfset
			\end{cases} \\
			\memup^\prime\lrc{\memconf,\tuple{\s,\s'}} & \eqdef \begin{cases}
				\memup\lrc{\memconf,\tuple{\s,\s'}}                               & \memconf \, \in \, \memconfset,\; \s\text{ and }\s' \, \in \, Y \\
				\memconf_{\states}                                                & \memconf \, \in \, \memconfset,\; \s\text{ or }\s'\, \in\, X    \\
				\eps_1 \cdot \initmem + \lrc{1-\eps_1} \cdot \memconf_{\states} & \memconf\,=\,\memconf_{\states},\; \s' \, \in \, Y              \\
				\memconf_{\states}                                                & \memconf\,=\,\memconf_{\states},\; \s' \, \in \, X
			\end{cases}
		\end{align*}

		If $\zstates \, \cap \, X$ is empty, $\eps_0$ is redundant. If $\zstates \,
		\cap \, Y$ is empty, $\memconfset$ and $\eps_1$ are redundant. 
		Observe that when there is only one even color, then indeed $Y$
		must be empty, because otherwise Max cannot win almost surely from $Y$.
		In this case, $\bits(\optzstrat_{\eps_0,\eps_1})$ is determined by how small
		$\eps_0$ has to be. We show below that exponentially small $\eps_0$
		suffices. For $\eps_1$, we assume that $\numevencolours$ is at least $2$. 
		We argue that there is an instantiation with sufficiently small (doubly exponentially small numbers
		suffice as we will see) positive values for $\eps_1$ such that
		$\optzstrat_{\eps_0, \eps_1}$ is almost surely winning from any
		start state. Specifically, we show that it is possible to instantiate 
		$\eps_0$ and $\eps_1$ such that
		\begin{align}
			\size{\eps_0} &= f_2\lrc{n,p_0} \label{eqn:size_eps0} \\
			\size{\eps_1} &= \Ocompl\lrc{\size{\game}^{c_0} + \bits(\optzstrat_Y)} \label{eqn:size_eps1}
		\end{align}
		where $f_2$ is some polynomial function in two variables, $n$ is the
		number of states in $\game$, $p_0$ is the bit size of the 
		probability transition function in $\game$ and $c_0$ is some constant independent
		of the instance.
 		From the definition of $\optzstrat$, one has $\size{\optzstrat} =
		\size{\optzstrat_Y} + 1$.
		\begin{align*}
			\bits(\memsuc^{\prime}) &= \lrc{\size{\eps_0} + \size{1-\eps_0}} 
			\card{X} (\size{\optzstrat_Y} + 1) + \bits(\memsuc) \\
			\bits(\memup^{\prime}) &= \bits(\memup) + \lrc{\size{\eps_1} + 
			\size{1-\eps_1}} \card{Y}
		\end{align*}
		
		Substituting for $\size{\eps_0}$ and $\size{\eps_1}$ from above, we get
		\begin{align*}
			\bits(\optzstrat) &= \Ocompl\lrc{f_2(n,p_0) \size{\optzstrat_Y} 
			 \card{\states} + \bits(\optzstrat_Y) + \lrc{\size{\game}^{c_0} + 
			 \bits(\optzstrat_Y)} \card{\states}} \\
			&= \Ocompl\lrc{\size{\game}^{c_1} \size{\optzstrat_Y} + 
			 \bits(\optzstrat_Y) \size{\game}}
		\end{align*}

		Both size and bit complexity of $\optzstrat$ satisfy the
		properties from~\cref{itm:strategy_even}. The proof of claim is 
		complete once we show the required bounds and that $\optzstrat$
		is almost surely winning $\mppar$ from every state.
		Observe that $$ \forall \ostrat\;
			\Prob[\game][\optzstrat\lrb{\initmem},\ostrat,\s]{\mppar} = 1 \iff \forall \ostrat\;
			\Prob[\game][\optzstrat\lrb{\initmem},\ostrat,\s]{\Eparity} = 1 \wedge
			\Prob[\game][\optzstrat\lrb{\initmem},\ostrat,\s]{\MP > 0} = 1.$$ 
		For $\Eparity$, it suffices to have $\eps_0 > 0$ since
		$\Prob[\game][\optzstrat,\ostrat,\s]{\Eparity} = \Prob[\game][\optzstrat\lrb{\initmem},\ostrat,\s]{\Eparity \, \cap \,  \always\,\eventually\,X} + \Prob[\game][\optzstrat\lrb{\initmem},\ostrat,\s]{\Eparity \, \cap \,  \complementof{\always\,\eventually\,X}}$
		\begin{align*}
			 & = \Prob[\game][\optzstrat\lrb{\initmem},\ostrat,\s]{\Eparity \, \cap \,  \always\,\eventually\,X}                     & + & \Prob[\game][\optzstrat\lrb{\initmem},\ostrat,\s]{\complementof{\always\,\eventually\,X}} & \optzstrat\text{ eventually behaves as }\optzstrat_Y                                      \\
			 & = \Prob[\game][\optzstrat\lrb{\initmem},\ostrat,\s]{\Eparity \, \cap \,  \always\,\eventually\,\states\lrc{d}} & + & \Prob[\game][\optzstrat\lrb{\initmem},\ostrat,\s]{\complementof{\always\,\eventually\,X}} & \Cref{claim:attr_properties}~\cref{itm:attractor_target_equiv} \text{ as } \eps_0 > 0 \\
			 & = \Prob[\game][\optzstrat\lrb{\initmem},\ostrat,\s]{\always\,\eventually\,\states\lrc{d}}                      & + & \Prob[\game][\optzstrat\lrb{\initmem},\ostrat,\s]{\complementof{\always\,\eventually\,X}}                      &                                                                                                   \\
			 & = \Prob[\game][\optzstrat\lrb{\initmem},\ostrat,\s]{\always\,\eventually\,X}                                          & + & \Prob[\game][\optzstrat\lrb{\initmem},\ostrat,\s]{\complementof{\always\,\eventually\,X}}                      & ~\Cref{claim:attr_properties}                                                                     \\
			 & = 1 &   &   &
		\end{align*}
		For $\MP > 0$, our argument uses the fact that the strategy
		$\optzstrat$ has finite memory and the size of the
                probabilities it uses.
                Firstly, observe that
		$\game\lrb{Y}$ is a mean-payoff-parity game with highest color $< d$. 
		\ignore{Hence, by induction hypothesis, it is possible to choose $\optzstrat_Y$ such
		that $\size{\memconfset} \le k - 1$ and the size of the probabilities to be
		$p_{Y} = \Ocompl\lrc{n^{k-1+c} \cdot \lrc{k-3}!} $.}
		Now, to show that $\Prob[\game][\optzstrat,\ostrat,\s]{\MP > 0} = 1$
		against any strategy $\ostrat$ of Min, it suffices to consider the induced
		minimizing MDP $\mdp \eqdef \game^{\optzstrat}$ and show that the
		value of every state $\tuple{\s,\initmem}$ in $\mdp$ is $1$.
        Since the original game has finitely many states and $\optzstrat$ has
        finite memory, $\mdp$ is a finite-state MDP. By standard results on
		finite-state MDPs for $\MP > 0$~\cite{Brazdil2010}, it suffices to consider
		just MD strategies for Min in this MDP, resulting in a finite-state
		Markov chain.



		Fix some MD strategy $\ostrat_{\mdp}$ of Min in $\mdp$ resulting in the Markov
		chain $\mc\lrb{\optzstrat, \ostrat_{\mdp}}$ (henceforth referred to as $\mc$).
		Given a state $\s \, \in \, \states$, let $\?{B}$ be some reachable BSCC of
		$\mc$ when starting from $\tuple{\s, \initmem}$. Observe that as long
		as $\eps_0$ and $\eps_1$ are in $(0,1)$, the BSSC
		structure itself is independent of the values of $\eps_0, \eps_1$ and hence
		$\?{B}$ can be chosen prior to fixing either value. From the definition of
		$\optzstrat$, the memory part of the state is $\memconf_{\states}$
		whenever the state belongs to $X$, except possibly at the beginning.
		This implies that in $\mc$, every state in $X \x \memconfset$ is a
		transient state or unreachable, hence cannot be part of $\?{B}$. $\?{B}$ can therefore
		be seen as disjoint union of $B_1 \eqdef \states_{\?{B}} \, \cap \, 
		 \states \x \memconf_{\states}$ and $B_2 \eqdef \states_{\?{B}} \, 
		 \cap \,Y \x \memconfset$. While $B_1$ and $B_2$ are disjoint, there
		will nevertheless be transitions from $B_1$ to $B_2$ and vice-versa,
		if both are non-empty as $\?{B}$ is strongly connected.

		We show that $\MP > 0$ by a case by case basis on whether either of $B_i$
		is empty. 
		If $B_2$ is empty, then $\?{B} = B_1$ and this implies that the
		`memory part' of the state is $\memconf_{\states}$.
		Here $\optzstrat$ behaves as $\zstrat_{\MP}$ with a perturbation
		towards $\zstrat_{\attr}$.
		But one cannot directly claim that this perturbation only causes a
		small change in the average mean-payoff of each state.
		This is because the perturbation itself might cause for some BSCC's
		to merge together or make them transient.
		These are known as singular perturbations~\cite{avrachenkov2013analytic,
		avrachenkov2002singular,lasserre1994formula} within the general area of
		research of perturbation theory~\cite{schweitzer1968perturbation,
		delebecque1983reduction}.
		When the stationary distribution is viewed as a function of the
		perturbation $\eps_0$, it is known that singular perturbations might
		exhibit a discontinuity at $\eps_0 = 0$~\cite[Section~6.1]{avrachenkov2013analytic}.
		Nevertheless, the stationary distribution of the BSCC's in perturbed
		chain are known to vary as $\pi(\eps_0) = \pi_0 + \pi_1 \eps_0 + \pi_2
		\eps_0^2 + \ldots$ with $\pi_0$ being a convex combination of distributions
		of the original BSCC's~\cite[Eq.~6.20]{avrachenkov2013analytic}.
		This implies that the mean-payoff is also close to some convex
		combination of mean-payoff's in the original BSCC's for small $\eps_0$.
		Thus, one can bound the size of $\eps_0$ to be some polynomial to
		ensure that $\MP > 0$.
		We formalize these arguments below.
		Let $\pi^{B_1}(\eps_0)$ denote the stationary distribution in $B_1$ as
		a function of $\eps_0$.
		Denote by $C_1 \ldots C_k \subseteq B_1$ the BSCC's in the original
		chain ($\eps_0 = 0$) with corresponding steady state distributions
		$\pi^{C_i}$. 
		Note that all $\pi$ are vectors but we omit the  $\vec{\cdot}$ notation
		for simplicity.
		\begin{claim}\label{claim:perturbed-stationary-bound}
			There are computable constants
			$\pi_0^{B_1}$, $\pi_1^{B_1} \in \Q^{B_1}$, $K^{B_1} \geq 0$,
			$\eps^{B_1} \in \Q$, all of which have sizes polynomial in $\size{B_1}$ such
			that $\forall\, \eps_0 \in [0,\eps^{B_1}]$
			\begin{equation} \label{eq:error-term-bound}
				\norm[\infty]{\pi^{B_1}(\eps_0) - \lrc{\pi_0^{B_1} +
				 \pi_1^{B_1} \eps_0 }} \leq K^{B_1}\, \eps_0^2
			\end{equation}
			Moreover,
			\begin{align}
				\pi_0^{B_1} &= \pi^{(1)}_0 C \label{eq:pi-0} \\
				\pi^{B_1}_1 &= \pi^{(1)}_1 C + \pi^{B_1}_0 \probp_1 H \label{eq:pi-1}
			\end{align}
			where $C(i,\s) \eqdef \mathbf{1}_{\s \in C_i} \cdot \pi^{C_i}(\s)$ is a
			$k \x B_1$ matrix, $\pi^{(1)}_i$ is another sequence which can be
			seen as coefficients of stationary distribution of a smaller chain,
			$H$ is the deviation matrix of $\probp^{B_1}_{\MP}$.
		\end{claim}
		\begin{claimproof}
			Let $\probp^{B_1}_{\MP}$ denote the transition matrix of
			$\mc\lrb{\zstrat_{\MP}, \ostrat_{\mdp}}$ restricted to $\?{B} = B_1$,
			then the probability transition matrix for $\mc$ restricted to $B_1$
			can be seen as $\probp^{B_1} = \probp^{B_1}_{\MP} + \eps_0 \probp_1$, where
			$\probp_1$ has non-zero rows only for states $\s \in \zstates \cap
			X$ such that $\zstrat_{\attr}(\s) \neq \zstrat_{\MP}(\s)$ and in
			this case $\probp_1\lrc{\tuple{\s,\zstrat_{\attr}(\s)}} = 1$,
			$\probp_1\lrc{\tuple{\s, \zstrat_{\MP}(\s)}} = -1$. 
			Also note that $B_1$ is a sub-Markov chain of $\probp_{\MP}$,
			since the perturbation can only add edges, and if there is no
			probability leaking to states outside $B_1$ from $B_1$ in the
			perturbed matrix ($B_1 = \?{B}$ is a BSCC), then this is also
			the case in the unperturbed matrix. But $\probp^{B_1}_{\MP}$ need
			not be irreducible, instead let $C_1 \ldots C_k \subseteq B_1$ be
			it's BSCC's. 
			This is the setting considered in~\cite[Section~6.2.1]{avrachenkov2013analytic}.
			From~\cite[Proposition~ 6.1]{avrachenkov2013analytic} and the
			preceding text, one can express $\pi^{B_1}(\eps_0)$ as an infinite
			series 
			\begin{equation} \label{eq:avrachenkov-expansion}
				\pi^{B_1}_0 + \pi^{B_1}_1 \eps_0 + \pi^{B_1}_2 \eps_0^2 + \ldots.
			\end{equation}
			The first two terms can also be explicitly given as~\eqref{eq:pi-0}, 
			\eqref{eq:pi-1}from the same proposition.
			Moreover, since all the entities $\pi^{(1)}_0, \pi^{(1)}_1, C, H$
			are either solutions of linear equations with entries whose size
			is bounded by some polynomial or computed using such matrices,
			$\size{\pi^{B_1}_0}, \size{\pi^{B_1}}$ is polynomial.
			To now bound the higher order $\eps_0$ terms, we instead go by the
			Taylor series expansion of $\pi^{B_1}(\eps_0)(\s)$ for each state
			$\s$.
			The stationary distribution satisfies the equations
			\[
				\pi^{B_1}(\eps_0) (\probp^{B_1} - I)  = \vec{0}, \pi^{B_1}(\eps_0) \vec{1} = 1
			\]
			Symbolically solving these linear system of equations (for instance
			by using Cramer's rule) gives us that each component $\pi^{B_1}
			(\eps_0)(\s)$ is some rational function $f_{\s}(\eps_0) = \frac{
			p_{\s}(\eps_0)}{q_{\s}(\eps_0)}$, where both $p_\s$ and $q_{\s}$ are
			polynomials of degree at most $\card{B_1}$ (assume both $p_{\s}$
			and $q_{\s}$ are in reduced form, i.e., any common factors are canceled).
			Furthermore, since the previous discussion guarantees that $\lim_{
			\eps_0 \tendsto 0}f_{\s}(\eps_0)$ exists, one can deduce that $q_{\s}$
			doesn't vanish at $0$ since $p_\s$ and $q_{\s}$ are in reduced form. 
			This implies that $f_{\s}$ is analytic in an open disk around $0$,
			and thus the Taylor series of $f_{\s}$ converges to $f_{\s}$ in this
			disk.
			Consequently, the Taylor's theorem guarantees that for every 
			$\eps_0 \in (0,\eps^{B_1})$,
			\begin{equation} \label{eq:taylor-expansion}
    			\pi^{B_1}(\eps_0)(\s) = f_{\s}(0) + f_{\s}'(0)\eps_0 + 
    			\frac{f_{\s}''(\xi)}{2}\eps_0^2 \quad \text{for some } \xi \in (0, \eps_0).
			\end{equation}
			We choose $\eps^{B_1}$ small enough so that $q_{\s}$ doesn't vanish in
			the above range and every point is sufficiently far away from any
			root of $q_{\s}$. 
			That the size of $\eps^{B_1}$ can be chosen to be polynomial is
			established by observing that the magnitude of any root of $q_{\s}$
			is lower bounded by some polynomial.
			Therefore, in this interval $\abs{f_{\s}''}$ is bounded by some
			constant $K_\s$ and size of $K_\s$ is polynomially bounded since
			$f_{\s}''$ is again a rational function with denominator bounded
			away from $0$.
			Let $K^{B_1} = \max_{\s \in B_1} K_\s$ whose size is polynomial.
			Since both~\eqref{eq:taylor-expansion} and~\eqref{eq:avrachenkov-expansion}
			initiated at $\s$ represent the same function, matching the terms
			gives us that $f_{\s}(0) = \pi_0^{B_1}$ and $f_{\s}'(0) = 
			\pi_1^{B_1}$.
			So, defining $f = (f_\s)_{\s \in B_1}$
			\begin{align*}
				\norm[\infty]{\pi^{B_1}(\eps_0) - \lrc{\pi_0^{B_1} +
				\pi_1^{B_1} \eps_0 }} &= \norm[\infty]{\pi^{B_1}(\eps_0)
				 - \lrc{f(0) +
				  \eps_0 f'(0)}} \\
				  & \leq \norm[\infty]{\frac{f''(\vec{\xi})}{2}\eps_0^2} \\
				  & \leq K^{B_1}/2 \eps_0^2
			\end{align*}
		\end{claimproof}

		Let $\mu^{B_1}(\eps_0)$ denote the mean-payoff in $B_1$ as a function
		of $\eps_0$.
		\begin{claim} \label{claim:size-of-eps-0}
			There exist $\eps$, with a polynomially bounded size, such that
			$\mu^{B_1}(\eps) > 0$.
		\end{claim}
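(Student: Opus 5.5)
We have to show that for sufficiently small $\eps$ of polynomial size, $\mu^{B_1}(\eps) = \pi^{B_1}(\eps)\cdot \vec{r} > 0$, where $\vec{r}\in\Q^{B_1}$ is the vector of expected one-step rewards under the perturbed chain. The plan is to expand $\mu^{B_1}(\eps_0)$ to first order in $\eps_0$ using \Cref{claim:perturbed-stationary-bound}, show the zeroth-order term is strictly positive with polynomially bounded margin, and then absorb the linear and quadratic error terms by choosing $\eps_0$ small.

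\textbf{Decomposing the reward vector.} The expected one-step reward also depends on $\eps_0$. Write $\vec{r}(\eps_0) = \vec{r}_0 + \eps_0 \vec{r}_1$, where $\vec{r}_0$ is the reward vector under $\zstrat_{\MP}$ and $\vec{r}_1$ is supported on the perturbed Max states in $X$. Both vectors have entries bounded in absolute value by $2R$. Then
\[
\mu^{B_1}(\eps_0) = \pi^{B_1}_0\vec{r}_0 + \eps_0\bigl(\pi^{B_1}_1\vec{r}_0 + \pi^{B_1}_0\vec{r}_1\bigr) + \bigl(\pi^{B_1}(\eps_0)-\pi^{B_1}_0-\eps_0\pi^{B_1}_1\bigr)\vec{r}_0 + \eps_0\bigl(\pi^{B_1}(\eps_0)-\pi^{B_1}_0\bigr)\vec{r}_1 .
\]
The last two terms are bounded by \eqref{eq:error-term-bound} together with $\norm[\infty]{\pi^{B_1}(\eps_0)-\pi^{B_1}_0}\le \norm[\infty]{\pi^{B_1}_1}\eps_0 + K^{B_1}\eps_0^2$. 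Altogether $|\mu^{B_1}(\eps_0) - \pi^{B_1}_0\vec{r}_0| \le L\eps_0$ for $\eps_0\le \min(1,\eps^{B_1})$. Here $L$ is polynomial in $|B_1|$, $R$, $\norm[\infty]{\pi^{B_1}_1}$ and $K^{B_1}$, so by the claim its size is polynomial.

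\textbf{Positivity of the leading term (main step).} By \eqref{eq:pi-0}, $\pi^{B_1}_0 = \pi^{(1)}_0 C$ is a convex combination of the stationary distributions $\pi^{C_i}$ of the BSCCs $C_1,\dots,C_k$ of the unperturbed chain $\mc\lrb{\zstrat_{\MP},\ostrat_{\mdp}}$ restricted to $B_1$. Each $C_i$ is a BSCC of a Markov chain induced by the almost surely winning MD strategy $\zstrat_{\MP}$ against an MD Min strategy, reachable from some start state. Hence in $C_i$ we have $\MP>0$ almost surely. By the ergodic theorem this means $\pi^{C_i}\vec{r}_0>0$ exactly, since the mean payoff in a BSCC is almost surely the constant $\pi^{C_i}\vec{r}_0$. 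This constant is the solution of a linear system with polynomial-size rational coefficients. It is therefore a rational of polynomial size, so it is at least $\delta \eqdef 2^{-\mathrm{poly}}$. A convex combination then yields $\pi^{B_1}_0\vec{r}_0\ge\delta$.

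The delicate point is that $\pi^{(1)}_0$ is a genuine probability vector, i.e.\ nonnegative with sum $1$. This follows since it is the stationary distribution of the aggregated chain in \cite[Section~6.2.1]{avrachenkov2013analytic}.

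\textbf{Choosing $\eps$.} Take $\eps \eqdef \min\bigl(1,\eps^{B_1},\delta/(2L)\bigr)$, or any rational below this with polynomial size. Then $\mu^{B_1}(\eps)\ge \delta - L\eps\ge\delta/2>0$. All quantities involved have polynomially bounded sizes, which gives the claim. There are finitely many MD Min strategies and BSCCs, so taking the minimum over them keeps $\eps$ polynomially sized and independent of $\ostrat_{\mdp}$.
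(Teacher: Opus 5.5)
Your proposal is correct and follows the same route as the paper. Both expand $\mu^{B_1}(\eps_0)$ to first order via \Cref{claim:perturbed-stationary-bound}, and both observe that the zeroth-order term $\pi^{(1)}_0 C$ yields a convex combination of the BSCC mean payoffs $\mu^{C_i}\ge\mu_{\min}>0$ of polynomial size. The error terms are then absorbed by choosing $\eps$ small but of polynomial size. You are more careful than the paper's terse argument on two points: you account for the $\eps_0$-dependence of the expected one-step reward vector, and you justify that $\pi^{(1)}_0$ is a genuine probability vector.
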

		\begin{claimproof}
			Using~\cref{claim:perturbed-stationary-bound}, one can similarly
			construct an infinite series for $\mu^{B_1}(\eps_0)$. Since as $\eps_0
			\tendsto 0$, $\pi^{B_1}(\eps_0) = \pi^{B_1}_0 = \pi^{(1)}_0 C$,
			$\lim_{\eps_0 \tendsto 0} \mu^{B_1}(\eps_0) = (\pi^{(1)}_0) M$
			where $M = (\mu^{C_i})_{1 \leq i \leq k}$ is a $k \x 1$ vector
			capturing the mean-payoffs of the BSCC's $C_i$ in the chain
			$\mc\lrb{\zstrat_{\MP}, \ostrat_{\mdp}}$.
			By properties $\zstrat_{\MP}$, each $\mu^{C_i} \geq \mu_{\min}$
			for some $\mu_{\min} > 0$ of polynomial size and hence the limit is also
			$\geq \mu_{\min}$. 
			This combined with~\eqref{eq:error-term-bound} imply that the mean-payoff
			remains $> 0$ and of polynomial size for some $\eps_0 = \eps$ of
			polynomially bounded size.
		\end{claimproof}

		On the other hand, when $B_1$ is empty, $\optzstrat$ behaves as
		$\optzstrat_Y$ and $B_1$ being empty implies $\ostrat_{\mdp}$ is also a valid
		strategy in $\game\lrb{Y}$. By properties of $\optzstrat_Y$, this means $\MP > 0$
		almost surely.

		If both $B_1$ and $B_2$ are non-empty, then we argue by deriving lower bounds
		on the expected sum of rewards in $B_1$ and $B_2$ under steady state
		distribution. Let $X^{\s}_{i,\?B}$ denote the state at time $i$ and
		$Y^{\s}_{i,\?B}$ denote the random variable which computes the sum of the
		rewards until step $i$ when starting from $\s \, \in \, \?B$. Similarly let
		$T^{\s}_{2,\?B}$ (resp. $T^{\qstate}_{1,\?B}$) denote the hitting times of
		$B_2$ (resp. $B_1$) when starting from $\s \, \in \, B_1$ (resp. $\qstate \,
		\in \, B_2$). Note that although the states of $\?B$ are tuples, we
		refer to them by $\s$ and $\qstate$ for notational simplicity.
		\begin{align}
			Y^{\s}_{i,\?B} &\eqdef \sum_{j=0}^{i-1} r\lrc{\tuple{X^{\s}_{j,\?B}, X^{\s}_{j+1,\?B}}} & \text{for all } \s\,\in\,\?B \text{, } i \ge 0 \label{eqn:Y_rv_defn}
		\\  T^{\s}_{2,\?B} &\eqdef \min\setcomp{i}{X^{\s}_{i,\?B} \, \in \, B_2} & \s \, \in \, B_1 \label{eqn: T2_rv_defn}
		\\  T^{\qstate}_{1,\?B} &\eqdef \min\setcomp{i}{X^{\qstate}_{i,\?B} \, \in \, B_1} & \qstate \, \in \, B_2
		\end{align}
		If one can find uniform constants $v_1 > 0$ and $v_2$ such that
		\begin{align}
			\forall\, \s\quad       & \expectation[\?B][\s]{Y^{\s}_{T^{\s}_{2,\?B},\?B}}           & \ge v_1 \label{eqn:hyp_expectation_lower_bound_B1}
		\\	\forall\, \qstate \quad & \expectation[\?B][\qstate]{Y^{\qstate}_{T^{\qstate}_{1,\?B},\?B}} & \ge v_2 \label{eqn:hyp_expectation_lower_bound_B2}
		\\	                        & v_1 + v_2                              & > 0 \label{eqn:satifying_condition}
		\end{align}
		then it suffices since $\MP$ almost surely equals $\expectation{\MP}$ within a
		BSCC and
		\begin{align*}
			\expectation[\?{B}]{\MP} & = \frac{\sum_{\s} \lambda_{\s} Y^{\s}_{T^{\s}_{2,\?B},\?B} + \sum_{\qstate} \nu_{\qstate} Y^{\qstate}_{T^{\qstate}_{1,\?B},\?B}}{\sum_{\s} \lambda_{\s} T^{\s}_{2,\?B} + \sum_{\qstate} \nu_{\qstate} T^{\qstate}_{1,\?B}} \\
			                         & \ge \frac{\sum_{\s} \lambda_{\s} v_1 + \sum_{\qstate} \nu_{\qstate} v_2}{\sum_{\s} \lambda_{\s} T^{\s}_{2,\?B} + \sum_{\qstate} \nu_{\qstate} T^{\qstate}_{1,\?B}}        \\
			                         & \ge \frac{v_1 + v_2}{\sum_{\s} \lambda_{\s} T^{\s}_{2,\?B} + \sum_{\qstate} \nu_{\qstate} T^{\qstate}_{1,\?B}}                                                            \\
			                         & > 0
		\end{align*}
		where $\lambda_\s$ (resp. $\nu_{\qstate}$) are long term steady state
		probabilities of hitting $B_1$ (resp. $B_2$) at $\s$ (resp. $\qstate$).

		We have $\expectation[\?B][\qstate]{Y^{\qstate}_{T^{\qstate}_{1,\?B},\?B}}
		\ge -R \expectation[\?B][\qstate]{T^{\qstate}_{1,\?B}}$. To upper bound
		$T^{\qstate}_{1,\?B}$, observe that $\card{B_2} \le \card{Y} \x
		\card{\memconfset}$. Let $p_Y$ denote the maximum size of any 
		probability used by $\optzstrat_Y$ and $x_Y$ denote the smallest
		probability in $\mc$ restricted to $Y \x \memconfset$. It is easy
		to see that $x_Y \ge 2^{-p_Y}$ and 
		\begin{equation} \label{eqn:encoding_length_to_bit_size}
			\bits(\optzstrat_Y) = 
			\Ocompl(\card{Y} \size{\optzstrat_Y} p_Y). 
		\end{equation}
		The probability that a state with 
		transition to $B_1$ is hit in the first $\card{B_2}$ steps is at least
		${x_Y}^{\card{B_2}}$ from any start state, $\ie$ continuing for every $b$
		steps of that size we have
		\[
			\forall \qstate\, \in \, B_2 \; \Prob[\?B][\qstate]{T^{\qstate}_{1,\?B} \ge  b \cdot
				\card{B_2}} \le \lrc{1-{x_Y}^{\card{B_2}}}^b
		\]
		This gives $\expectation[\?B][\qstate]{T^{\qstate}_{1,\?B}} \le
		{x_{Y}}^{-\card{B_2}} \cdot \card{B_2}$ Therefore
		\begin{equation} \label{eqn:expectation_sum_in_B2}
			\forall \qstate\, \in \, B_2 \; \expectation[\?B][\qstate]{Y^{\qstate}_{T^{\qstate}_{1,\?B},\?B}} \ge - {x_{Y}}^{-\card{B_2}} \cdot \card{B_2} \cdot R
		\end{equation}

		To find a lower bound for $\expectation[\?B][\s]{Y^{\s}_{T^{\s}_{2,\?B},\?B}}$,
		we need to link the reward gained in $B_1$ in the Markov chain $\mc$ to
		reward gained in $B_1$ in the Markov chain $\mc^{\prime} \eqdef
		\game\lrb{\zstrat_1,\ostrat^{\prime}}$ where $\zstrat_1$ and $\ostrat^{\prime}$
		are memoryless restrictions of $\optzstrat$ and $\ostrat_{\mdp}$ to $\states \x
		\memconf_{\states}$. Also note that $\mc^{\prime}\lrb{B_1}$ 
		is a well-defined sub-Markov chain and is equivalent to the conditioned
		Markov chain of $\mc$ on $B_1$.

		\begin{definition}[Conditioned Markov chain]\label{defn: conditioned_mc}
			If $\mc \eqdef \tuple{\states, \transition, \probp}$ is some 
			finite-state Markov chain and $\states^{\prime} \subseteq \states$ such 
			that for all $\s^{\prime} \, \in \, \states^{\prime}$, 
			$\sum_{\s \in \states^{\prime}} \probp\lrc{\tuple{\s^{\prime},\s}} 
			 > 0$, then the conditioned Markov chain is well defined and given 
			by $\mc_{\states^{\prime}} \eqdef \tuple{\states^{\prime}, \transition 
			 \, \cap \, \lrc{\states^{\prime} \x \states^{\prime}}, \probp_{\states^{\prime}}}$, 
			where $\probp_{\states^{\prime}}\lrc{\tuple{\s^{\prime}, \s}} \eqdef 
			 \frac{\probp\lrc{\tuple{\s^{\prime},\s}}}{\sum_{\s \in \states^{\prime}} 
			 \probp\lrc{\tuple{\s^{\prime},\s}}}$
		\end{definition}

		Denote by $r^{\?B}_{\ell} \eqdef r\lrc{\tuple{X^{\s}_{l,\?B},
				X^{\s}_{l+1,\?B}}}$ the random variable which gives the reward on the
		$\ell^{th}$ step when starting from $\s \in \?B$. For an event $A$, let
		$\mathbf{1}_{A}$ denote the indicator random variable of $A$. Simplifying, we
		get $\expectation[\?B][\s]{Y^{\s}_{T^{\s}_{2,\?B},\?B}}$
		\begin{align*}
			&= \expectation[\?B][\s]{ \sum_{\ell=0}^{\infty} r^{\?B}_{\ell} \cdot \mathbf{1}_{ T^{\s}_{2,\?B} > \ell } } & \text{from}~\eqref{eqn:Y_rv_defn}
			\\ &= \sum_{\ell=0}^{\infty} \expectation[\?B][\s]{r^{\?B}_{\ell} \cdot \mathbf{1}_{ T^{\s}_{2,\?B} > \ell }} & \abs{r^{\?B}_{\ell}} \le R \text{ , } \expectation[\?B][\s]{T^{\s}_{2,\?B}} < \infty
			\\ &= \sum_{\ell=0}^{\infty} \expectation[\?B][\s]{r^{\?B}_{\ell} \cdot \mathbf{1}_{ T^{\s}_{2,\?B} > \ell + 1}} + \expectation[\?B][\s]{r^{\?B}_{\ell} \cdot \mathbf{1}_{ T^{\s}_{2,\?B} = \ell + 1 }} & T^{\s}_{2,\?B} \text{ is integer valued}
		\end{align*}

		The value $\expectation[\?B][\s]{r^{\?B}_{\ell} \cdot
		\mathbf{1}_{T^{\s}_{2,\?B} = \ell+1}}$ can be lower bounded by $- R \cdot
		\Prob[\?B][\s]{T^{\s}_{2,\?B} = \ell+1}$ since $r^{\?B}_{\ell} \ge -R$
		surely. Furthermore, it is easy to see from~\eqref{eqn: T2_rv_defn} that the
		event $T^{\s}_{2,\?B} > \ell + 1$ is exactly $\bigcap_{j=0}^{\ell+1}
		X^{\s}_{j,\?B} \, \in \, B_1$. Denoting the latter event by
		$\always^{[0,\ell+1]} B_1$ and continuing,
		\begin{align*}
			&\ge \sum_{\ell=0}^{\infty} \lrc{ \expectation[\?B][\s]{r^{\?B}_{\ell} \cdot \mathbf{1}_{\always^{[0,\ell+1]} B_1}} - R \cdot \Prob[\?B][\s]{T^{\s}_{2,\?B} = \ell+1} } & 
			\\ &\ge \lrc{\sum_{\ell=0}^{\infty} \expectation[\?B][\s]{r^{\?B}_{\ell} \cdot \mathbf{1}_{\always^{[0,\ell+1]} B_1}} } - R & \s \in B_1
		\end{align*}
		Intuitively, the expectation of the sum as long as one stays in $B_1$ should be
		the same as the expectation of the sum in the conditioned Markov chain
		$\mc_{B_1}$ (this is also the reason to throw away the last transition). We
		formalize this notion.
		\begin{claim}\label{claim:link_conditional_expectation_to_expectation_in_conditioned_mc}
			For all $\ell \ge 0$, $\s \, \in \, B_1$
			\[
				\expectation[\?B][\s]{r^{\?B}_{\ell} \cdot \mathbf{1}_{\always^{[0,\ell+1]} B_1}} = \expectation[\mc_{B_1}][\s]{r^{\mc_{B_1}}_{\ell}} \cdot \Prob[\?B][\s]{\always^{[0,\ell+1]} B_1}
			\]
		\end{claim}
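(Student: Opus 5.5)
We want, for all $\ell \ge 0$ and $\s \in B_1$,
\[
\expectation[\?B][\s]{r^{\?B}_{\ell} \cdot \mathbf{1}_{\always^{[0,\ell+1]} B_1}} = \expectation[\mc_{B_1}][\s]{r^{\mc_{B_1}}_{\ell}} \cdot \Prob[\?B][\s]{\always^{[0,\ell+1]} B_1}.
\]
The plan is to compute both sides as finite sums over paths of length $\ell+1$ that stay inside $B_1$. First we note that $\mc_{B_1}$ is well defined. The conditioned chain of \Cref{defn: conditioned_mc} requires every $\s' \in B_1$ to have positive probability mass into $B_1$. This holds because, in memory mode $\memconf_{\states}$, the strategy $\optzstrat$ keeps mode $\memconf_{\states}$ with probability $1-\eps_1>0$ on moves into $Y$ and with probability $1$ on moves into $X$. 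Hence every successor state $\s''$ of $\s'$ is reachable in mode $\memconf_{\states}$, and its copy $(\s'',\memconf_{\states})$ lies in $\?B$ because $\?B$ is a BSCC.

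Next we expand the left-hand side. For a path $w=\s_0\s_1\cdots\s_{\ell+1}$ with $\s_0=\s$ and all $\s_j\in B_1$, let $p(w)=\prod_{j=0}^{\ell}\probp(\s_j,\s_{j+1})$ be its probability in $\mc$. On the cylinder of $w$, the reward at step $\ell$ is the deterministic value $r(\s_\ell,\s_{\ell+1})$. This gives
\[
\text{LHS}=\sum_{w} p(w)\, r(\s_\ell,\s_{\ell+1}),\qquad \Prob[\?B][\s]{\always^{[0,\ell+1]} B_1}=\sum_w p(w).
\]
In $\mc_{B_1}$ the same path has probability $p(w)/\prod_{j=0}^{\ell} Z(\s_j)$, where $Z(\s')=\sum_{\s''\in B_1}\probp(\s',\s'')$ is the row mass kept inside $B_1$.

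The key step is that $Z(\s')$ is the same for every $\s'\in B_1$, so the normalising product equals $Z^{\ell+1}$ for every path. The normalisation then factors out of the sum. It gives $\expectation[\mc_{B_1}][\s]{r_\ell}=\text{LHS}/Z^{\ell+1}$ and, summing the rewardless version, $\Prob[\?B][\s]{\always^{[0,\ell+1]} B_1}=Z^{\ell+1}$ because $\mc_{B_1}$ is stochastic. Multiplying these two identities yields the claim.

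To justify that $Z(\s')=1-\eps_1$ is constant on $B_1$, we examine the update function $\memup'$. From mode $\memconf_{\states}$, the chain leaves $B_1$ only by a move to some $\s''\in Y$ combined with the memory switch to $\initmem$, which has probability $\eps_1$. Moves into $X$ always stay in $\memconf_{\states}$. So $Z(\s')=1-\eps_1\cdot\Prob[]{\text{next state}\in Y}$, which is constant only when every successor of every $\s'\in B_1$ lies in $Y$ or every successor lies in $X$.

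This is the main obstacle, since in general the next-state split between $X$ and $Y$ varies from state to state. If $Z$ is not constant, I would instead refine the product space so that the memory switch is an explicit intermediate coin flip, placing the $\eps_1$ decision on a separate step. Then every step of $\mc$ restricted to $B_1$ either stays in $B_1$ with the underlying-game probability or exits. The fallback is to prove the equality as stated only for the memoryless chain $\mc'[B_1]$, which is what the surrounding argument actually uses, and to read $\expectation[\mc_{B_1}][\s]{\cdot}$ as the expectation in $\mc'[B_1]$. I expect this reconciliation between the exact conditioned chain and the memoryless restriction $\mc'$ to be the delicate point; the rest is the routine path-sum bookkeeping above.
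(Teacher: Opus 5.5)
Your path-sum computation is correct, and it proves the identity whenever the retained mass $Z$ is the same at every state of $B_1$. Your diagnosis is also correct: $Z(\s',\memconf_{\states})=1-\eps_1\cdot\Pr[\text{next state}\in Y]$ varies in general. But no reformulation removes this obstacle. Conditioning on $\always^{[0,\ell+1]}B_1$ weights each path by $\prod_j Z(\s_j)$ relative to $\mc_{B_1}$. When this weight depends on the path, the statement is false, even in the paper's own setting.

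Here is a counterexample. Take a random state $a$ of colour $2$ that moves to $a$ or $b$ with probability $\frac12$ each, and a Min state $b$ of colour $0$ with successors $a$ and $b$. Put reward $2$ on $a\to a$ and $1$ on the other edges. Every play wins, $X=\{a\}$ (Min can loop at $b$) and $Y=\{b\}$. Let Min's MD strategy in $\game^{\optzstrat}$ play $b\to a$ in both memory modes. Then $\?B=\{(a,\memconf_{\states}),(b,\memconf_{\states}),(b,\initmem)\}$ and $B_1=\{(a,\memconf_{\states}),(b,\memconf_{\states})\}$, with $Z=1-\eps_1/2$ and $Z=1$ respectively. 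For $\s=(a,\memconf_{\states})$ and $\ell=1$, the paths inside $B_1$ are $aaa$, $aab$ and $aba$, with probabilities $\frac14$, $\frac{1-\eps_1}{4}$ and $\frac{1-\eps_1}{2}$. Hence $\Prob[\?B][\s]{\always^{[0,2]}B_1}=\frac{4-3\eps_1}{4}$, and the left-hand side is $\frac{4-3\eps_1}{4}+\frac14=\frac{5-3\eps_1}{4}$. In $\mc_{B_1}$ the path $aaa$ has probability $(2-\eps_1)^{-2}$. So the right-hand side is $\frac{4-3\eps_1}{4}\bigl(1+(2-\eps_1)^{-2}\bigr)$, which exceeds the left-hand side by $\frac{\eps_1(1-\eps_1)}{4(2-\eps_1)^2}>0$ for every $\eps_1\in(0,1)$.

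The paper's proof breaks at the same point. Its auxiliary claim $\theta_{\s,\ell}=\theta^{\prime}_{\s,\ell}$ already fails at $\ell=1$ in this example: $\theta_{\s,1}(a)=\frac{2-\eps_1}{4-3\eps_1}$, while $\theta^{\prime}_{\s,1}(a)=\frac{1}{2-\eps_1}$. Its induction step computes $\Prob[\?B][\s]{X_{\ell+1}=\s_2\given E_{\ell+1}}$, but $\theta_{\s,\ell+1}$ is conditioned on $E_{\ell+2}$. The extra conditioning on $X_{\ell+2}\in B_1$ reweights by $Z(X_{\ell+1})$, which is exactly your obstruction. For the same reason, the paper's remark that $\mc^{\prime}[B_1]$ is equivalent to $\mc_{B_1}$ is wrong: from $a$ they move to $a$ with probabilities $\frac12$ and $\frac{1}{2-\eps_1}$.

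Neither of your fallbacks helps.
\begin{itemize}
\item An explicit coin-flip step still leaves retained mass $1-\eps_1$ at the pending $Y$-nodes and $1$ elsewhere, so $Z$ is still not constant.
\item Reading the right-hand side in $\mc^{\prime}[B_1]$ fails because the left-hand side equals $\expectation[\mc^{\prime}][\s]{r_\ell\prod_{j=1}^{\ell+1}\kappa(X_j)}$, with $\kappa=1-\eps_1$ on $Y$ and $1$ on $X$. This does not factor into $\expectation[\mc^{\prime}][\s]{r_\ell}$ times the survival probability; in the example it gives $\frac{5-3\eps_1}{4}$ against $\frac54\cdot\frac{4-3\eps_1}{4}$.
\end{itemize}

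What the surrounding argument actually needs is a lower bound on $\expectation[\?B][\s]{Y^{\s}_{T^{\s}_{2,\?B},\?B}}$, and this can be obtained without the identity. During a $B_1$-phase, $\mc$ is exactly $\mc^{\prime}$ plus an independent $\eps_1$-coin on each move into $Y$. So $T_2$ is a stopping time for a filtration in which the multichain Poisson martingale $Y_t+h(X_t)-\sum_{i<t}g(X_i)$ of $\mc^{\prime}$ remains a martingale, where the gain satisfies $g\ge\mu^{\prime}$. Optional stopping then gives $\expectation{Y_{T_2}}\ge\mu^{\prime}\,\expectation{T_2}-2\norm[\infty]{h}$ directly.
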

		\begin{claimproof}[Proof of~\cref{claim:link_conditional_expectation_to_expectation_in_conditioned_mc}]
			For succinctness let $E_{\ell+1} = \always^{[0,\ell+1]} B_1$. The claim follows if one proves that for any $\s_1, \s_2 \, \in \, B_1$
			\[
				\Prob[\?B][\s]{X^{\s}_{\ell,\?B} = \s_1, X^{\s}_{\ell+1,\?B} = \s_2 \given E_{\ell+1} } = \Prob[\mc_{B_1}][\s]{X^{\s}_{\ell,\mc_{B_1}} = \s_1, X^{\s}_{\ell+1,\mc_{B_1}} = \s_2}
			\]
			holds. It is easy to see why as the rewards on the identical transitions are
			equal in both Markov chains. To show above, we first start by showing that the
			distribution of states at the $\ell^{th}$ step is identical in both scenarios.
			Given $\ell$ and start state $\s$, for all $\s_1 \, \in \, B_1$ let
			$\theta_{\s,\ell}\lrc{\s_1} \eqdef \Prob[\?B][\s]{X^{\s}_{\ell,\?B} = \s_1
				\given E_{\ell+1}}$ and $\theta^{\prime}_{\s,\ell}\lrc{\s_1} \eqdef
			\Prob[\mc_{B_1}][\s]{X^{\s}_{\ell,\mc_{B_1}} = \s_1}$.

			\begin{claim}\label{claim:link_conditional_distribution_to_distribution_in_conditioned_mc}
				For all $\ell \ge 0$, $\s \, \in \, B_1$
				\[
					\theta_{\s,\ell} = \theta^{\prime}_{\s,\ell}
				\]
			\end{claim}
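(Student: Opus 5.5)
We prove the identity $\theta_{\s,\ell}=\theta'_{\s,\ell}$ by computing both sides as sums over paths and comparing the weights path by path, with the claim following by induction on $\ell$. Write $\probp$ for the transition matrix of $\mc$ restricted to $\?B$. For $x\in B_1$, let $\beta(x)\eqdef\sum_{y\in B_1}\probp(x,y)>0$ be the one-step stay probability, so that $\probp_{B_1}(x,y)=\probp(x,y)/\beta(x)$ by \Cref{defn: conditioned_mc}.

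For the left-hand side, take a path $x_0=\s,x_1,\dots,x_\ell=\s_1$ inside $B_1$. Summing over the final step $x_{\ell+1}\in B_1$ gives
\begin{equation*}
\Prob[\?B][\s]{X_0=x_0,\dots,X_\ell=x_\ell,\ E_{\ell+1}}=\Big(\prod_{j<\ell}\probp(x_j,x_{j+1})\Big)\,\beta(x_\ell).
\end{equation*}
Hence $\theta_{\s,\ell}(\s_1)$ is proportional to $\beta(\s_1)\sum_{\text{paths}}\prod_{j<\ell}\probp(x_j,x_{j+1})$, normalised over $\s_1\in B_1$. For the right-hand side, the same path has weight
\begin{equation*}
\prod_{j<\ell}\frac{\probp(x_j,x_{j+1})}{\beta(x_j)}.
\end{equation*}
The induction then runs as follows. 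The base case $\ell=0$ is trivial: both distributions are Dirac at $\s$. For the step, I would express $\theta_{\s,\ell+1}$ through $\theta_{\s,\ell}$ and $\theta'_{\s,\ell+1}$ through $\theta'_{\s,\ell}$ using one step of the respective kernels, and check that the normalising constants coincide.

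The main obstacle is that the two path weights agree up to a global constant exactly when the factors $\beta(x_j)$ do not depend on the states visited. So the crux is to show that $\beta$ is constant on $B_1$ for the strategy $\optzstrat_{\eps_0,\eps_1}$. The first thing I would try is to read this off the definition of $\memup'$. In memory mode $\memconf_{\states}$, the only way to leave $B_1$ is the $\eps_1$-coin: on a move into $Y$, the mode switches to $\initmem$ with probability $\eps_1$, and this coin is independent of the move chosen. I would therefore try to argue that $1-\beta(x)$ is the same for all $x\in B_1$, either directly or after a harmless modification of the construction, for example flipping the $\eps_1$-coin at every step and acting on it only when the current state is in $Y$. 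That modification changes neither the memory size nor the polynomial bit bounds.

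If $\beta$ turns out to be genuinely non-uniform, the fallback is to carry the ratio $\beta(x_\ell)/\prod_{j<\ell}\beta(x_j)$ explicitly. Since every $\beta(x)$ lies in $[1-\eps_1,1]$, this ratio is within a factor $(1-\eps_1)^{\pm(\ell+1)}$ of $1$. The equality in \Cref{claim:link_conditional_expectation_to_expectation_in_conditioned_mc} would then be replaced by a two-sided bound, which is still enough for the lower bound $v_1$ needed in \eqref{eqn:hyp_expectation_lower_bound_B1}.

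Once $\theta_{\s,\ell}=\theta'_{\s,\ell}$ is established, the joint statement for the pair $(X_\ell,X_{\ell+1})$ follows by one more application of the same computation. That yields \Cref{claim:link_conditional_expectation_to_expectation_in_conditioned_mc}.
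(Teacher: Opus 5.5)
There is a genuine gap, and it sits exactly at the crux you identified. Your path-weight computation is correct: $\theta_{s,\ell}$ has weights $\beta(x_\ell)\prod_{j<\ell}\probp(x_j,x_{j+1})$ and $\theta'_{s,\ell}$ has weights $\prod_{j<\ell}\probp(x_j,x_{j+1})/\beta(x_j)$, so the identity requires $\beta$ to be constant on $B_1$. But $\beta$ is not constant for $\optzstrat_{\eps_0,\eps_1}$. From $(x,\memconf_{\states})$ the memory leaves $\memconf_{\states}$ only on a move into $Y$, so $1-\beta(x)=\eps_1\, q(x)$, where $q(x)$ is the probability that the next game state lies in $Y$. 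This $q$ is $0$ at states whose successors are all in $X$ and positive elsewhere. Your tweak (toss the coin every step, act only in $Y$) still gives hazard $\eps_1$ on $Y$ and $0$ on $X$, so it does not help. In fact the claim is false as stated. Take a random state $a$ of colour $d$ that moves to $a$ or to $b$ with probability $\frac12$ each. Let $b$ be a Min state of colour $0$ with successors $a$ and $b$ (so $b\in Y$), all rewards $+1$, and Min's MD choice $b\to a$. Then $B_1=\{(a,\memconf_{\states}),(b,\memconf_{\states})\}$, $B_2=\{(b,\initmem)\}$, $\beta(a)=1-\eps_1/2$ and $\beta(b)=1$. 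Hence $\theta_{a,1}(a)=\frac{2-\eps_1}{4-3\eps_1}\neq\frac{1}{2-\eps_1}=\theta'_{a,1}(a)$ for every $\eps_1\in(0,1)$. The paper's induction misses this because its inductive step writes $\theta_{s,\ell+1}(s_2)$ as $\mathcal{P}_s(X_{\ell+1}=s_2\mid E_{\ell+1})$, conditioning on $E_{\ell+1}$ instead of $E_{\ell+2}$. With that conditioning the one-step computation goes through, but what it derives is not the hypothesis at $\ell+1$. Passing from $E_{\ell+1}$ to $E_{\ell+2}$ is precisely the reweighting by $\beta(X_{\ell+1})$ you flagged. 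So your diagnosis is sharper than the paper's argument, but neither proves the statement, nor \Cref{claim:link_conditional_expectation_to_expectation_in_conditioned_mc}, which is equivalent to it.

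Your fallback does not rescue the bound needed downstream either. The gain to be certified in \eqref{eqn:hyp_expectation_lower_bound_B1} is about $\mu'/\eps_1$, with $\mu'$ possibly much smaller than $R$. On the relevant horizon $\ell\approx 1/\eps_1$, the factors $(1-\eps_1)^{\pm(\ell+1)}$ are of constant size, not $1+o(1)$. Since rewards are signed, such distortions of path weights give additive errors of order $R\,\mathcal{P}(E_{\ell+1})$ at step $\ell$, i.e.\ of total order $R\,\mathcal{E}(T_2)$, which in general swamps $\mu'\,\mathcal{E}(T_2)$. What works is to drop the conditioned chain entirely. While in $B_1$, the game-state component evolves exactly as $\mc'$, and leaving $B_1$ is an independent $\eps_1$-coin tossed on each move into $Y$. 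In the filtration generated by the $\mc'$-path and the coins, $\mc'$ is still Markov and $T_2$ is a stopping time of finite mean. Take gain $g$ and bias $h$ of $\mc'$ from the multichain Poisson equations $g=\probp' g$ and $\bar r+\probp' h=h+g$, where $\bar r$ is the expected one-step reward. Then $g\ge\mu'$ pointwise and $h$ is bounded. Optional stopping for the martingale $Y_t+h(X_t)-\sum_{j<t}g(X_j)$ gives $\mathcal{E}(Y_{T_2})\ge\mu'\,\mathcal{E}(T_2)-2\|h\|_\infty\ge\mu'/\eps_1-2\|h\|_\infty$. This is a bound of the same form as \eqref{eqn:expectation_sum_in_B1}, obtained without any identity $\theta=\theta'$.
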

			\begin{claimproof}[Proof of~\cref{claim:link_conditional_distribution_to_distribution_in_conditioned_mc}]
				We prove by induction on $\ell$.

				\textbf{Base case :} When $\ell = 0$, the event $X^{\s}_{1,\?B} \, \in \, B_1$ has positive probability by properties of $B_1$ and therefore $\theta_{\s,\ell}$ is well-defined for every start state $\s \, \in \, B_1$ and is equal to $\delta_{\s}$. This proves the base case.

				\textbf{Induction step :} Assume, $\theta_{\s,\ell} = \theta^{\prime}_{\s,\ell}$ for some $\ell \ge 0$. To show that it holds for $\ell+1$, $\theta_{\s,\ell+1}\lrc{\s_2}$
				\begin{align*}
					   &= \Prob[\?B][\s]{X^{\s}_{\ell+1,\?B} = \s_2 \given E_{\ell+1}} &
					\\ &= \sum_{\s_1 \in B_1} \Prob[\?B][\s]{X^{\s}_{\ell+1,\?B} = \s_2 \given E_{\ell+1} , X^{\s}_{\ell,\?B} = \s_1} \cdot \theta_{\s,\ell}\lrc{\s_1} &
					\\ &= \sum_{\s_1 \in B_1} \Prob[\?B][\s_1]{X^{\s_1}_{1,\?B} = \s_2 \given X^{\s_1}_{1,\?B} \, \in \, B_1} \cdot \theta^{\prime}_{\s,\ell}\lrc{\s_1} & \text{Markov, IH}
					\\ &= \sum_{\s_1 \in B_1} \frac{\probp_{\?B}\lrc{\tuple{\s_1,\s_2}}}{\sum_{\s_3\, \in \, B_1}\probp_{\?B}\lrc{\tuple{\s_1,\s_3}}} \cdot \theta^{\prime}_{\s,\ell}\lrc{\s_1} &
					\\ &= \sum_{\s_1 \in B_1} \probp_{B_1}\lrc{\tuple{\s_1,\s_2}} \cdot \theta^{\prime}_{\s,\ell}\lrc{\s_1} & \cref{defn: conditioned_mc}
					\\ &= \theta^{\prime}_{\s,\ell+1}\lrc{\s_2}
				\end{align*}
			\end{claimproof}

			Now, $\Prob[\?B][\s]{X^{\s}_{\ell,\?B} = \s_1, X^{\s}_{\ell+1,\?B} = \s_2
				\given E_{\ell+1} }$
			\begin{align*}
				&= \Prob[\?B][\s]{X^{\s}_{\ell+1,\?B} = \s_2 \given E_{\ell+1} , X^{\s}_{\ell,\?B} = \s_1} \cdot \theta_{\s,\ell}\lrc{\s_1} & 
			\\  &= \probp_{B_1}\lrc{\tuple{\s_1,\s_2}} \cdot \theta^{\prime}_{\s,\ell}\lrc{\s_1} & \cref{claim:link_conditional_distribution_to_distribution_in_conditioned_mc}
			\\  &= \Prob[\mc_{B_1}][\s]{X^{\s}_{\ell,\mc_{B_1}} = \s_1, X^{\s}_{\ell+1,\mc_{B_1}} = \s_2} &
			\end{align*}
			Therefore, one has $\expectation[\?B][\s]{r^{\?B}_{\ell} \given E_{\ell+1}} =
			\expectation[\mc_{B_1}][\s]{r^{\mc_{B_1}}_{\ell}}$ from which the claim
			follows.
		\end{claimproof}

		From~\cref{claim:link_conditional_expectation_to_expectation_in_conditioned_mc},
		we further get $\expectation[\?B][\s]{Y^{\s}_{T^{\s}_{2,\?B},\?B}}$
		\begin{align*}
			&\ge \lrc{\sum_{\ell=0}^{\infty} \expectation[\mc_{B_1}][\s]{r^{\mc_{B_1}}_{\ell} } \cdot \Prob[\?B][\s]{T^{\s}_{2,\?B} > \ell + 1}} - R &
		\\  &\ge \lrc{\sum_{\ell=0}^{\infty} \expectation[\mc_{B_1}][\s]{r^{\mc_{B_1}}_{\ell} } \cdot \sum_{t = \ell + 2}^{\infty} \Prob[\?B][\s]{T^{\s}_{2,\?B} = t } } - R & 
		\\  &\ge \lrc{\sum_{t=2}^{\infty} \Prob[\?B][\s]{T^{\s}_{2,\?B} = t } \cdot \lrc{ \sum_{\ell=0}^{t-2} \expectation[\mc_{B_1}][\s]{r^{\mc_{B_1}}_{\ell} } } } - R & \text{Interchange sums}
		\\ &\ge \lrc{\sum_{t=2}^{\infty} \Prob[\?B][\s]{T^{\s}_{2,\?B} = t } \cdot \expectation[\mc_{B_1}][\s]{Y^{\s}_{t-1,\mc_{B_1}}} } - R & \text{from}~\eqref{eqn:Y_rv_defn} \text{ and linearity}
		\\ &\ge \lrc{\sum_{t=1}^{\infty} \Prob[\?B][\s]{T^{\s}_{2,\?B} = t } \cdot \expectation[\mc_{B_1}][\s]{Y^{\s}_{t-1,\mc_{B_1}}} } - R & Y^{\s}_{0,\mc_{B_1}} = 0
		\end{align*}

		This allows us to turn a lower bound on sums in $\mc_{B_1}$ into
		a bound for $\expectation[\?B][\s]{Y^{\s}_{T^{\s}_{2,\?B},\?B}}$. Remember that
		$\mc_{B_1}$ is a sub-Markov chain of the Markov chain
		$\mc^{\prime} = \game\lrb{\zstrat_1,\ostrat^{\prime}}$ induced by two
		memoryless strategies which itself can be seen as perturbation of
		$\mc\lrb{\optzstrat, \ostrat_{\mdp}}$.
		This means, to guarantee that the mean-payoff is $> 0$, the size of the
		probabilities in $\mc^{\prime}$ denoted by $p_1 \eqdef
		 \max\lrc{\size{\eps_0}, p_0} = \size{\eps_0}$ can be bounded by some
		 polynomial using~\cref{claim:size-of-eps-0} and let the smallest
		probability be $x_1 \eqdef \eps_0$.
		A lower bound on sums in $\mc^{\prime}$ is also a lower bound for sums
		in $\mc_{B_1}$.
	{
		Let $\mc^{\prime} = \tuple{\states, \transition^{\prime}, \probp^{\prime}}$. 
		Fix a start state $\s$ and denote by $\mu^{\prime}$, the minimum achievable mean payoff in any BSCC in
		$\mc^{\prime}$. By optimality of $\zstrat_1$, $\mu^{\prime} > 0$. We can split
		the sum into two parts: the sum within a BSCC and the sum in transient states. For the
		former, within any BSCC $B$ of $\mc^{\prime}$, one can get a lower bound by
		standard arguments using the martingale process obtained from the Poisson
		equation.

		\begin{definition}
			For a function $f \, \in \, \R^{\states}$ on an ergodic Markov chain $\mc = \tuple{\states, \transition, \probp}$, the Poisson equation is given by
			\begin{equation*}
				\vec{f} + \probp \vec{h} = \vec{h} + \vec{1} \bar{f}
			\end{equation*}
			where $\bar{f} = \lim_{n \tendsto \infty} \frac{\sum_{i=0}^{n-1}f\lrc{X_{i,\mc}}}{n}$ is the long
			term average of $f$.
		\end{definition}

		By standard results~\cite[Theorem 3.4]{BKK2014}, a solution for the above
		equation exists and $\vec{h}$ can be chosen such that $h\lrc{\s}\, \in \, [0,
				K]$ where $K = \frac{2 \card{\states} f_{\max}}{x_0^{\card{\states}}}$,
		$f_{\max} = \max_{\s} \abs{f\lrc{\s}}$, $x_0$ the minimum non-zero probability
		in $\probp$. Moreover, given such a $h$, the process $M_n =
		\sum_{i=0}^{n-1}f\lrc{X_{i,\mc}} + h\lrc{X_{n,\mc}} - n \bar{f}$ is a
		martingale.

		To use these results in our context, lets fix a BSCC $B$ and a start state
		$\s_B$. Within this BSCC, let the long term average mean payoff be $\mu
		\ge \mu^{\prime} > 0$. Denote by $h^{B}_{\max} \eqdef \frac{2
			\card{\states_B} R^{B}}{{p^{B}_0}^{\card{\states_B}}} $. Then one can find
		$h^{B}: \states_{B} \to \lrb{0, h^{B}_{\max}}$ such that $M^{B}_t =
		Y^{\s_B}_{t,B} + h^{B}\lrc{X^{\s_B}_{t,B}} - t \mu$ is a martingale.

		$$\implies \expectation[B][\s_B]{Y^{\s_B}_{t,B} + h^{B}\lrc{X^{\s_B}_{t,B}} - t \mu} = \expectation[B][\s_B]{M^{B}_t} = \expectation[B][\s_B]{M^{B}_0} = h^{B}\lrc{\s_B} $$

		Simplifying, we get
		\[
			\forall\, \s_B\text{,}\; \expectation[B][\s_B]{Y^{\s_B}_{t,B}} = t \mu + h^{B}\lrc{\s_B} - \expectation[B][\s_B]{h^{B}\lrc{X^{\s_B}_{t,B}}} \ge t \mu^{\prime} - h^{B}_{\max}	
		\]

		Denote by $H$ the set of all states which are part of some BSCC in
		$\mc^{\prime}$ and $T^{\s}_{H} \eqdef \min\setcomp{i \ge
			0}{X^{\s}_{i,\mc^{\prime}} \, \in \, H}$ denote the hitting time to one of
		these BSCC's. $\card{\states} = n$, let $h^{\mc^{\prime}}_{\max} \eqdef
		 \max_{B\text{ is a BSCC }} h^{B}_{\max} \le \frac{2 n R}{x_1^n}$, then it
		is clear that for any $\s \, \in \, H$
		\begin{equation} \label{eqn:lower_bound_sum_bscc}
			\expectation[\mc^{\prime}][\s]{Y^{\s}_{t,\mc^{\prime}}} \ge 
			t \mu^{\prime} - h^{\mc^\prime}_{\max}
		\end{equation}
		and define $C_{\mc^{\prime}} \eqdef \lrc{R + \mu^{\prime}} \frac{n}{\lrc{x_1}^{n}} +
		\frac{2nR}{\lrc{x_1}^n} = \frac{3nR+n\mu^{\prime}}{\lrc{x_1}^n} $.
		~\eqref{eqn:lower_bound_sum_bscc} provides a lower bound for all the
		states in $H$. To get a lower bound on the sum for the transient states,
		we first compute an upper bound on the expected time spent in these
		states.

		\begin{claim}\label{claim:upper_bound_expected_time_transient}
			For any state $\s$ in $\mc^{\prime}$,
			$$ \expectation[\mc^{\prime}][\s]{T^{\s}_H} \le \frac{n}{x^{n}_1} $$
		\end{claim}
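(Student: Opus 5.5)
The plan is the standard geometric-trials argument for hitting a recurrent set in a finite Markov chain. Let $\mc^{\prime}$ have $n$ states and minimum nonzero transition probability $x_1$. The key observation is that from every state $\s$ of $\mc^{\prime}$, some state of $H$ (a state lying in some BSCC) is reachable by a simple path of length at most $n-1$. This holds because every finite Markov chain reaches a BSCC from any state, and a shortest such path visits no state twice. Since each transition on this path has probability at least $x_1$, we get
\[
\Prob[\mc^{\prime}][\s]{T^{\s}_H \le n} \;\ge\; x_1^{n-1} \;\ge\; x_1^{n}
\]
for every start state $\s$. For $\s \in H$ the claim is trivial, since $T^{\s}_H = 0$.

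Next, I chop time into blocks of length $n$ and apply the Markov property at times $n, 2n, \dots$. The bound above holds uniformly over all start states, and the state reached at the end of each unsuccessful block is again some state of $\mc^{\prime}$. By induction on $b$,
\[
\Prob[\mc^{\prime}][\s]{T^{\s}_H > b n} \;\le\; (1-x_1^{n})^{b}.
\]
This is the same computation already carried out above for $T^{\qstate}_{1,\?B}$ in $B_2$.

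Finally, I sum the tail:
\[
\expectation[\mc^{\prime}][\s]{T^{\s}_H} = \sum_{t\ge 0}\Prob[\mc^{\prime}][\s]{T^{\s}_H > t} \le n\sum_{b\ge 0}\Prob[\mc^{\prime}][\s]{T^{\s}_H > bn} \le n\sum_{b \ge 0}(1-x_1^n)^b = \frac{n}{x_1^{n}}.
\]
The middle inequality groups the terms $t = bn,\dots,bn+n-1$ and bounds each by its first term.

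The only delicate point is the uniform reachability bound in the first step, specifically that the path length can be taken to be at most $n$. Everything else is routine.
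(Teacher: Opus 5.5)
Your proof is correct and follows essentially the same route as the paper. Both arguments use a uniform lower bound of $x_1^{n}$ on hitting $H$ within a block of $n$ steps, which gives the geometric tail bound $\Prob[\mc^{\prime}][\s]{T^{\s}_H > bn} \le (1-x_1^{n})^{b}$, and then sum the tail in blocks of length $n$. The only difference is that you justify the per-block bound explicitly via a shortest path of length at most $n-1$ to a BSCC, whereas the paper just points to the analogous earlier bound for $T^{\qstate}_{1,\?B}$.
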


		\begin{claimproof}
			Much like in the derivation for~\eqref{eqn:expectation_sum_in_B2},
			one can show that $\Prob[\mc^{\prime}][\s]{T^{\s}_H > k \cdot n}
			  \le \lrc{1-x^n_1}^k$. This then implies
			\begin{align*}
    			\expectation[\mc^{\prime}][\s]{T^{\s}_H} &= \sum_{k=0}^{\infty} \Prob[\mc^{\prime}][\s]{T^{\s}_H > k}  \le n \sum_{k=0}^{\infty} \Prob[\mc^{\prime}][\s]{T^{\s}_H > k \cdot n} \\
    			&= n \sum_{k=0}^{\infty} \lrc{1-x^n_1}^k = \frac{n}{x^n_1}
			\end{align*}
		\end{claimproof}

		\begin{claim}\label{claim:lower_bound_sum_general_mc}
			$$\expectation[\mc^{\prime}][\s]{Y^{\s}_{t,\mc^{\prime}}} \ge t\mu^{\prime} - C_{\mc^{\prime}}$$
		\end{claim}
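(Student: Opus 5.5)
The natural split is by the hitting time $T^{\s}_H$ of the set $H$ of BSCC states. Fix $\s$ and $t$, and write
\[
Y^{\s}_{t,\mc'} = Y^{\s}_{\min(t,T^{\s}_H),\mc'} + \bigl(Y^{\s}_{t,\mc'} - Y^{\s}_{\min(t,T^{\s}_H),\mc'}\bigr).
\]
The first term is the reward collected in transient states. Since every reward is at least $-R$, its expectation is at least $-R\,\expectation[\mc'][\s]{T^{\s}_H} \ge -Rn/x_1^n$ by \Cref{claim:upper_bound_expected_time_transient}.

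For the second term, condition on the stopping time $T^{\s}_H=k$ and on the entry state $X^{\s}_{k,\mc'}=\s_B\in H$. On the event $k<t$, the strong Markov property and \eqref{eqn:lower_bound_sum_bscc} give a conditional expectation of at least $(t-k)\mu' - h^{\mc'}_{\max}$. Rewrite this as $t\mu' - k\mu' - h^{\mc'}_{\max}$. On the event $k\ge t$ the term is $0$, which is at least $t\mu' - k\mu'$ because $\mu'>0$.

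Taking expectations of both cases gives the bound $t\mu' - \mu'\expectation[\mc'][\s]{T^{\s}_H} - h^{\mc'}_{\max}$. The expectation of $T^{\s}_H$ is finite by \Cref{claim:upper_bound_expected_time_transient}, so this step is legitimate, and the same claim bounds it by $\mu' n/x_1^n$.

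Adding the two contributions:
\[
\expectation[\mc'][\s]{Y^{\s}_{t,\mc'}} \ge t\mu' - (R+\mu')\frac{n}{x_1^n} - \frac{2nR}{x_1^n} = t\mu' - C_{\mc'},
\]
which is exactly the claimed constant. If $\s\in H$ to begin with, $T^{\s}_H=0$ and the bound follows directly from \eqref{eqn:lower_bound_sum_bscc}.

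The main obstacle is applying \eqref{eqn:lower_bound_sum_bscc} after a random time. To justify it, I would note that the bound holds for every deterministic horizon and every starting state in $H$. It can therefore be applied pathwise inside the conditional expectation given $\mathcal{F}_{T_H}$, using the strong Markov property and the fact that each BSCC is closed. I would also make sure the negative $-k\mu'$ term is integrable, which again follows from finiteness of $\expectation[\mc'][\s]{T^{\s}_H}$.
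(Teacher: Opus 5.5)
Your proof is correct and follows the same route as the paper. Both condition on the hitting time $T^{\s}_H$ of the BSCC states, bound the transient part by $-R\,T^{\s}_H$, bound the recurrent part via \eqref{eqn:lower_bound_sum_bscc} by $(t-T^{\s}_H)\mu' - h^{\mc'}_{\max}$, and conclude with \Cref{claim:upper_bound_expected_time_transient} and $h^{\mc'}_{\max}\le 2nR/x_1^n$. Your treatment of the event $T^{\s}_H \ge t$ (via $\min(t,T^{\s}_H)$) and your appeal to the strong Markov property are slightly more careful than the paper, which implicitly assumes the recurrent phase has nonnegative length $t-T^{\s}_H$.
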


		\begin{claimproof}[Proof of~\cref{claim:lower_bound_sum_general_mc}]
			 We decompose the expected reward $\expectation[\mc^{\prime}][\s]{Y^{\s}_{t,\mc^{\prime}}}$ by conditioning on $T^{\s}_H$. For a given $T^{\s}_H=k$, the reward is a sum of rewards from the transient phase (at least $-R \cdot k$) and the recurrent phase. For the recurrent phase, the expected reward in the recurrent phase of length $t-k$ is at least $\lrc{t-k}\mu^{\prime} - h^{\mc^{\prime}}_{\max}$~(\cref{eqn:lower_bound_sum_bscc}).
			\begin{align*}
			    \expectation[\mc^{\prime}][\s]{Y^{\s}_{t,\mc^{\prime}}} &= \expectation[\mc^{\prime}][\s]{\expectation[\mc^{\prime}][\s]{Y^{\s}_{t,\mc^{\prime}} \given T^{\s}_H}} \\
			    &\ge \expectation[\mc^{\prime}][\s] {-R \cdot T^{\s}_H + (t-T^{\s}_H)\mu^{\prime} - h^{\mc^{\prime}}_{\max} } \\
			    &= t\mu^{\prime} - \lrc{R + \mu^{\prime}} \expectation[\mc^{\prime}][\s]{T^{\s}_H}  - h^{\mc^{\prime}}_{\max} \\
				&\ge t\mu^{\prime} - \lrc{R + \mu^{\prime}} \frac{n}{x^{n}_1} - h^{\mc^{\prime}}_{\max}~\lrc{\cref{claim:upper_bound_expected_time_transient}} \\
				&\ge t\mu^{\prime} - C_{\mc^{\prime}}
			\end{align*}
		\end{claimproof}

		Using~\cref{claim:lower_bound_sum_general_mc},
		$\expectation[\?B][\s]{Y^{\s}_{T^{\s}_{2,\?B},\?B}}$
		\begin{align*}
			&\ge \lrc{\sum_{t=1}^{\infty} \Prob[\?B][\s]{T^{\s}_{2,\?B} = t } \cdot  \lrc{t-1}\mu^{\prime} - C_{\mc^{\prime}} } - R &
		\\	&\ge \mu^{\prime} \cdot \lrc{\sum_{t=1}^{\infty} \Prob[\?B][\s]{T^{\s}_{2,\?	B} = t } \cdot  t} - C_{\mc^{\prime}} -\mu^{\prime}  - R & \s\,\in\,B_1
		\\  &= \mu^{\prime} \cdot \expectation[\?B][\s]{T^{\s}_{2,\?B}} - C_{\mc^{\prime}} -\mu^{\prime}  - R &
		\end{align*}

		It is easy to see that $\expectation[\?B][\s]{T^{\s}_{2,\?B}} \ge
		\frac{1}{\eps_1}$. Substituting this, we get
		\begin{equation} \label{eqn:expectation_sum_in_B1}
			\expectation[\?B][\s]{Y^{\s}_{T^{\s}_{2,\?B},\?B}} \ge \frac{\mu^{\prime}}{\eps_1} - C_{\mc^{\prime}} - \mu^{\prime} - R
		\end{equation}

		Comparing~\eqref{eqn:expectation_sum_in_B1}
		with~\eqref{eqn:hyp_expectation_lower_bound_B1},
		and~\eqref{eqn:expectation_sum_in_B2}
		with~\eqref{eqn:hyp_expectation_lower_bound_B2}, and looking
		at~\eqref{eqn:satifying_condition}, we require $\eps_1$ to be such that

		$$\frac{\mu^{\prime}}{\eps_1} - C_{\mc^{\prime}} - \mu^{\prime} - R -
			{x_{Y}}^{-\card{B_2}} \cdot \card{B_2} \cdot R > 0$$ 
		One can consider $\eps_1$ such that
		\begin{equation} \label{eqn:eps1-value}
		 \eps_1 = \frac{\mu^{\prime}}{\ceil{C_{\mc^{\prime}} + \mu^{\prime} + R +
			{x_{Y}}^{-\card{B_2}} \cdot \card{B_2} \cdot R} + 1}
		\end{equation}
		$\mu^{\prime}$ and $C_{\mc^{\prime}}$ constants which arise out of
		$\mc^{\prime}$ whose size is polynomial in $\game$. This implies
		that the sizes of both $\mu^{\prime}$ and $C_{\mc^{\prime}}$ is
		bounded by some polynomial with large enough degree. Also, $\card{B_2}
		 \le \card{Y} \x \size{\optzstrat_Y}$. Combining all the above facts, it is
		easy to see that $\size{\eps_1} = \Ocompl\lrc{\size{\game}^c 
		 + \card{Y} \cdot \size{\optzstrat_Y} \cdot p_{Y}}$ for some large enough
		degree $c$. From~\eqref{eqn:encoding_length_to_bit_size}, this shows
		that $\size{\eps_1} = \Ocompl\lrc{\size{\game}^{c_0} + 
		\bits(\optzstrat_Y)}$ which is what we sought to prove
		in~\eqref{eqn:size_eps1}.
		\ignore{
		\begin{align*}
			&> \lrc{3nR+n\mu^{\prime}} \cdot \alpha_1^n + \mu^{\prime} + R + {\alpha_{k-1}}^{n \cdot \lrc{k-1}} \cdot \lrc{n \cdot \lrc{k-1}} \cdot R \\
		\end{align*}
		}
	}	
	\ignore{

		So the $\alpha_k's$ satisfy the approximate recursive equation

		\begin{align*}
			\alpha_0 &\sim \Thetacompl\lrc{2^{p_0}} & 
		\\  \alpha_1 &\sim \Thetacompl\lrc{2^{\lrc{f_2\lrc{n,p_0}}}} &
		\\	\alpha_{k+1} &\sim \frac{C \cdot n \cdot k \cdot R \alpha_k^{n \cdot k} + D}{\mu^{\prime}} & 1 \le k
		\end{align*}
		where the constant $D$ is independent of $k$. Since we are interested 
		in the number of bits needed to represent $\alpha_{k}$, 
		let $\beta_k = \size{\alpha_{k}}$, then the sequence $\beta_m$
		satisfy the recurrence
		\begin{align*}
			\beta_0 &\approx p_0 & 
		\\  \beta_1 &\approx f_2\lrc{n,p_0} &
		\\	\beta_{k} &\approx n \cdot \lrc{k-1} \cdot \beta_{k-1} + o\lrc{n} + o\lrc{k} + \size{\mu^{\prime}} & 2 \le k
		\end{align*}
		Ignoring the lower order terms and bounding $\size{\mu^{\prime}}$ by $f_1\lrc{n,\beta_1}$ using~\cref{claim:size-of-eps-0}, $\beta_k$ can be computed as 
		a solution to the recurrence $\beta_k = n \cdot \lrc{k-1} \cdot \beta_{k-1} + C$ for 
		some large enough constant $C$.
		\begin{align*}
			\beta_k &= n \cdot \lrc{k-1} \cdot \beta_{k-1} + C
		\\	&= n \cdot \lrc{k-1} \cdot \lrc{n \cdot \lrc{k-2} \cdot \beta_{k-2} + C} + C
		\\  &= n^2 \cdot \lrc{k-1} \cdot \lrc{k-2} \cdot \beta_{k-2} + \lrc{n \cdot \lrc{k-1}+ 1} C
		\\  &= n^3 \cdot \lrc{k-1} \cdot \lrc{k-2} \cdot \lrc{k-3} \cdot \beta_{k-3} + \lrc{n^2 \cdot \lrc{k-1} \cdot \lrc{k-2} + n \cdot \lrc{k-1}+ 1} C
		\end{align*}
		Continuing to unroll, it is easy to see that the first term would be $n^{k-1} \lrc{k-1}! \beta_1$
		and the second term would be the summation $\lrc{k-2}! \cdot \sum_{j=0}^{k-2}\frac{n^j}{\lrc{k-j-1}!} \cdot f_1\lrc{n,\beta_1}$.
		In the second term, the dominating term would be when $j$ is $k-2$, in which case it becomes $\lrc{k-2}! \cdot n^{k-2} \cdot f_1\lrc{n,\beta_1}$.
		Since $f_1\lrc{n,\beta_1}$ is a polynomial of degree at least 1 in $n$ and $\beta_1$, this term dominates 
		or at least is comparable to $n^{k-1} \lrc{k-1}! \beta_1$.
		Therefore, $\beta_k \approx \lrc{k-2}! \cdot n^{k-2} \cdot f_1\lrc{n,\beta_1}$ which is $\Thetacompl\lrc{n^{k+c} \cdot \lrc{k-2}!}$ for some constant $c$.
		We can further overestimate it as $\Ocompl\lrc{n^{3k}}$.
		Thus it can be seen as $p_{\states}$ grows as $n^{k+c} \cdot \lrc{k-2}!$ which is exponential in the number of even colors.
	}

	\end{claimproof}



\end{document}